\newcommand{\todai}{Department of Physics, Graduate School of Science, The University of Tokyo, Hongo 7-3-1, Bunkyo-ku, Tokyo 113-0033, Japan}
\newcommand{\transscale}{Trans-Scale Quantum Science Institute, The University of Tokyo, Hongo 7-3-1, Bunkyo-ku, Tokyo 113-0033, Japan}
\newtheorem{lem}{Lemma}
\newtheorem{Theorem}{Theorem}
\newtheorem{Corollary}[Theorem]{Corollary}
\definecolor{darkgreen}{RGB}{0,200,107}
\definecolor{darkorange}{RGB}{255,100,0}
\newcommand{\RR}{\mathbb{R}}
\newcommand{\CC}{\mathbb{C}}
\newcommand{\SU}{\mathrm{SU}}
\newcommand{\su}{\mathfrak{su}}
\newcommand{\SWAP}{\mathrm{SWAP}}
\newcommand{\dbra}[1]{\langle\!\langle {#1} \vert}
\newcommand{\dket}[1]{\vert {#1} \rangle\!\rangle}
\NewDocumentCommand\dbraket{m m}{
    \IfNoValueTF{#1}{\langle\!\langle {#2} \vert {#2} \rangle\!\rangle}{\langle\!\langle {#1} \vert {#2} \rangle\!\rangle}
}
\newcommand{\dketbra}[1]{\vert {#1} \rangle\!\rangle\!\langle\!\langle {#1} \vert}
\newcommand{\mcA}{\mathcal{A}}
\newcommand{\mcB}{\mathcal{B}}
\newcommand{\mcE}{\mathcal{E}}
\newcommand{\mcH}{\mathcal{H}}
\newcommand{\mcI}{\mathcal{I}}
\newcommand{\mcL}{\mathcal{L}}
\begin{document}

\title{Analytical Lower Bound on Query Complexity for Transformations of Unknown Unitary Operations}
\begin{abstract}
    Recent developments have revealed deterministic and exact protocols for performing complex conjugation, inversion, and transposition of a general $d$-dimensional unknown unitary operation using a finite number of queries to a black-box unitary operation. In this work, we establish analytical lower bounds for the query complexity of unitary inversion, transposition, and complex conjugation, which hold even if the input unitary is an unknown logarithmic-depth unitary.
    Specifically, our lower bound of $d^2$ for unitary inversion demonstrates the asymptotic optimality of the deterministic exact inversion protocol, which operates with $O(d^2)$ queries. We introduce a novel framework utilizing differentiation to derive these lower bounds on query complexity for general differentiable functions $f: \SU(d)\to \SU(d)$. As a corollary, we prove that a catalytic protocol -- a new concept recently noted in the study of exact unitary inversion -- is impossible for unitary complex conjugation. Furthermore, we extend our framework to the partially known setting, where the input unitary operation is promised to be within a subgroup of $\SU(d)$ and the probabilistic setting, where transformations succeed probabilistically.
\end{abstract}

\author{Tatsuki Odake}
\affiliation{\todai}
\author{Satoshi Yoshida}
\email{satoshiyoshida.phys@gmail.com}
\affiliation{\todai}
\author{Mio Murao}
\email{murao@phys.s.u-tokyo.ac.jp}
\affiliation{\todai}
\affiliation{\transscale}

\date{\today}

\maketitle

{\it Introduction.}---
No-go theorems have played a vital role in the history of quantum information theory.
The no-cloning theorem prohibits cloning of an \textit{unknown} quantum state, and this property of quantum mechanics led to the invention of cryptographic primitives such as quantum key distribution \cite{wootters1982single, bennett2014quantum}.
Researchers have considered information processing tasks for unknown quantum states, such as broadcasting quantum information, and shown no-go theorems for these tasks.
These no-go theorems play a complementary role to the go results, which are probabilistic or approximate protocols to implement transformations of unknown quantum states \cite{werner1998optimal, agrawal2002probabilistic, buzek2000optimal}.
They provide an understanding of the nature of quantum states as an information carrier, which leads to ideas for implementing quantum protocols and establishes the foundation of quantum mechanics.

Recently, transformations of unknown \textit{unitary operations}, which implement an unitary operation $f(U)$ using multiple queries to an unknown unitary operation $U$ for a given function $f:\SU(d)\to\SU(d)$, have been extensively studied.
Such a transformation is relevant for remote quantum computing~\cite{yang2019optimal}, quantum control~\cite{navascues2018resetting}, learning of quantum dynamics~\cite{garttner2017measuring, li2017measuring} and quantum functional programming~\cite{bisio2019theoretical}.
One of the tasks in remote quantum computing is to perform $f(U)$ for a unitary operation $U$ held by Charlie on a quantum state $\ket{\psi}$ held by David~\cite{yang2019optimal}.
Alice has a blind access to a quantum state $\ket{\psi}$, Bob has a blind access to a unitary operation $U$, and Alice and Bob communicate quantum states to implement $f(U)\ket{\psi}$.
In the quantum control, we can cancel out an unwanted Hamiltonian dynamics given by unitary operation $U$ by implementing unitary inversion, which implements $f(U)=U^{-1}$ with multiple queries to $U$~\cite{chiribella2016optimal, quintino2019probabilistic,quintino2019reversing, sardharwalla2016universal, quintino2022deterministic, navascues2018resetting, trillo2020translating, trillo2023universal, yoshida2023reversing, chen2024quantum, mo2025parameterized}.
Beyond quantum control, unitary inversion also has an application in the measurement of the out-of-time-order correlators~\cite{larkin1969quasiclassical, maldacena2016bound, garttner2017measuring, li2017measuring} and quantum singular value transformations~\cite{low2017optimal, low2019hamiltonian, gilyen2019quantum, martyn2021grand}.
In quantum functional programming, one would like to execute a unitary operation $f(U)$ for a quantum program representing $U$ without looking into the detail of the quantum program.
In these applications, the unknown nature of the input unitary is essential for ensuring both the \emph{blindness} of the input and the \emph{universality} of the algorithm.

Similarly to unknown quantum states, no-go theorems are known for several transformations of a unitary operation with a single query of the black-box unitary operation  \cite{chiribella2008optimal, chiribella2016optimal}.  
One way to circumvent this problem is to consider the algorithms using multiple queries of the black-box unitary operations.
However, deterministic and exact transformations of unknown unitary operations were considered to be impossible with finite queries since implementing such transformations was believed to require exact knowledge about at least a part of the unknown unitary operations, namely, the exact value of at least one of the parameters of the unitary operation. To obtain such an exact value via process tomography \cite{chuang1997prescription, baldwin2014quantum, haah2023query, yang2020optimal}, 
an infinite number of queries is necessary.
Thus, many previous works focus on the investigation of go and no-go results of probabilistic or approximate transformations \cite{dong2019controlled, quintino2019probabilistic,quintino2019reversing, chiribella2016optimal, chiribella2008optimal, bisio2010optimal, sedlak2019optimal, yang2019optimal, sedlak2020probabilistic, bisio2014optimal, dur2015deterministic, chiribella2015universal, soleimanifar2016nogo, ebler2023optimal, araujo2014quantum, bisio2016quantum, dong2021success, sardharwalla2016universal, quintino2022deterministic, navascues2018resetting, trillo2020translating, trillo2023universal}.

Contrary to intuition, recent works \cite{yoshida2023reversing, chen2024quantum, miyazaki2019complex} have proven that deterministic and exact transformations of an unknown unitary operation to its complex conjugation, inversion, and transposition can be achieved with a \textit{finite} number of queries of the black-box unitary operation.  In addition, the existence of \textit{catalytic transformations} was found for unitary inversion \cite{yoshida2023reversing}.  These discoveries suggest that these transformations of an unknown unitary operation can be achieved fully within a quantum regime with a finite number of multiple queries without extracting classical knowledge about the black-box unitary operation.  That is, the queries served as a resource solely for transformation, not for extracting classical knowledge. Further, such a resource can be catalytic for some transformations. The lower bound of the number of queries characterizes the resource required for each transformation. 

However, no-go theorems for deterministic and exact transformations for a $d$-dimensional unknown unitary operation are still missing in general, and thus, the analytic lower bounds were not established except for unitary complex conjugation for which the tight lower bound $d-1$ is proven in \cite{quintino2019reversing}, and nonlinear transformations such as unitary controlization, which requires an infinite number of queries.
No-go theorems for parallel implementations are obtained for unitary transposition and inversion in Ref.~\cite{quintino2019probabilistic}, but showing no-go theorems for sequential implementations remains a hard task.
Numerical lower bounds for unitary inversion and transposition are obtained to be 4 for $d=2$ \cite{yoshida2023reversing,grinko2024linear}, but it is difficult to extend to general $d$ due to the complexity of the problem.  Regarding catalytic transformations, no condition for catalytic transformations for unknown unitary operations was known.

In this Letter, we provide a general framework for deriving no-go theorems for deterministic and exact transformations of a $d$-dimensional unknown unitary operation $U$ given as a differentiable function $f(U)$ mapping to another $d$-dimensional unitary operation.  From the no-go theorems, we obtain a lower bound of the required number of queries of the unitary operation to implement a function $f$ deterministically and exactly in terms of semidefinite programming (SDP).
Our lower bound is tight for unitary complex conjugation [$f(U)=U^*$] and asymptotically tight for unitary inversion [$f(U)=U^{-1}$].
Our lower bound also holds when the input unitary operation is an unknown logarithmic-depth unitary.
We have also shown the relationship between the tightness of the SDP and the non-existence of catalytic transformations.
Finally, we present extensions of our framework to relaxed requirements where the input unitary operation is partially known or the transformation is implemented exactly but probabilistically.

\textit{General lower bound for the query complexity of functions of unitary operations.}---
Within a quantum circuit model of quantum computation, a transformation $f(U)$ of a $d$-dimensional unknown unitary operation $U \in \SU(d)$ to another unitary $f(U) \in \SU(d)$ is deterministically and exactly possible with $N$ queries of the black-box unitary operation $U$ if such a transformation can be implemented by a fixed-order quantum circuit (also known as a quantum comb \cite{chiribella2008quantum}) including $N$ queries to $U$ in the middle of the quantum circuit as shown in Fig.~\ref{fig::lower_bound_main}.
We call the minimum number of the queries as the \textit{query complexity} of $f$.  If the deterministic and exact implementation of $f$ is impossible with finite queries, the query complexity of $f$ is defined as $\infty$.
When the query complexity of a function $f$ is shown to be larger than or equal to a number $N$, then a no-go theorem forbidding deterministic and exact implementation of $f$ with a query less than $N$ is derived.

\begin{figure}[tb]
    \centering
    \includegraphics[width=\linewidth]{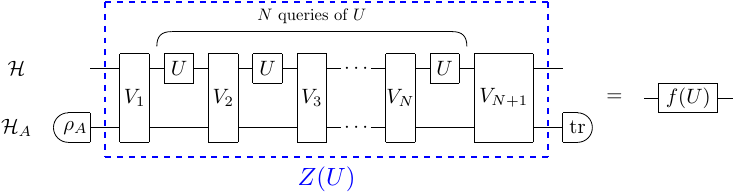}
    \caption{The quantum circuit implementing deterministic and exact transformation $f(U)$ for a black-box unitary operation $U$ with $N$ queries to $U$, where $\rho_A$ is a fixed state of the auxiliary system, and $V_1,\ldots ,V_{N+1}$ are unitary operations. $Z(U)$ is the unitary operation corresponding to the circuit without $\rho_A$ and tracing out.
    }
    \label{fig::lower_bound_main}
\end{figure}

Note that in the context of the property testing, it is often considered the situation where the queries of $U^{-1}$ and/or the controlled unitary operation $\mathrm{ctrl}-U$ can also be applicable in addition to the queries of $U$ \cite{montanaro2013survey, harrow2017sequential}. However, the exact transformation of $U$ to $U^{-1}$ requires at least $d^2$ queries (as we will show in this letter), and that of $U$ to $\mathrm{ctrl}-U$ is impossible with finite queries (query complexity $\infty$) \cite{gavorova2024topological} for an unknown $U$.  We choose a setting that only the black-box unitary operation $U$ can be used in the protocol to evaluate the query complexity of $f(U)$.

\begin{table*}[tb]
    \centering
    \begin{tabular}{c|c|c|c|c}
        \hline\hline
         &\multicolumn{2}{c|}{Lower bound}&\multicolumn{2}{c}{Minimum known}\\
         \cline{2-5}
         &previous methods &our method &$d=2$ &$d\geq3$  \\
         \hline
         $f(U)=U^{-1}$ &{$4^*$} ($d=2$ \cite{yoshida2023reversing}), {$6^*$} ($3\leq d\leq 7$ \cite{yoshida2023reversing}), 
         $d-1$ ($d\geq 8$ \cite{quintino2019probabilistic}) & $\bm{d^2}$  & $4^*$ \cite{yoshida2023reversing}& $\sim (\pi /2)d^2$ \cite{chen2024quantum}  \\
         \hline
         $f(U)=U^{T}$&
         {$4^*$} ($d=2$), $5^*$ ($d\geq 3$) \cite{grinko2024linear}
         &$\bm{4}\ (d=2),\bm{d+3}\ (d\geq 3)$&$4$* \cite{grinko2024linear}& $\sim (\pi/2)d^2$ \cite{chen2024quantum}  \\
         \hline
         $f(U)=U^{*}$&$d-1$ \cite{quintino2019probabilistic}&$d-1$&\multicolumn{2}{c}{$d-1$ \cite{miyazaki2019complex}}    \\
    \hline \hline
    \end{tabular}
    \caption{Comparison of the lower bound of the query complexity of the deterministic and exact implementation of $f(U)$ for a $d$-dimensional unitary $U$ obtained by Cor.~\ref{cor::inversion-transposistion-conjugation} and the minimum number of queries achievable by the algorithms given in \cite{yoshida2023reversing,chen2024quantum, miyazaki2019complex, grinko2024linear}.
    The lower bounds obtained numerically are shown with an asterisk ${}^*$.}
    \label{tab::thm_vs_known}
\end{table*}

We first derive a lower bound of a query complexity of a \emph{differentiable} function $f$.
We call a function $f:\SU(d)\to\SU(d)$ is differentiable around $U_0\in\SU(d)$ if there exists a linear map $g_{U_0}:\su(d)\to \su(d)$ such that
\begin{align}
\label{eq:differentiation}
    f(e^{i\epsilon H}U_0) = f(U_0) e^{i\epsilon g_{U_0}(H) + O(\epsilon^2)} \quad \forall H\in\su(d),
\end{align}
where $\su(d)$ is given by $\su(d) = \{H\in\mcL(\CC^d)| \tr(H) = 0, H=H^\dagger\}$.
We define the Choi operator \cite{choi1975completely, jamiolkowski1972linear} of $g_{U_0}$ by
\begin{align}
\label{eq::J_g}
    J_{g_{U_0}} \coloneqq \sum_{j=1}^{d^2-1} G_j^* \otimes g_{U_0}(G_j),
\end{align}
where $\{G_j\}_j$ is an orthonormal basis (in terms of the Hilbert-Schmidt inner product) of $\su(d)$, and $*$ is the complex conjugation in the computational basis.
Then, we show the following theorem on the query complexity of $f$.
\begin{Theorem}\label{th::main_sdp}
    Given any differentiable function $f: \SU(d) \to \SU(d)$, the query complexity of $f$ is at least the solution of the following SDP:
\begin{align}
\begin{split}
    &\min \tr \beta_{U_0}\\
    \text{s.t. }&\beta_{U_0} \in \mcL(\CC^d),\\
    &J_{g_{U_0}} + \beta_{U_0} \otimes I \geq 0,
\end{split}
\label{eq:sdp}
\end{align}
where $J_{g_{U_0}}$ is defined in Eq.~\eqref{eq::J_g}.
\end{Theorem}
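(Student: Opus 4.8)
The plan is to show that any fixed-order circuit implementing $f$ exactly with $N$ queries furnishes a feasible point $\beta_{U_0}$ of the SDP~\eqref{eq:sdp} whose objective value equals exactly $N$; since the SDP is a minimisation this gives $N \ge$ its optimum, and minimising over all implementations yields the stated lower bound on the query complexity. First I would fix such a circuit, $Z(U)=V_{N+1}(U\otimes I)V_N\cdots(U\otimes I)V_1$, with the $N$ queries acting on a fixed $d$-dimensional register and the $V_i$ on the full space. Purifying $\rho_A$ to a pure state $\ket\alpha$, exactness means the induced channel equals $f(U)(\cdot)f(U)^\dagger$ for every input. Because a pure output cannot be correlated with the discarded auxiliary, this forces the product form $Z(U)(\ket\alpha\otimes\ket\psi)=\ket{\gamma_U}\otimes f(U)\ket\psi$ with $\ket{\gamma_U}$ independent of $\ket\psi$; equivalently $(\bra{\gamma_U}\otimes I)Z(U)(\ket\alpha\otimes I)=f(U)$.

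Next I would differentiate this operator identity along $U_\epsilon=e^{i\epsilon H}U_0$. The product rule distributes the derivative over the $N$ slots, $\dot Z=i\sum_{k=1}^N P_k(H\otimes I)\tilde Q_k$, where $P_k$ and $\tilde Q_k$ collect the factors above and below the $k$-th query and obey the key relation $Z(U_0)=P_k\tilde Q_k$ for every $k$. On the right-hand side, Eq.~\eqref{eq:differentiation} gives $\partial_\epsilon f(U_\epsilon)|_{0}=i\,f(U_0)g_{U_0}(H)$. Projecting onto $\bra{\gamma_{U_0}}$ in the auxiliary, I expect to obtain $g_{U_0}(H)=\Psi(H)-\tfrac1d\tr[\Psi(H)]\,I$, where $\Psi(H)\coloneqq f(U_0)^\dagger(\bra{\gamma_{U_0}}\otimes I)\sum_k P_k(H\otimes I)\tilde Q_k(\ket\alpha\otimes I)$ is $\CC$-linear in $H$ and the subtraction merely enforces $g_{U_0}(H)\in\su(d)$.

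The heart of the argument is to pass to Choi operators via the $\CC$-linear extension $\Psi$ and read off feasibility. Writing $J_\Psi\coloneqq\sum_{a,b}\ket a\bra b\otimes\Psi(\ket a\bra b)$, the relations $\tilde Q_k=P_k^\dagger Z(U_0)$ together with the implementation identity make the per-slot ``lower'' operators equal to the adjoints of the ``upper'' ones, so that $J_\Psi=\sum_{k}\mcA_k\mcA_k^\dagger\ge0$ with $\mcA_k=\sum_a\ket a\otimes f(U_0)^\dagger(\bra{\gamma_{U_0}}\otimes I)P_k(\ket a\otimes I)$. Relating $J_\Psi$ to the theorem's $J_{g_{U_0}}=\sum_j G_j^*\otimes g_{U_0}(G_j)$ (a sum over the traceless basis) produces exactly a correction of the form $\beta_{U_0}\otimes I$: the trace term yields $\beta^{(1)}\otimes I$ with $\beta^{(1)}$ a real combination of the traceless $G_j^*$, while the missing $I/\sqrt d$ component contributes $\tfrac Nd I\otimes I$ once one evaluates $\Psi(I)=N I$ using $\sum_k P_k\tilde Q_k=N\,Z(U_0)$ and the implementation identity. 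Hence $J_{g_{U_0}}+\beta_{U_0}\otimes I=J_\Psi\ge0$ with $\beta_{U_0}=\tfrac Nd I+\beta^{(1)}$ and $\tr\beta_{U_0}=N$, since $\tr\beta^{(1)}=0$; this is the desired feasible point.

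I expect the main obstacle to be establishing the per-slot positivity $J_\Psi=\sum_k\mcA_k\mcA_k^\dagger$, i.e. that exactness forces the input and output operators at each query to be mutually adjoint — this is precisely where deterministic exactness (rather than approximate or probabilistic implementation) is used, and where the single identity $Z(U_0)=P_k\tilde Q_k$ does the work. Secondary points are justifying the product form $\ket{\gamma_U}\otimes f(U)\ket\psi$, reducing mixed $\rho_A$ to the pure case by purification, and, when $g_{U_0}$ is only $\RR$-linear on $\su(d)$ (as for $f(U)=U^*$), checking that the $\CC$-linear extension $\Psi$ agreeing with $g_{U_0}$ on Hermitian matrices is well defined, which holds because the Hermitian matrices form a $\CC$-basis of $\mcL(\CC^d)$. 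Minimising $N$ over all implementations then gives the query complexity $\ge$ the SDP optimum.
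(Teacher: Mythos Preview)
Your plan is correct and follows the same overall strategy as the paper: represent the comb, differentiate $Z(e^{i\epsilon H}U_0)$ at $\epsilon=0$ via the product rule, extract a completely positive map $\Psi$ (the paper's $\mcE_{U_0}$) satisfying $\Psi(I)=NI$ and $\Psi(H)=g_{U_0}(H)+\alpha(H)I$ on $\su(d)$, and pass to Choi operators to produce a feasible $\beta_{U_0}$ with $\tr\beta_{U_0}=N$.

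The one noteworthy difference is in how the crucial adjoint relation (your ``per-slot lower operators equal the adjoints of the upper ones'') is obtained. The paper first applies the gauge $(f(U_0)^{-1}\otimes W_{U_0})$ so that the modified $\tilde Z(U_0)$ fixes $\ket\psi\otimes\ket0$, and then proves $M^{(s,\mathrm{right})}_{0,j}{}^{\!\dagger}=M^{(s,\mathrm{left})}_{j,0}$ via a Cauchy--Schwarz equality argument (Lemma~S1). Your route is more direct: from $\tilde Q_k=P_k^\dagger Z(U_0)$ and the implementation identity $Z(U_0)(I\otimes\ket\alpha)=f(U_0)\otimes\ket{\gamma_{U_0}}$ one gets immediately $(I\otimes\bra\alpha)\tilde Q_k^\dagger=f(U_0)^\dagger(I\otimes\bra{\gamma_{U_0}})P_k$, which is exactly the block-wise adjoint relation. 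This shortcut avoids both the gauge fix and the Cauchy--Schwarz step, and it makes transparent that deterministic exactness enters only through the single operator identity $Z(U_0)(I\otimes\ket\alpha)=f(U_0)\otimes\ket{\gamma_{U_0}}$. The price is a bit more bookkeeping with $f(U_0)$ and $\ket{\gamma_{U_0}}$ floating around; the paper's gauge makes those factors disappear, which is convenient when one later reuses the $M^{(s,\mathrm{left})}_{j,0}$ in the refined bounds (Cor.~\ref{cor::inversion-transposistion-conjugation}) and the probabilistic extension.
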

\begin{proof}[Proof sketch]
Any fixed-order circuit transforming an arbitrary unitary $U$ with $N$ queries can be represented by the quantum circuit using unitary operations $V_1, \ldots, V_{N+1}$ and a fixed state $\rho_A$ of the auxiliary system, as shown in Fig.~\ref{fig::lower_bound_main} \cite{chiribella2008quantum}.
We define the unitary operator $Z(U)$ as shown in Fig.~\ref{fig::lower_bound_main}.
We choose $U=e^{i\epsilon H} U_0$ where $H\in \su (d)$ is an Hermitian operator, $\epsilon$ is a real parameter, and $U_0$ is an arbitrarily unitary operator.
By considering the differentiation of the unitary operator $Z(e^{i\epsilon H} U_0)$ in terms of $\epsilon$ around $\epsilon=0$, we obtain $\mcE_{U_0} (H)$ represented by a linear map $\mcE_{U_0}$, which is a completely positive map defined by $V_1,\ldots ,V_{N+1}$ and satisfies $\mcE_{U_0}(I)=NI$.
On the other hand, the differentiation of $f(e^{i\epsilon H} U_0)$ is determined by $f$ as shown in Eq.~(\ref{eq:differentiation}).
The equality for all $H\in \su (d)$ identifies $\mcE_{U_0}$ up to some degree of freedom. When $N$ is too small, the resulting $\mcE_{U_0}$ cannot be taken completely positive. Thus, the corresponding circuit with $N$ queries to $U$ does not exist, providing a no-go theorem. This validity condition of $N$ is represented as the SDP constraints in Eq.~(\ref{eq:sdp}). See Appendix~\ref{app::theo1} of the Supplemental Material (SM)~\cite{supple} for the details of the proof.
\end{proof}

Since the proof of Thm.~\ref{th::main_sdp} relies on the differentiation, the lower bound of the query complexity shown in Thm.~\ref{th::main_sdp} holds even if we restrict the input unitary operation to be within the set $\{e^{iP_j \theta}\}_{\theta\in (-\epsilon, \epsilon), j}$ for a basis $\{P_j\}_j$ of $\su(d)$ and a sufficiently small $\epsilon>0$, but $j$ and $\theta$ are unknown (see Appendix~A of the SM~\cite{supple} for the details).
For $n$-qubit systems ($d = 2^n$), the set of Pauli operators $\{I, X, Y, Z\}^{\otimes n} \setminus \{I^{\otimes n}\}$ is a basis of $\su(d)$.
The Pauli rotation $e^{i P_j \theta}$ can be implemented by a single-qubit $X$-rotation, two multi-target-CNOT gates (or quantum fanout gates), and two layers of single-qubit Clifford gates.
Since the multi-target-CNOT gate can be implemented in depth $O(\log n)$ using the devide-and-conquer method~\cite{fang2003quantum}, the query complexity lower bound in Thm.~\ref{th::main_sdp} also holds for the case where the input unitary operation is restricted to be within the logarithmic-depth unitaries.
By applying Thm.~\ref{th::main_sdp} to unitary inversion, transposition, and complex conjugation, we obtain the exponential hardness of these transformations for the logarithmic-depth $n$-qubit unitary operations.

\begin{Corollary}
    \label{cor:logdepth}
    The query complexities of unitary inversion, transposition, and complex conjugation for the unknown logarithmic-depth $n$-qubit unitary are given by $\exp[\Theta(n)]$.
\end{Corollary}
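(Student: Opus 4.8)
The plan is to sandwich the query complexity of each of the three transformations, restricted to the logarithmic-depth input class, between a lower bound and an upper bound that are both of the form $\exp[\Theta(n)]$. Since $d=2^n$, any quantity polynomial in $d$ — in particular $d^2$, $d+3$, and $d-1$ — is already $\exp[\Theta(n)]$, so it suffices to exhibit a lower bound growing at least like $d$ and an upper bound that is $\mathrm{poly}(d)$, both valid when the input is promised to be a log-depth $n$-qubit unitary.

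For the lower bound I would invoke the values produced by Cor.~\ref{cor::inversion-transposistion-conjugation} — $d^2$ for inversion, $d+3$ for transposition (for $d\geq3$), and $d-1$ for complex conjugation, as listed in Table~\ref{tab::thm_vs_known} — together with the observation, established in the paragraph preceding this corollary, that these bounds follow from Thm.~\ref{th::main_sdp} through a differentiation argument that probes the input only on the one-parameter families $\{e^{iP_j\theta}\}$. Because each Pauli rotation $e^{iP_j\theta}$ admits a logarithmic-depth implementation (a single $X$-rotation dressed by $O(\log n)$-depth multi-target-CNOT and Clifford layers), the hypotheses of Thm.~\ref{th::main_sdp} are satisfied even under the log-depth promise, so the SDP value — hence each of $d^2$, $d+3$, $d-1$, all at least $2^n-1$ — remains a valid lower bound, giving $\exp[\Omega(n)]$.

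For the upper bound I would note that the known deterministic exact protocols — $\sim(\pi/2)d^2$ queries for inversion and transposition \cite{chen2024quantum} and $d-1$ queries for complex conjugation \cite{miyazaki2019complex} — succeed on an arbitrary $U\in\SU(d)$. Since the log-depth unitaries form a subset of $\SU(d)$, these protocols a fortiori realize each transformation on every log-depth input with the same query count, so the restricted query complexity is at most $O(d^2)=\exp[O(n)]$.

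Combining the two directions yields $\exp[\Omega(n)]\leq Q\leq\exp[O(n)]$, i.e. $\log Q=\Theta(n)$ and $Q=\exp[\Theta(n)]$, for all three functions. The step I expect to demand the most care is the lower-bound direction: one must check that the differentiation-based derivation of Thm.~\ref{th::main_sdp} genuinely requires only the log-depth one-parameter families as inputs, so that tightening the promise to the log-depth class cannot lower the SDP value. The upper bound, by contrast, is immediate from set inclusion and needs no new argument.
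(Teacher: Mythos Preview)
Your overall strategy matches the paper's: use Thm.~\ref{th::main_sdp} for the $\exp[\Omega(n)]$ lower bound (valid under the log-depth promise because the differentiation only probes the Pauli rotations $e^{iP_j\theta}$, which are log-depth), and use the known $\mathrm{poly}(d)$ protocols for the $\exp[O(n)]$ upper bound via set inclusion.

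There is, however, one slip in your lower-bound step. You invoke the values from Cor.~\ref{cor::inversion-transposistion-conjugation} ($d^2$ for inversion, $d+3$ for transposition) and assert that ``these bounds follow from Thm.~\ref{th::main_sdp} through a differentiation argument that probes the input only on the one-parameter families $\{e^{iP_j\theta}\}$.'' That is not accurate: the improvements from $d^2-1$ to $d^2$ and from $d+1$ to $d+3$ in Cor.~\ref{cor::inversion-transposistion-conjugation} come from an \emph{extra} argument in Appendix~\ref{app:prooftable} that Haar-averages the SDP constraint over \emph{all} $U_0\in\SU(d)$, which requires the comb condition to hold at every $U_0$, not only at log-depth points. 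The paper explicitly flags this right after Cor.~\ref{cor::inversion-transposistion-conjugation}: ``this corollary for the unitary inversion and transposition is not shown for the case when we restrict the input unitary operation to be within the logarithmic-depth unitaries.'' Accordingly, the paper's proof of Cor.~\ref{cor:logdepth} uses the raw SDP values $d^2-1$, $d+1$, and $d-1$ from Thm.~\ref{th::main_sdp}, not the boosted Cor.~\ref{cor::inversion-transposistion-conjugation} values. Since $d^2-1$, $d+1$, $d-1$ are still $\exp[\Theta(n)]$, your conclusion survives, but the specific numbers you cite and the justification you give for them should be replaced by the bare SDP outputs.
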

\begin{proof}[Proof sketch]
    The SDP in Thm.~\ref{th::main_sdp} yields $d^2-1$ for unitary inversion, $d+1$ for unitary transposition and $d-1$ for unitary complex conjugation, i.e., the query complexities are given by $\exp[\Omega(n)]$ for $n$-qubit systems.
    These lower bounds are tight since the query complexities of these transformations for general $n$-qubit unitary operations are given by $\exp[O(n)]$ as shown in Refs.~\cite{yoshida2023reversing, chen2024quantum, miyazaki2019complex}.
    See Appendix~\ref{appendix:logdepth} of the SM~\cite{supple} for the details of the proof.
\end{proof}

For the unitary inversion and transposition, we can further improve the lower bounds by considering an extra argument based on the fact that the conditions used for the derivation of the SDP should hold for \textit{all} $U_0\in\SU(d)$, leading to the following corollary.
See Appendix~\ref{app:prooftable} of the SM~\cite{supple} for the details of the proof.
Note that this corollary for the unitary inversion and transposition is not shown for the case when we restrict the input unitary operation to be within the logarithmic-depth unitaries.

\begin{Corollary}
\label{cor::inversion-transposistion-conjugation}
    The query complexity of unitary inversion $[f(U)=U^{-1}]$, unitary transposition $[f(U)=U^T]$, and unitary complex conjugation $[f(U)=U^*]$ for $U\in \SU(d)$ are lower bounded by
    \begin{align}
        \begin{cases}
            d^2 & [f(U)=U^{-1}],\\
            4 & [f(U)=U^T, d=2],\\
            d+3 & [f(U)=U^T, d\geq 3],\\
            d-1 & [f(U)=U^*].
        \end{cases}
    \end{align}
\end{Corollary}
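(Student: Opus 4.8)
The plan is to split the three cases according to how their differential maps $g_{U_0}$ depend on $U_0$, evaluate the single-point SDP of Theorem~\ref{th::main_sdp} in each case, and then sharpen the inversion and transposition bounds by exploiting that a single circuit must linearise correctly at \emph{every} $U_0$. First I would read off $g_{U_0}$ from Eq.~\eqref{eq:differentiation}: since the perturbation sits on the left, $f(U)=U^{-1}$ gives $f(e^{i\epsilon H}U_0)=U_0^{-1}e^{-i\epsilon H}$, hence $g_{U_0}(H)=-H$; $f(U)=U^T$ gives $f(e^{i\epsilon H}U_0)=U_0^T e^{i\epsilon H^T}$, hence $g_{U_0}(H)=H^T$; and $f(U)=U^*$ gives $g_{U_0}(H)=-U_0^T\overline H\,\overline{U_0}$. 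The first two maps are independent of $U_0$, while the conjugation map is not; however, the conjugation maps at different $U_0$ are related by a basis change, so the unitary covariance of the SDP makes its optimal value $U_0$-independent, and I would evaluate it at $U_0=I$, where $g_I(H)=-H^T$.

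Next, using $\sum_{j=1}^{d^2-1}G_j^*\otimes G_j=d\ketbra{\Phi^+}{\Phi^+}-\tfrac1d I\otimes I$ and $\sum_{j=1}^{d^2-1}G_j^*\otimes G_j^*=\SWAP-\tfrac1d I\otimes I$ (with $\ket{\Phi^+}=\tfrac1{\sqrt d}\sum_a\ket{aa}$ and $G_j^*=G_j^T$ by Hermiticity), the Choi operator $J_{g_{U_0}}$ becomes $\tfrac1d I\otimes I-d\ketbra{\Phi^+}{\Phi^+}$ for inversion, $\SWAP-\tfrac1d I\otimes I$ for transposition, and $\tfrac1d I\otimes I-\SWAP$ for conjugation. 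Feeding the scalar ansatz $\beta_{U_0}=cI$ into the PSD constraint saturates it at $c=d-\tfrac1d,\,1+\tfrac1d,\,1-\tfrac1d$ respectively, so $\tr\beta_{U_0}=d^2-1,\,d+1,\,d-1$, matching the SDP optima recorded in the proof of Cor.~\ref{cor:logdepth}. For conjugation this is already the claimed $d-1$, which is moreover tight, so nothing further is needed there.

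For inversion and transposition I would then gain the extra unit(s) from the all-$U_0$ requirement, which the single-point SDP ignores. The key point is that, because $g_{U_0}$ is $U_0$-independent here, setting $N$ equal to the SDP value forces $\beta_{U_0}$ to be an exact optimiser at \emph{every} $U_0$, i.e.\ it forces the Choi operator $J_{g}+\beta_{U_0}\otimes I$ of the linearised map $\mcE_{U_0}$ onto the boundary of the PSD cone, with the kernel dictated by $g$ — the line $\ket{\Phi^+}$ for inversion and the full antisymmetric subspace for transposition. I would reopen the proof of Theorem~\ref{th::main_sdp}, write $\mcE_{U_0}(\cdot)=\bra{\eta}\sum_k M_k(U_0)^\dagger(\cdot\otimes I_A)M_k(U_0)\ket{\eta}$ in terms of the fixed circuit data, and show that this rank-deficiency, demanded simultaneously for all $U_0\in\SU(d)$ by one comb, is infeasible. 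Since $N$ is an integer, strict infeasibility at the SDP value already yields $N\ge d^2$ (inversion) and $N\ge d+2$ (transposition); for $d\ge3$ I would push the same feasibility analysis one order further, using that the forbidden antisymmetric kernel has dimension $d(d-1)/2\ge3$, to also exclude $N=d+2$ and reach $N\ge d+3$. The case $d=2$ is special because that kernel is one-dimensional, so only $N=3$ is excluded and the bound saturates at $N\ge4$.

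The main obstacle is precisely this cross-$U_0$ linkage: the single-point SDP is blind to how the optimisers at different $U_0$ are tied together by one fixed $V_1,\dots,V_{N+1},\rho_A$, yet the entire improvement lives in that tie — which is also why it requires the full group $\SU(d)$ rather than the logarithmic-depth subfamily. Concretely, I expect the delicate step to be proving that the boundary (rank-deficient) Choi operator cannot arise from a $U_0$-analytic family $M_k(U_0)$ of the required product form at the minimal $N$: an overdetermined-system / dimension-count argument whose bookkeeping must be sharp enough to produce both the generic $+2$ for transposition and its collapse to $+1$ at $d=2$.
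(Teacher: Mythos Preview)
Your computation of $g_{U_0}$, the Choi operators, and the single-point SDP optima $d^2-1$, $d+1$, $d-1$ is correct, and for conjugation this already finishes the proof. The gap is in the mechanism you propose for the extra units in inversion and transposition.

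Your plan is to argue that at the SDP optimum the Choi operator $J_{\mcE_{U_0}}$ must be rank-deficient (kernel $\dket{I}$ for inversion, the antisymmetric subspace for transposition) and that this cannot happen for a $U_0$-analytic family coming from one comb, via ``an overdetermined-system / dimension-count argument.'' But rank-deficiency of $J_{\mcE_{U_0}}$ just says the Kraus operators $M^{(s,\text{left})}_{j,0}(U_0)$ are traceless (inversion) or span only the symmetric part (transposition); this is a linear constraint on each $U_0$ separately and there is no obvious dimension obstruction to it holding for all $U_0$. The paper does \emph{not} argue from rank-deficiency. It instead uses the stronger structural fact
\[
J_{\mcE_{U_0}}\ \geq\ \sum_{s=1}^{k}\sum_j \dketbra{M^{(s,\text{left})}_{j,0}(U_0)^\dagger}
\]
for $k=1$ (inversion) and $k=2$ (transposition), i.e.\ the Choi operator dominates the partial Kraus sum from the first $k$ slots. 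The crucial observation is that these partial Kraus operators have an explicit $U_0$-dependence: $M^{(1,\text{left})}_{j,0}(U_0)=U_0\,Q^{(1)}_{j,0}$ and a corresponding two-slot formula via $V_1,V_2$. Haar-averaging both sides over $U_0$ then produces a concrete operator inequality whose pairing with $\dketbra{I}$ (inversion) or $\Pi_{\rm antisym}$ (transposition) gives the contradiction at $N=d^2-1$ and $N=d+1$.

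Your heuristic for why transposition gains $+2$ for $d\geq 3$ versus $+1$ at $d=2$ --- the dimension of the antisymmetric kernel jumping from $1$ to $\geq 3$ --- is not the actual mechanism. The paper's second unit at $d\geq 3$ comes from a quantitative Cauchy--Schwarz bound on a specific trace expression built out of the $Q^{(1)},Q^{(2)}$ blocks, yielding $\tr\beta_{U_0}\geq d+3-\tfrac{d}{2(d-1)}$; this exceeds $d+2$ precisely when $d\geq 3$. You are missing both ingredients: the domination by partial Kraus sums with their explicit slot-by-slot $U_0$-dependence, and the Haar-average plus norm estimate that extracts the improvement.
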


\textit{Comparison with previous works}.---
Previously known analytical lower bounds were $d-1$ for unitary inversion and $2$ for unitary transposition \cite{quintino2019probabilistic}.
They were obtained by the polynomial degree analysis or the Fourier series analysis.
These lower bounds were strictly smaller than the minimum number of queries required to implement unitary inversion or transposition in the $d=2$ case, which is numerically shown to be $4$ for unitary inversion \cite{yoshida2023reversing}, and for unitary transposition \cite{grinko2024linear}.

Our lower bounds are tight at $d=2$ and scale at the same rate $O(d^2)$ as the number of queries obtained by the recently discovered algorithm \cite{chen2024quantum} for unitary inversion.  Therefore, the analytical optimality of both the qubit-unitary inversion algorithm in \cite{yoshida2023reversing}, and the asymptotic optimality of the algorithm in \cite{chen2024quantum} are obtained from our result.   The unitary inversion algorithm in \cite{chen2024quantum} can be modified to a unitary transposition algorithm by swapping $U$ and $U^*$ in the algorithm. By implementing $U^*$ by using $d-1$ calls of $U$ \cite{miyazaki2019complex}, the modified algorithm implements unitary transposition by using $O(d^2)$ queries of $U$.
In contrast, our bound for unitary transposition scales only linearly on $d$, which indicates the possibility of an asymptotically more efficient algorithm.

We summarize the previously known lower bounds and the ones shown in Cor.~\ref{cor::inversion-transposistion-conjugation}, and minimum queries achieved by proposed algorithms in Tab.~\ref{tab::thm_vs_known} for the deterministic and exact transformations, unitary inversion $f(U)=U^{-1}$, transposition $f(U)=U^T$, and complex conjugation $f(U)=U^*$. 
The lower bounds shown in Tab. \ref{tab::thm_vs_known} are not guaranteed to be tight in general. 
Nevertheless, they are tight in all three transformations for $d=2$ and in unitary complex conjugation for general $d$.
Our lower bound also gives a matching lower bound $\Omega(d^2)$ for unitary inversion achievable with $O(d^2)$ queries.
This observation implies that our method potentially provides sufficiently tight bounds for certain types of $f$.

\textit{Necessary condition for the existence of catalytic transformation.}---
In deterministic and exact unitary inversion for $d=2$ \cite{yoshida2023reversing}, a novel property of ``catalytic'' transformation is observed. In short, the algorithm proposed in \cite{yoshida2023reversing} implements unitary inversion $U^{-1}$ using three queries to the black-box unitary $U$ and one query to the ``catalytic state'' which is generated using one query to $U$. While implementation of a single output of $U^{-1}$ requires four queries in total, the same catalytic state is output in the auxiliary system  (which is why the adjective ``catalytic'' is used), thus additional production of $U^{-1}$ requires only three more queries to $U$. More generally, implementation of $n$ copies of $U^{-1}$ requires $3n+1$ queries.
Even though the catalytic property of unitary transformation algorithms enhances their applicability by reducing the asymptotic cost, we can show the non-existence of optimal and catalytic algorithms for some transformations by following the theorem.
\begin{Theorem}\label{th::catal_cond}
    When the SDP solution $N$ of Eq.~(\ref{eq:sdp}) is tight for a function $f$ satisfying $f(U_0)=I$, then the optimal and catalytic algorithm for implementing $f$ does not exist. 
\end{Theorem}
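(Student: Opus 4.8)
The plan is to argue by contradiction. Suppose an optimal and catalytic algorithm for $f$ exists. Such an algorithm consists of a catalyst state $\tau_U$ on an auxiliary register, prepared from a $U$-independent fiducial state using $k$ queries to $U$, together with a ``main'' comb $\mcW_U$ using $M$ queries that obeys the regeneration identity $\mcW_U(\rho\otimes\tau_U)=f(U)\rho f(U)^\dagger\otimes\tau_U$ for every target state $\rho$. A genuine catalyst must depend on $U$, so $k\ge1$. Since the catalyst preparation followed by the main comb is itself a valid single-copy implementation of $f$, Thm.~\ref{th::main_sdp} gives $M+k\ge N$, and optimality (single-copy query count equal to the complexity $N$) means $M+k=N$; together with $k\ge1$ this yields the crucial inequality $M\le N-1<N$.

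Because the main comb returns the catalyst unchanged, it may be reused: preparing the catalyst once and then running the main comb $n$ times produces $n$ output copies of $f(U)$ using only $Mn+k$ queries. Reading this as a single comb for the function $U\mapsto f(U)^{\otimes n}$, I would apply Thm.~\ref{th::main_sdp} to the $n$-fold problem. Its derivative at $U_0$ factorizes, $g^{(n)}_{U_0}(H)=\sum_{j=1}^{n} I^{\otimes(j-1)}\otimes g_{U_0}(H)\otimes I^{\otimes(n-j)}$, so the relevant Choi operator is a sum $\sum_{j}J^{(j)}$ of single-copy Choi operators, each acting on the input and the $j$-th output. The heart of the argument is to show that the resulting SDP value is additive, $\mathrm{SDP}(n)=Nn$. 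The bound $\mathrm{SDP}(n)\le Nn$ is immediate from the feasible point $n\beta^\ast$ built from the optimal single-copy solution $\beta^\ast$. For the matching lower bound I would pass to the dual, $\mathrm{SDP}(n)=\max\{\tr(-\textstyle\sum_j J^{(j)}\rho):\rho\ge0,\ \tr_{\mathrm{out}}\rho=I\}$, and construct a feasible $\rho$ as a symmetric ($n$-extendible) extension of the optimal single-copy dual $\rho^\ast$ whose pairwise input--output marginals each reproduce the single-copy optimum, so that the $n$ terms sum to $Nn$. Additivity then collides with the catalytic count: $Nn=\mathrm{SDP}(n)\le Mn+k=Nn-k(n-1)$ forces $k(n-1)\le0$, impossible for $k\ge1$ and $n\ge2$. (It in fact suffices to show that $\mathrm{SDP}(n)$ grows faster than $(N-k)n$, which relaxes the requirement to an approximate, de Finetti--type extension.)

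The hypothesis $f(U_0)=I$ enters at the base point: it makes $\mcW_{U_0}$ act as the identity on the target, $\mcW_{U_0}(\rho\otimes\tau_{U_0})=\rho\otimes\tau_{U_0}$, so that differentiating the regeneration identity cleanly isolates the recycling term, $\dot{\mcW}_H(\rho\otimes\tau_{U_0})=i[g_{U_0}(H),\rho]\otimes\tau_{U_0}+(\mathrm{id}-\mcW_{U_0})(\rho\otimes\dot\tau_H)$; this is the infinitesimal statement that a regenerated catalyst carries no net first-order information, which is what prevents the per-copy cost from dropping below $N$. I expect the main obstacle to be the dual lower bound $\mathrm{SDP}(n)\ge Nn$: the $n$-extendibility of $\rho^\ast$ is not automatic for an arbitrary optimal dual, and the argument must exploit the tightness of the single-copy SDP---through complementary slackness between $\beta^\ast$ and $\rho^\ast$---together with the base-point structure supplied by $f(U_0)=I$ to produce an extendible optimal dual. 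Ruling out any recycling of the catalyst's derivative across copies is the one genuinely nontrivial step; the rest is bookkeeping.
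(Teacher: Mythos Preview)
Your overall strategy---iterate the catalytic protocol and extract a contradiction from the SDP lower bound for the iterated function---is the paper's. But the key choice differs, and the difference matters. You iterate by \emph{tensor power}: you consider $U\mapsto f(U)^{\otimes n}$, with output dimension $d^n$, and then must show that the SDP value for this function equals $nN$. As you yourself flag, this forces you to build an $n$-extendible optimal dual $\Gamma^{(n)}$ whose $(\mathrm{in},\mathrm{out}_j)$-marginals are each optimal single-copy duals; since a bipartite state is $n$-extendible for all $n$ only if it is separable, there is no reason to expect this, and neither complementary slackness nor the hypothesis $f(U_0)=I$ supplies it. You do not close this gap. (There is also the technical point that Thm.~\ref{th::main_sdp} is stated for $f:\SU(d)\to\SU(d)$, so applying it to $f^{\otimes n}:\SU(d)\to\SU(d^n)$ already requires an extension of the theorem.)

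The paper iterates by \emph{composition}: it considers $U\mapsto f(U)^n$, still a map $\SU(d)\to\SU(d)$. Under the hypothesis $f(U_0)=I$, the product rule gives
\[
g'_n(H)=\sum_{k=0}^{n-1} f(U_0)^{-k}\,g'_1(H)\,f(U_0)^{k}=n\,g'_1(H),
\]
so $J_{g'_n}=nJ_{g'_1}$ and the SDP~\eqref{eq:sdp} for $f^n$ is literally the single-copy SDP rescaled by $n$; its value is exactly $nN$ with no extendibility argument whatsoever. The catalytic protocol implements $f(U)^n$ by applying the main comb $n$ times to the \emph{same} target register, using $N+(n-1)M<nN$ queries, and the contradiction is immediate. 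Note that this is precisely where $f(U_0)=I$ earns its keep---it collapses the conjugates $f(U_0)^{-k}(\cdot)f(U_0)^{k}$---whereas in your tensor-power computation $g^{(n)}_{U_0}(H)=\sum_j I^{\otimes(j-1)}\otimes g_{U_0}(H)\otimes I^{\otimes(n-j)}$ the hypothesis is not actually used; that is a signal that your route is not exploiting the assumptions. Switching from tensor power to composition removes the obstacle you identified and shrinks the proof to a few lines.
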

\begin{proof}[{Proof}]
We show this theorem from Thm.~\ref{th::main_sdp} by comparing the derivative $g_{U_0}(H)$ in Eq.~(\ref{eq:differentiation}) for $U\mapsto f(U)$ and $U\mapsto f(U)^n$.
Let us define $g'_1(H)$ and $g'_n(H)$ by $g_{U_0}(H)$ for $f(U)$ and $f(U)^{n}$, respectively. Using the product rule (or the Leibniz rule) of differentiation, $g'_n(H)$ is expressed in terms of $g'_1(H)$ as
\begin{align}
    g'_n(H)&=\left.
        -i\frac{\rm d}{{\rm d}\epsilon} 
    \right|_{\epsilon = 0}
        [f(U_0)^{-n}f(e^{i\epsilon H}U_0)^n]
    \nonumber\\
    &= \sum_{k=0}^{n-1}f(U_0)^{-k} g'_{1}(H) f(U_0)^k
    \nonumber\\
    &=n g'_1(H),
\end{align}
thus the solution of the SDP given in Eq.~(\ref{eq:sdp}) is $nN$ for $f(U)^n$. If a catalytic transformation is possible, there exists a catalytic state that is prepared by the first run of the protocol and makes the further run of the protocol with $M$ queries of the input unitary for $M<N$.
Then, we can implement $f(U)^n$ with $N+(n-1)M$ queries, which is smaller than $nN$ and contradicts the SDP solution.
\end{proof}

Theorem~\ref{th::catal_cond} shows that the optimal algorithm for unitary complex conjugation is not catalytic, since the lower bound $d-1$ is tight for unitary complex conjugation.
We also prove that the optimal algorithm for the unitary iteration $U\mapsto U^n$ for a positive integer $n$ is not catalytic since the SDP solution for unitary iteration is $n$ as shown in Appendix~\ref{app:prooftable} of the SM~\cite{supple}, which is tight since the consecutive application of $U$ for $n$ times implements $U^n$.
In contrast, the SDP solutions for unitary inversion and transposition ($d^2-1$ and $d+1$, respectively) are strictly smaller than the numbers given by Cor.~\ref{cor::inversion-transposistion-conjugation}. Thus, they do not satisfy the assumption of Thm.~\ref{th::catal_cond} indicating the possibility of catalytic algorithms.

\begin{figure}
    \centering
    \includegraphics[width=0.8\linewidth]{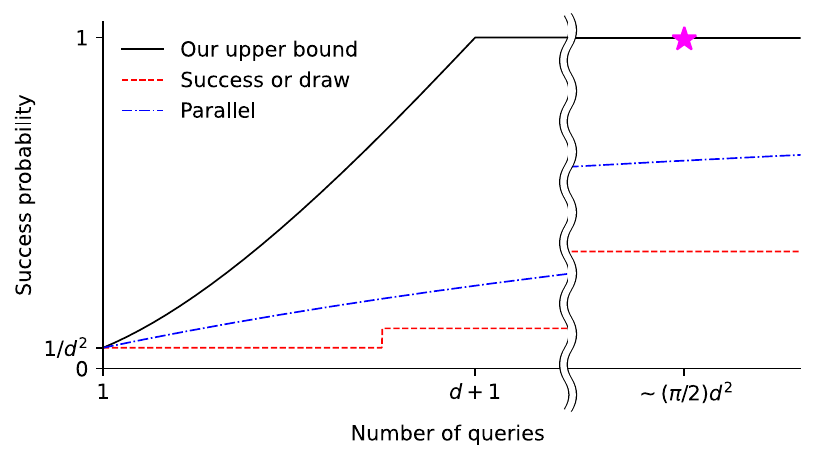}
    \caption{Summary of upper (``Our upper bound'') and lower (other lines) bounds of the success probability of unitary transposition. ``Our upper bound'' shows the analytical solution of the SDP for the probabilistic transformation. ``Success or draw'' and ``Parallel'' refer to the lower bounds corresponding to the success probability of protocols given in Sec.~E and Thm.~2 of \cite{quintino2019probabilistic}, respectively. Magenta star shows the number of queries $\sim (\pi /2)d^2$ required in the deterministic exact transposition algorithm given by modifying the algorithm in \cite{chen2024quantum}.}
    \label{fig::prob_trans_graph}
\end{figure}

\textit{Extension of Thm.~\ref{th::main_sdp} to relaxed situations.---}
Theorem~\ref{th::main_sdp} shows a lower bound of query complexity for a deterministic and exact implementation of a completely unknown unitary operation $U$.
However, this situation may be too restricted in a realistic setting.
For instance, partial information of $U$ may be available, and probabilistic implementation may be enough for a practical application.
We extend Thm.~\ref{th::main_sdp} to these relaxed situations.

We first consider the partially-known situation where a partial information of $U$ is given in terms of a Lie subgroup $S$ of $\SU(d)$, i.e., $U$ is promised to satisfy $U\in S$.
In this case, the linear map $\mcE_{U_0}$ is only determined on the Hamiltonian $H$ within its Lie algebra $\mathfrak{s}$ and consequently, the resulting SDP will have a solution which is smaller than or equal to the solution for the $\SU (d)$ case.
We give the SDP for this partially-known situation and show a lower bound for implementing the unitary inversion in ${\rm SO}(d)$, which is given by $d-1$ and tight for $d=2$.
This suggests a possibility to reduce the query complexity by using a partial knowledge about the input unitary operation.
See Appendix~\ref{app::subgroup} of the SM~\cite{supple} for the detail.

For the probabilistic situation, we provide a lower bound on query complexity to implement $f(U)$ with a success probability above $p(U)$ with a differentiable function $p: \SU(d)\to \RR_{\geq 0}$ in terms of SDP.
We show the upper bound of the success probability of probabilistic exact unitary transposition given by
\begin{align}\label{eq::trans_main_sucupper}
    p_{\rm trans}(U_0)\leq \left(\frac{d}{((d^2-1)/N)+1}\right)^2
\end{align}
for an arbitrary $U_0\in \SU(d)$, as shown in Fig.~\ref{fig::prob_trans_graph}.
This upper bound reproduces the tight upper bound $1/d^2$ of the success probability $p(U)$ of unitary transposition for $N=1$, which is achieved by the gate teleportation-based algorithm \cite{quintino2019probabilistic}.
In addition, this upper bound shows that the success probability $p(U_0)$ for a fixed number of queries tends to $0$ in the limit of $d\to \infty$ for any $N$.
Note that this upper bound is obtained only using the property of function $p(\cdot)$ in a neighbor of an arbitrary unitary $U_0$, thus our lower bound holds even if we allow low success probability outside the neighbor of $U_0$.
See Appendix~\ref{app::prob} of the SM~\cite{supple} for the details. 

\textit{Conclusion.}---
This work derives a lower bound for the query complexity of any differentiable function $f:\SU(d)\to\SU(d)$ in terms of SDP.
This framework shows the exponential hardness of unitary inversion, transposition, and complex conjugation even if the input $n$-qubit unitary is an unknown logarithmic-depth unitary.
The lower bounds for unitary inversion and unitary transposition are tight for $d=2$ and asymptotically tight for unitary inversion.
This framework also gives a necessary condition for a function $f$ to have optimal and catalytic transformations of a unitary operation, which excludes the possibility of catalytic transformation for unitary complex conjugation.
We also provide a generalization of our framework to partially-known and probabilistic situations.
Our work sets a line in between ``what is possible'' and ``what is impossible'' in transformation of unitary operations, which will serve as a guide line for improvement of existing protocols (e.g., unitary inversion~\cite{yoshida2023reversing, chen2024quantum}) by using partial knowledge about the input unitary operation or relaxing the protocols to be probabilistic ones.

This work introduces a new technique based on differentiation of unitary operations to analyze the query complexity of transformation of unitary operations (see Proof sketch of Thm.~\ref{th::main_sdp}), which is also used in the concurrent work of the authors~\cite{bavaresco2025simulating}.
We believe that this technique has more applications in the analysis of query complexity.
This work also opens up a new direction of research to extend this work as shown below.
First, we can consider extending the SDP to cover higher-order differentiation of $Z(U)$ to obtain tighter bounds, whereas we only consider first-order differentiation of $Z(U)$ in this work.
We can also consider a combination of our differentiation-based method with the polynomial degree-based method~\cite{quintino2019probabilistic}, which is used to prove the no-go results for probabilistic implementation of complex conjugation in less than $d-1$ queries, to obtain stronger no-go theorems.
Our lower bound for unitary transposition does not match with the best-known query complexity given in Ref.~\cite{chen2024quantum}.
We leave it a future work to investigate protocols achieving the query complexity given by lower bounds shown in this work.

{\it Note added.}---
Recently, Ref.~\cite{chen2025tight} shows a lower bound on the query complexity of approximate unitary inversion using a different technique than ours.

We would like to thank David Trillo Fern\'andez, Jisho Miyazaki, Marco T\'ulio Quintino, Jessica Bavaresco, Philip Taranto, Seiseki Akibue, and Hl\'er Kristj\'ansson for fruitful discussions. This work was supported by MEXT Quantum Leap Flagship Program (MEXT QLEAP)  JPMXS0118069605, JPMXS0120351339, Japan Society for the Promotion of Science (JSPS) KAKENHI Grant Number 23K21643 and 23KJ0734, FoPM, WINGS Program, the University of Tokyo, DAIKIN Fellowship Program, the University of Tokyo, and IBM Quantum.

\bibliography{main}

\begin{thebibliography}{55}%
\makeatletter
\providecommand \@ifxundefined [1]{%
 \@ifx{#1\undefined}
}%
\providecommand \@ifnum [1]{%
 \ifnum #1\expandafter \@firstoftwo
 \else \expandafter \@secondoftwo
 \fi
}%
\providecommand \@ifx [1]{%
 \ifx #1\expandafter \@firstoftwo
 \else \expandafter \@secondoftwo
 \fi
}%
\providecommand \natexlab [1]{#1}%
\providecommand \enquote  [1]{``#1''}%
\providecommand \bibnamefont  [1]{#1}%
\providecommand \bibfnamefont [1]{#1}%
\providecommand \citenamefont [1]{#1}%
\providecommand \href@noop [0]{\@secondoftwo}%
\providecommand \href [0]{\begingroup \@sanitize@url \@href}%
\providecommand \@href[1]{\@@startlink{#1}\@@href}%
\providecommand \@@href[1]{\endgroup#1\@@endlink}%
\providecommand \@sanitize@url [0]{\catcode `\\12\catcode `\$12\catcode `\&12\catcode `\#12\catcode `\^12\catcode `\_12\catcode `\%12\relax}%
\providecommand \@@startlink[1]{}%
\providecommand \@@endlink[0]{}%
\providecommand \url  [0]{\begingroup\@sanitize@url \@url }%
\providecommand \@url [1]{\endgroup\@href {#1}{\urlprefix }}%
\providecommand \urlprefix  [0]{URL }%
\providecommand \Eprint [0]{\href }%
\providecommand \doibase [0]{https://doi.org/}%
\providecommand \selectlanguage [0]{\@gobble}%
\providecommand \bibinfo  [0]{\@secondoftwo}%
\providecommand \bibfield  [0]{\@secondoftwo}%
\providecommand \translation [1]{[#1]}%
\providecommand \BibitemOpen [0]{}%
\providecommand \bibitemStop [0]{}%
\providecommand \bibitemNoStop [0]{.\EOS\space}%
\providecommand \EOS [0]{\spacefactor3000\relax}%
\providecommand \BibitemShut  [1]{\csname bibitem#1\endcsname}%
\let\auto@bib@innerbib\@empty
\bibitem [{\citenamefont {Wootters}\ and\ \citenamefont {Zurek}(1982)}]{wootters1982single}%
  \BibitemOpen
  \bibfield  {author} {\bibinfo {author} {\bibfnamefont {W.~K.}\ \bibnamefont {Wootters}}\ and\ \bibinfo {author} {\bibfnamefont {W.~H.}\ \bibnamefont {Zurek}},\ }\bibfield  {title} {\bibinfo {title} {A single quantum cannot be cloned},\ }\href {https://doi.org/10.1038/299802a0} {\bibfield  {journal} {\bibinfo  {journal} {Nature}\ }\textbf {\bibinfo {volume} {299}},\ \bibinfo {pages} {802} (\bibinfo {year} {1982})}\BibitemShut {NoStop}%
\bibitem [{\citenamefont {Bennett}\ and\ \citenamefont {Brassard}(2014)}]{bennett2014quantum}%
  \BibitemOpen
  \bibfield  {author} {\bibinfo {author} {\bibfnamefont {C.~H.}\ \bibnamefont {Bennett}}\ and\ \bibinfo {author} {\bibfnamefont {G.}~\bibnamefont {Brassard}},\ }\bibfield  {title} {\bibinfo {title} {Quantum cryptography: Public key distribution and coin tossing},\ }\href {https://doi.org/10.1016/j.tcs.2014.05.025} {\bibfield  {journal} {\bibinfo  {journal} {Theor. Comput. Sci.}\ }\textbf {\bibinfo {volume} {560}},\ \bibinfo {pages} {7} (\bibinfo {year} {2014})},\ \Eprint {https://arxiv.org/abs/2003.06557} {arXiv:2003.06557} \BibitemShut {NoStop}%
\bibitem [{\citenamefont {Werner}(1998)}]{werner1998optimal}%
  \BibitemOpen
  \bibfield  {author} {\bibinfo {author} {\bibfnamefont {R.~F.}\ \bibnamefont {Werner}},\ }\bibfield  {title} {\bibinfo {title} {Optimal cloning of pure states},\ }\href {https://doi.org/10.1103/PhysRevA.58.1827} {\bibfield  {journal} {\bibinfo  {journal} {Phys. Rev. A}\ }\textbf {\bibinfo {volume} {58}},\ \bibinfo {pages} {1827} (\bibinfo {year} {1998})}\BibitemShut {NoStop}%
\bibitem [{\citenamefont {Agrawal}\ and\ \citenamefont {Pati}(2002)}]{agrawal2002probabilistic}%
  \BibitemOpen
  \bibfield  {author} {\bibinfo {author} {\bibfnamefont {P.}~\bibnamefont {Agrawal}}\ and\ \bibinfo {author} {\bibfnamefont {A.~K.}\ \bibnamefont {Pati}},\ }\bibfield  {title} {\bibinfo {title} {Probabilistic quantum teleportation},\ }\href {https://doi.org/10.1016/S0375-9601%2802%2901383-X} {\bibfield  {journal} {\bibinfo  {journal} {Phys. Lett. A}\ }\textbf {\bibinfo {volume} {305}},\ \bibinfo {pages} {12} (\bibinfo {year} {2002})},\ \Eprint {https://arxiv.org/abs/quant-ph/0210004} {arXiv:quant-ph/0210004} \BibitemShut {NoStop}%
\bibitem [{\citenamefont {Bu\ifmmode~\check{z}\else \v{z}\fi{}ek}\ and\ \citenamefont {Hillery}(2000)}]{buzek2000optimal}%
  \BibitemOpen
  \bibfield  {author} {\bibinfo {author} {\bibfnamefont {V.}~\bibnamefont {Bu\ifmmode~\check{z}\else \v{z}\fi{}ek}}\ and\ \bibinfo {author} {\bibfnamefont {M.}~\bibnamefont {Hillery}},\ }\bibfield  {title} {\bibinfo {title} {Optimal manipulations with qubits: Universal quantum entanglers},\ }\href {https://doi.org/10.1103/PhysRevA.62.022303} {\bibfield  {journal} {\bibinfo  {journal} {Phys. Rev. A}\ }\textbf {\bibinfo {volume} {62}},\ \bibinfo {pages} {022303} (\bibinfo {year} {2000})}\BibitemShut {NoStop}%
\bibitem [{\citenamefont {Yang}\ \emph {et~al.}(2020{\natexlab{a}})\citenamefont {Yang}, \citenamefont {Renner},\ and\ \citenamefont {Chiribella}}]{yang2019optimal}%
  \BibitemOpen
  \bibfield  {author} {\bibinfo {author} {\bibfnamefont {Y.}~\bibnamefont {Yang}}, \bibinfo {author} {\bibfnamefont {R.}~\bibnamefont {Renner}},\ and\ \bibinfo {author} {\bibfnamefont {G.}~\bibnamefont {Chiribella}},\ }\bibfield  {title} {\bibinfo {title} {{Optimal Universal Programming of Unitary Gates}},\ }\href {https://doi.org/10.1103/PhysRevLett.125.210501} {\bibfield  {journal} {\bibinfo  {journal} {Phys. Rev. Lett.}\ }\textbf {\bibinfo {volume} {125}},\ \bibinfo {pages} {210501} (\bibinfo {year} {2020}{\natexlab{a}})},\ \Eprint {https://arxiv.org/abs/2007.10363} {arXiv:2007.10363} \BibitemShut {NoStop}%
\bibitem [{\citenamefont {Navascu\'es}(2018)}]{navascues2018resetting}%
  \BibitemOpen
  \bibfield  {author} {\bibinfo {author} {\bibfnamefont {M.}~\bibnamefont {Navascu\'es}},\ }\bibfield  {title} {\bibinfo {title} {{Resetting Uncontrolled Quantum Systems}},\ }\href {https://doi.org/10.1103/PhysRevX.8.031008} {\bibfield  {journal} {\bibinfo  {journal} {Phys. Rev. X}\ }\textbf {\bibinfo {volume} {8}},\ \bibinfo {pages} {031008} (\bibinfo {year} {2018})},\ \Eprint {https://arxiv.org/abs/1710.02470} {arXiv:1710.02470} \BibitemShut {NoStop}%
\bibitem [{\citenamefont {G{\"a}rttner}\ \emph {et~al.}(2017)\citenamefont {G{\"a}rttner}, \citenamefont {Bohnet}, \citenamefont {Safavi-Naini}, \citenamefont {Wall}, \citenamefont {Bollinger},\ and\ \citenamefont {Rey}}]{garttner2017measuring}%
  \BibitemOpen
  \bibfield  {author} {\bibinfo {author} {\bibfnamefont {M.}~\bibnamefont {G{\"a}rttner}}, \bibinfo {author} {\bibfnamefont {J.~G.}\ \bibnamefont {Bohnet}}, \bibinfo {author} {\bibfnamefont {A.}~\bibnamefont {Safavi-Naini}}, \bibinfo {author} {\bibfnamefont {M.~L.}\ \bibnamefont {Wall}}, \bibinfo {author} {\bibfnamefont {J.~J.}\ \bibnamefont {Bollinger}},\ and\ \bibinfo {author} {\bibfnamefont {A.~M.}\ \bibnamefont {Rey}},\ }\bibfield  {title} {\bibinfo {title} {Measuring out-of-time-order correlations and multiple quantum spectra in a trapped-ion quantum magnet},\ }\href {https://doi.org/10.1038/nphys4119} {\bibfield  {journal} {\bibinfo  {journal} {Nat. Phys.}\ }\textbf {\bibinfo {volume} {13}},\ \bibinfo {pages} {781} (\bibinfo {year} {2017})},\ \Eprint {https://arxiv.org/abs/1608.08938} {arXiv:1608.08938} \BibitemShut {NoStop}%
\bibitem [{\citenamefont {Li}\ \emph {et~al.}(2017)\citenamefont {Li}, \citenamefont {Fan}, \citenamefont {Wang}, \citenamefont {Ye}, \citenamefont {Zeng}, \citenamefont {Zhai}, \citenamefont {Peng},\ and\ \citenamefont {Du}}]{li2017measuring}%
  \BibitemOpen
  \bibfield  {author} {\bibinfo {author} {\bibfnamefont {J.}~\bibnamefont {Li}}, \bibinfo {author} {\bibfnamefont {R.}~\bibnamefont {Fan}}, \bibinfo {author} {\bibfnamefont {H.}~\bibnamefont {Wang}}, \bibinfo {author} {\bibfnamefont {B.}~\bibnamefont {Ye}}, \bibinfo {author} {\bibfnamefont {B.}~\bibnamefont {Zeng}}, \bibinfo {author} {\bibfnamefont {H.}~\bibnamefont {Zhai}}, \bibinfo {author} {\bibfnamefont {X.}~\bibnamefont {Peng}},\ and\ \bibinfo {author} {\bibfnamefont {J.}~\bibnamefont {Du}},\ }\bibfield  {title} {\bibinfo {title} {Measuring out-of-time-order correlators on a nuclear magnetic resonance quantum simulator},\ }\href {https://doi.org/10.1103/PhysRevX.7.031011} {\bibfield  {journal} {\bibinfo  {journal} {Phys. Rev. X}\ }\textbf {\bibinfo {volume} {7}},\ \bibinfo {pages} {031011} (\bibinfo {year} {2017})},\ \Eprint {https://arxiv.org/abs/1609.01246} {arXiv:1609.01246} \BibitemShut {NoStop}%
\bibitem [{\citenamefont {Bisio}\ and\ \citenamefont {Perinotti}(2019)}]{bisio2019theoretical}%
  \BibitemOpen
  \bibfield  {author} {\bibinfo {author} {\bibfnamefont {A.}~\bibnamefont {Bisio}}\ and\ \bibinfo {author} {\bibfnamefont {P.}~\bibnamefont {Perinotti}},\ }\bibfield  {title} {\bibinfo {title} {{Theoretical framework for higher-order quantum theory}},\ }\href {https://doi.org/10.1098/rspa.2018.0706} {\bibfield  {journal} {\bibinfo  {journal} {Proc. R. Soc. A}\ }\textbf {\bibinfo {volume} {475}},\ \bibinfo {pages} {20180706} (\bibinfo {year} {2019})},\ \Eprint {https://arxiv.org/abs/1806.09554} {arXiv:1806.09554} \BibitemShut {NoStop}%
\bibitem [{\citenamefont {Chiribella}\ and\ \citenamefont {Ebler}(2016)}]{chiribella2016optimal}%
  \BibitemOpen
  \bibfield  {author} {\bibinfo {author} {\bibfnamefont {G.}~\bibnamefont {Chiribella}}\ and\ \bibinfo {author} {\bibfnamefont {D.}~\bibnamefont {Ebler}},\ }\bibfield  {title} {\bibinfo {title} {Optimal quantum networks and one-shot entropies},\ }\href {https://doi.org/10.1088/1367-2630/18/9/093053} {\bibfield  {journal} {\bibinfo  {journal} {New J. Phys.}\ }\textbf {\bibinfo {volume} {18}},\ \bibinfo {pages} {093053} (\bibinfo {year} {2016})},\ \Eprint {https://arxiv.org/abs/1606.02394} {arXiv:1606.02394} \BibitemShut {NoStop}%
\bibitem [{\citenamefont {Quintino}\ \emph {et~al.}(2019{\natexlab{a}})\citenamefont {Quintino}, \citenamefont {Dong}, \citenamefont {Shimbo}, \citenamefont {Soeda},\ and\ \citenamefont {Murao}}]{quintino2019probabilistic}%
  \BibitemOpen
  \bibfield  {author} {\bibinfo {author} {\bibfnamefont {M.~T.}\ \bibnamefont {Quintino}}, \bibinfo {author} {\bibfnamefont {Q.}~\bibnamefont {Dong}}, \bibinfo {author} {\bibfnamefont {A.}~\bibnamefont {Shimbo}}, \bibinfo {author} {\bibfnamefont {A.}~\bibnamefont {Soeda}},\ and\ \bibinfo {author} {\bibfnamefont {M.}~\bibnamefont {Murao}},\ }\bibfield  {title} {\bibinfo {title} {{Probabilistic exact universal quantum circuits for transforming unitary operations}},\ }\href {https://doi.org/10.1103/PhysRevA.100.062339} {\bibfield  {journal} {\bibinfo  {journal} {Phys. Rev. A}\ }\textbf {\bibinfo {volume} {100}},\ \bibinfo {pages} {062339} (\bibinfo {year} {2019}{\natexlab{a}})},\ \Eprint {https://arxiv.org/abs/1909.01366} {arXiv:1909.01366} \BibitemShut {NoStop}%
\bibitem [{\citenamefont {Quintino}\ \emph {et~al.}(2019{\natexlab{b}})\citenamefont {Quintino}, \citenamefont {Dong}, \citenamefont {Shimbo}, \citenamefont {Soeda},\ and\ \citenamefont {Murao}}]{quintino2019reversing}%
  \BibitemOpen
  \bibfield  {author} {\bibinfo {author} {\bibfnamefont {M.~T.}\ \bibnamefont {Quintino}}, \bibinfo {author} {\bibfnamefont {Q.}~\bibnamefont {Dong}}, \bibinfo {author} {\bibfnamefont {A.}~\bibnamefont {Shimbo}}, \bibinfo {author} {\bibfnamefont {A.}~\bibnamefont {Soeda}},\ and\ \bibinfo {author} {\bibfnamefont {M.}~\bibnamefont {Murao}},\ }\bibfield  {title} {\bibinfo {title} {{Reversing unknown quantum transformations: Universal quantum circuit for inverting general unitary operations}},\ }\href {https://doi.org/10.1103/PhysRevLett.123.210502} {\bibfield  {journal} {\bibinfo  {journal} {Phys. Rev. Lett.}\ }\textbf {\bibinfo {volume} {123}},\ \bibinfo {pages} {210502} (\bibinfo {year} {2019}{\natexlab{b}})},\ \Eprint {https://arxiv.org/abs/1810.06944} {arXiv:1810.06944} \BibitemShut {NoStop}%
\bibitem [{\citenamefont {Sardharwalla}\ \emph {et~al.}(2016)\citenamefont {Sardharwalla}, \citenamefont {Cubitt}, \citenamefont {Harrow},\ and\ \citenamefont {Linden}}]{sardharwalla2016universal}%
  \BibitemOpen
  \bibfield  {author} {\bibinfo {author} {\bibfnamefont {I.~S.}\ \bibnamefont {Sardharwalla}}, \bibinfo {author} {\bibfnamefont {T.~S.}\ \bibnamefont {Cubitt}}, \bibinfo {author} {\bibfnamefont {A.~W.}\ \bibnamefont {Harrow}},\ and\ \bibinfo {author} {\bibfnamefont {N.}~\bibnamefont {Linden}},\ }\bibfield  {title} {\bibinfo {title} {{Universal refocusing of systematic quantum noise}},\ }\Eprint {https://arxiv.org/abs/1602.07963} {arXiv:1602.07963}  (\bibinfo {year} {2016})\BibitemShut {NoStop}%
\bibitem [{\citenamefont {Quintino}\ and\ \citenamefont {Ebler}(2022)}]{quintino2022deterministic}%
  \BibitemOpen
  \bibfield  {author} {\bibinfo {author} {\bibfnamefont {M.~T.}\ \bibnamefont {Quintino}}\ and\ \bibinfo {author} {\bibfnamefont {D.}~\bibnamefont {Ebler}},\ }\bibfield  {title} {\bibinfo {title} {{Deterministic transformations between unitary operations: Exponential advantage with adaptive quantum circuits and the power of indefinite causality}},\ }\href {https://doi.org/10.22331/q-2022-03-31-679} {\bibfield  {journal} {\bibinfo  {journal} {Quantum}\ }\textbf {\bibinfo {volume} {6}},\ \bibinfo {pages} {679} (\bibinfo {year} {2022})},\ \Eprint {https://arxiv.org/abs/2109.08202} {arXiv:2109.08202} \BibitemShut {NoStop}%
\bibitem [{\citenamefont {Trillo}\ \emph {et~al.}(2020)\citenamefont {Trillo}, \citenamefont {Dive},\ and\ \citenamefont {Navascu{\'e}s}}]{trillo2020translating}%
  \BibitemOpen
  \bibfield  {author} {\bibinfo {author} {\bibfnamefont {D.}~\bibnamefont {Trillo}}, \bibinfo {author} {\bibfnamefont {B.}~\bibnamefont {Dive}},\ and\ \bibinfo {author} {\bibfnamefont {M.}~\bibnamefont {Navascu{\'e}s}},\ }\bibfield  {title} {\bibinfo {title} {Translating uncontrolled systems in time},\ }\href {https://doi.org/10.22331/q-2020-12-15-374} {\bibfield  {journal} {\bibinfo  {journal} {Quantum}\ }\textbf {\bibinfo {volume} {4}},\ \bibinfo {pages} {374} (\bibinfo {year} {2020})},\ \Eprint {https://arxiv.org/abs/1903.10568} {arXiv:1903.10568} \BibitemShut {NoStop}%
\bibitem [{\citenamefont {Trillo}\ \emph {et~al.}(2023)\citenamefont {Trillo}, \citenamefont {Dive},\ and\ \citenamefont {Navascu\'es}}]{trillo2023universal}%
  \BibitemOpen
  \bibfield  {author} {\bibinfo {author} {\bibfnamefont {D.}~\bibnamefont {Trillo}}, \bibinfo {author} {\bibfnamefont {B.}~\bibnamefont {Dive}},\ and\ \bibinfo {author} {\bibfnamefont {M.}~\bibnamefont {Navascu\'es}},\ }\bibfield  {title} {\bibinfo {title} {{Universal Quantum Rewinding Protocol with an Arbitrarily High Probability of Success}},\ }\href {https://doi.org/10.1103/PhysRevLett.130.110201} {\bibfield  {journal} {\bibinfo  {journal} {Phys. Rev. Lett.}\ }\textbf {\bibinfo {volume} {130}},\ \bibinfo {pages} {110201} (\bibinfo {year} {2023})},\ \Eprint {https://arxiv.org/abs/2205.01131} {arXiv:2205.01131} \BibitemShut {NoStop}%
\bibitem [{\citenamefont {Yoshida}\ \emph {et~al.}(2023)\citenamefont {Yoshida}, \citenamefont {Soeda},\ and\ \citenamefont {Murao}}]{yoshida2023reversing}%
  \BibitemOpen
  \bibfield  {author} {\bibinfo {author} {\bibfnamefont {S.}~\bibnamefont {Yoshida}}, \bibinfo {author} {\bibfnamefont {A.}~\bibnamefont {Soeda}},\ and\ \bibinfo {author} {\bibfnamefont {M.}~\bibnamefont {Murao}},\ }\bibfield  {title} {\bibinfo {title} {{Reversing Unknown Qubit-Unitary Operation, Deterministically and Exactly}},\ }\href {https://doi.org/10.1103/PhysRevLett.131.120602} {\bibfield  {journal} {\bibinfo  {journal} {Phys. Rev. Lett.}\ }\textbf {\bibinfo {volume} {131}},\ \bibinfo {pages} {120602} (\bibinfo {year} {2023})},\ \Eprint {https://arxiv.org/abs/2209.02907} {arXiv:2209.02907} \BibitemShut {NoStop}%
\bibitem [{\citenamefont {Chen}\ \emph {et~al.}(2024)\citenamefont {Chen}, \citenamefont {Mo}, \citenamefont {Liu}, \citenamefont {Zhang},\ and\ \citenamefont {Wang}}]{chen2024quantum}%
  \BibitemOpen
  \bibfield  {author} {\bibinfo {author} {\bibfnamefont {Y.-A.}\ \bibnamefont {Chen}}, \bibinfo {author} {\bibfnamefont {Y.}~\bibnamefont {Mo}}, \bibinfo {author} {\bibfnamefont {Y.}~\bibnamefont {Liu}}, \bibinfo {author} {\bibfnamefont {L.}~\bibnamefont {Zhang}},\ and\ \bibinfo {author} {\bibfnamefont {X.}~\bibnamefont {Wang}},\ }\bibfield  {title} {\bibinfo {title} {{Quantum Advantage in Reversing Unknown Unitary Evolutions}},\ }\Eprint {https://arxiv.org/abs/2403.04704} {arXiv:2403.04704}  (\bibinfo {year} {2024})\BibitemShut {NoStop}%
\bibitem [{\citenamefont {Mo}\ \emph {et~al.}(2025)\citenamefont {Mo}, \citenamefont {Zhang}, \citenamefont {Chen}, \citenamefont {Liu}, \citenamefont {Lin},\ and\ \citenamefont {Wang}}]{mo2025parameterized}%
  \BibitemOpen
  \bibfield  {author} {\bibinfo {author} {\bibfnamefont {Y.}~\bibnamefont {Mo}}, \bibinfo {author} {\bibfnamefont {L.}~\bibnamefont {Zhang}}, \bibinfo {author} {\bibfnamefont {Y.-A.}\ \bibnamefont {Chen}}, \bibinfo {author} {\bibfnamefont {Y.}~\bibnamefont {Liu}}, \bibinfo {author} {\bibfnamefont {T.}~\bibnamefont {Lin}},\ and\ \bibinfo {author} {\bibfnamefont {X.}~\bibnamefont {Wang}},\ }\bibfield  {title} {\bibinfo {title} {Parameterized quantum comb and simpler circuits for reversing unknown qubit-unitary operations},\ }\href {https://doi.org/10.1038/s41534-025-00979-1} {\bibfield  {journal} {\bibinfo  {journal} {npj Quantum Inf.}\ }\textbf {\bibinfo {volume} {11}},\ \bibinfo {pages} {32} (\bibinfo {year} {2025})},\ \Eprint {https://arxiv.org/abs/2403.03761} {arXiv:2403.03761} \BibitemShut {NoStop}%
\bibitem [{\citenamefont {Larkin}\ and\ \citenamefont {Ovchinnikov}(1969)}]{larkin1969quasiclassical}%
  \BibitemOpen
  \bibfield  {author} {\bibinfo {author} {\bibfnamefont {A.~I.}\ \bibnamefont {Larkin}}\ and\ \bibinfo {author} {\bibfnamefont {Y.~N.}\ \bibnamefont {Ovchinnikov}},\ }\bibfield  {title} {\bibinfo {title} {Quasiclassical method in the theory of superconductivity},\ }\href@noop {} {\bibfield  {journal} {\bibinfo  {journal} {Sov. Phys. JETP}\ }\textbf {\bibinfo {volume} {28}},\ \bibinfo {pages} {1200} (\bibinfo {year} {1969})}\BibitemShut {NoStop}%
\bibitem [{\citenamefont {Maldacena}\ \emph {et~al.}(2016)\citenamefont {Maldacena}, \citenamefont {Shenker},\ and\ \citenamefont {Stanford}}]{maldacena2016bound}%
  \BibitemOpen
  \bibfield  {author} {\bibinfo {author} {\bibfnamefont {J.}~\bibnamefont {Maldacena}}, \bibinfo {author} {\bibfnamefont {S.~H.}\ \bibnamefont {Shenker}},\ and\ \bibinfo {author} {\bibfnamefont {D.}~\bibnamefont {Stanford}},\ }\bibfield  {title} {\bibinfo {title} {A bound on chaos},\ }\href {https://doi.org/10.1007/JHEP08%282016%29106} {\bibfield  {journal} {\bibinfo  {journal} {J. High Energy Phys.}\ }\textbf {\bibinfo {volume} {2016}}\bibfield  {number} {\bibinfo  {number} { (8)},\ \bibinfo {pages} {1}},\ }\Eprint {https://arxiv.org/abs/1503.01409} {arXiv:1503.01409} \BibitemShut {NoStop}%
\bibitem [{\citenamefont {Low}\ and\ \citenamefont {Chuang}(2017)}]{low2017optimal}%
  \BibitemOpen
  \bibfield  {author} {\bibinfo {author} {\bibfnamefont {G.~H.}\ \bibnamefont {Low}}\ and\ \bibinfo {author} {\bibfnamefont {I.~L.}\ \bibnamefont {Chuang}},\ }\bibfield  {title} {\bibinfo {title} {{Optimal Hamiltonian simulation by quantum signal processing}},\ }\href {https://doi.org/10.1103/PhysRevLett.118.010501} {\bibfield  {journal} {\bibinfo  {journal} {Phys. Rev. Lett.}\ }\textbf {\bibinfo {volume} {118}},\ \bibinfo {pages} {010501} (\bibinfo {year} {2017})},\ \Eprint {https://arxiv.org/abs/1606.02685} {arXiv:1606.02685} \BibitemShut {NoStop}%
\bibitem [{\citenamefont {Low}\ and\ \citenamefont {Chuang}(2019)}]{low2019hamiltonian}%
  \BibitemOpen
  \bibfield  {author} {\bibinfo {author} {\bibfnamefont {G.~H.}\ \bibnamefont {Low}}\ and\ \bibinfo {author} {\bibfnamefont {I.~L.}\ \bibnamefont {Chuang}},\ }\bibfield  {title} {\bibinfo {title} {{Hamiltonian simulation by qubitization}},\ }\href {https://doi.org/10.22331/q-2019-07-12-163} {\bibfield  {journal} {\bibinfo  {journal} {Quantum}\ }\textbf {\bibinfo {volume} {3}},\ \bibinfo {pages} {163} (\bibinfo {year} {2019})},\ \Eprint {https://arxiv.org/abs/1610.06546} {arXiv:1610.06546} \BibitemShut {NoStop}%
\bibitem [{\citenamefont {Gily{\'e}n}\ \emph {et~al.}(2019)\citenamefont {Gily{\'e}n}, \citenamefont {Su}, \citenamefont {Low},\ and\ \citenamefont {Wiebe}}]{gilyen2019quantum}%
  \BibitemOpen
  \bibfield  {author} {\bibinfo {author} {\bibfnamefont {A.}~\bibnamefont {Gily{\'e}n}}, \bibinfo {author} {\bibfnamefont {Y.}~\bibnamefont {Su}}, \bibinfo {author} {\bibfnamefont {G.~H.}\ \bibnamefont {Low}},\ and\ \bibinfo {author} {\bibfnamefont {N.}~\bibnamefont {Wiebe}},\ }\bibfield  {title} {\bibinfo {title} {{Quantum singular value transformation and beyond: exponential improvements for quantum matrix arithmetics}},\ }in\ \href {https://doi.org/10.1145/3313276.3316366} {\emph {\bibinfo {booktitle} {Proceedings of the 51st Annual ACM SIGACT Symposium on Theory of Computing}}}\ (\bibinfo {year} {2019})\ pp.\ \bibinfo {pages} {193--204},\ \Eprint {https://arxiv.org/abs/1806.01838} {arXiv:1806.01838} \BibitemShut {NoStop}%
\bibitem [{\citenamefont {Martyn}\ \emph {et~al.}(2021)\citenamefont {Martyn}, \citenamefont {Rossi}, \citenamefont {Tan},\ and\ \citenamefont {Chuang}}]{martyn2021grand}%
  \BibitemOpen
  \bibfield  {author} {\bibinfo {author} {\bibfnamefont {J.~M.}\ \bibnamefont {Martyn}}, \bibinfo {author} {\bibfnamefont {Z.~M.}\ \bibnamefont {Rossi}}, \bibinfo {author} {\bibfnamefont {A.~K.}\ \bibnamefont {Tan}},\ and\ \bibinfo {author} {\bibfnamefont {I.~L.}\ \bibnamefont {Chuang}},\ }\bibfield  {title} {\bibinfo {title} {{Grand Unification of Quantum Algorithms}},\ }\href {https://doi.org/10.1103/PRXQuantum.2.040203} {\bibfield  {journal} {\bibinfo  {journal} {PRX Quantum}\ }\textbf {\bibinfo {volume} {2}},\ \bibinfo {pages} {040203} (\bibinfo {year} {2021})},\ \Eprint {https://arxiv.org/abs/2105.02859} {arXiv:2105.02859} \BibitemShut {NoStop}%
\bibitem [{\citenamefont {Chiribella}\ \emph {et~al.}(2008{\natexlab{a}})\citenamefont {Chiribella}, \citenamefont {D’Ariano},\ and\ \citenamefont {Perinotti}}]{chiribella2008optimal}%
  \BibitemOpen
  \bibfield  {author} {\bibinfo {author} {\bibfnamefont {G.}~\bibnamefont {Chiribella}}, \bibinfo {author} {\bibfnamefont {G.~M.}\ \bibnamefont {D’Ariano}},\ and\ \bibinfo {author} {\bibfnamefont {P.}~\bibnamefont {Perinotti}},\ }\bibfield  {title} {\bibinfo {title} {{Optimal Cloning of Unitary Transformation}},\ }\href {https://doi.org/10.1103/PhysRevLett.101.180504} {\bibfield  {journal} {\bibinfo  {journal} {Phys. Rev. Lett.}\ }\textbf {\bibinfo {volume} {101}},\ \bibinfo {pages} {180504} (\bibinfo {year} {2008}{\natexlab{a}})},\ \Eprint {https://arxiv.org/abs/0804.0129} {arXiv:0804.0129} \BibitemShut {NoStop}%
\bibitem [{\citenamefont {Chuang}\ and\ \citenamefont {Nielsen}(1997)}]{chuang1997prescription}%
  \BibitemOpen
  \bibfield  {author} {\bibinfo {author} {\bibfnamefont {I.~L.}\ \bibnamefont {Chuang}}\ and\ \bibinfo {author} {\bibfnamefont {M.~A.}\ \bibnamefont {Nielsen}},\ }\bibfield  {title} {\bibinfo {title} {Prescription for experimental determination of the dynamics of a quantum black box},\ }\href {https://doi.org/10.1080/09500349708231894} {\bibfield  {journal} {\bibinfo  {journal} {J. Mod. Opt.}\ }\textbf {\bibinfo {volume} {44}},\ \bibinfo {pages} {2455} (\bibinfo {year} {1997})},\ \Eprint {https://arxiv.org/abs/quant-ph/9610001} {arXiv:quant-ph/9610001} \BibitemShut {NoStop}%
\bibitem [{\citenamefont {Baldwin}\ \emph {et~al.}(2014)\citenamefont {Baldwin}, \citenamefont {Kalev},\ and\ \citenamefont {Deutsch}}]{baldwin2014quantum}%
  \BibitemOpen
  \bibfield  {author} {\bibinfo {author} {\bibfnamefont {C.~H.}\ \bibnamefont {Baldwin}}, \bibinfo {author} {\bibfnamefont {A.}~\bibnamefont {Kalev}},\ and\ \bibinfo {author} {\bibfnamefont {I.~H.}\ \bibnamefont {Deutsch}},\ }\bibfield  {title} {\bibinfo {title} {Quantum process tomography of unitary and near-unitary maps},\ }\href {https://doi.org/10.1103/PhysRevA.90.012110} {\bibfield  {journal} {\bibinfo  {journal} {Phys. Rev. A}\ }\textbf {\bibinfo {volume} {90}},\ \bibinfo {pages} {012110} (\bibinfo {year} {2014})},\ \Eprint {https://arxiv.org/abs/1404.2877} {arXiv:1404.2877} \BibitemShut {NoStop}%
\bibitem [{\citenamefont {Haah}\ \emph {et~al.}(2023)\citenamefont {Haah}, \citenamefont {Kothari}, \citenamefont {O’Donnell},\ and\ \citenamefont {Tang}}]{haah2023query}%
  \BibitemOpen
  \bibfield  {author} {\bibinfo {author} {\bibfnamefont {J.}~\bibnamefont {Haah}}, \bibinfo {author} {\bibfnamefont {R.}~\bibnamefont {Kothari}}, \bibinfo {author} {\bibfnamefont {R.}~\bibnamefont {O’Donnell}},\ and\ \bibinfo {author} {\bibfnamefont {E.}~\bibnamefont {Tang}},\ }\bibfield  {title} {\bibinfo {title} {Query-optimal estimation of unitary channels in diamond distance},\ }in\ \href {https://doi.org/10.1109/FOCS57990.2023.00028} {\emph {\bibinfo {booktitle} {2023 IEEE 64th Annual Symposium on Foundations of Computer Science (FOCS)}}}\ (\bibinfo {organization} {IEEE},\ \bibinfo {year} {2023})\ pp.\ \bibinfo {pages} {363--390},\ \Eprint {https://arxiv.org/abs/2302.14066} {arXiv:2302.14066} \BibitemShut {NoStop}%
\bibitem [{\citenamefont {Yang}\ \emph {et~al.}(2020{\natexlab{b}})\citenamefont {Yang}, \citenamefont {Renner},\ and\ \citenamefont {Chiribella}}]{yang2020optimal}%
  \BibitemOpen
  \bibfield  {author} {\bibinfo {author} {\bibfnamefont {Y.}~\bibnamefont {Yang}}, \bibinfo {author} {\bibfnamefont {R.}~\bibnamefont {Renner}},\ and\ \bibinfo {author} {\bibfnamefont {G.}~\bibnamefont {Chiribella}},\ }\bibfield  {title} {\bibinfo {title} {{Optimal Universal Programming of Unitary Gates}},\ }\href {https://doi.org/10.1103/PhysRevLett.125.210501} {\bibfield  {journal} {\bibinfo  {journal} {Phys. Rev. Lett.}\ }\textbf {\bibinfo {volume} {125}},\ \bibinfo {pages} {210501} (\bibinfo {year} {2020}{\natexlab{b}})},\ \Eprint {https://arxiv.org/abs/2007.10363} {arXiv:2007.10363} \BibitemShut {NoStop}%
\bibitem [{\citenamefont {Dong}\ \emph {et~al.}(2019)\citenamefont {Dong}, \citenamefont {Nakayama}, \citenamefont {Soeda},\ and\ \citenamefont {Murao}}]{dong2019controlled}%
  \BibitemOpen
  \bibfield  {author} {\bibinfo {author} {\bibfnamefont {Q.}~\bibnamefont {Dong}}, \bibinfo {author} {\bibfnamefont {S.}~\bibnamefont {Nakayama}}, \bibinfo {author} {\bibfnamefont {A.}~\bibnamefont {Soeda}},\ and\ \bibinfo {author} {\bibfnamefont {M.}~\bibnamefont {Murao}},\ }\bibfield  {title} {\bibinfo {title} {{Controlled quantum operations and combs, and their applications to universal controllization of divisible unitary operations}},\ }\Eprint {https://arxiv.org/abs/1911.01645} {arXiv:1911.01645}  (\bibinfo {year} {2019})\BibitemShut {NoStop}%
\bibitem [{\citenamefont {Bisio}\ \emph {et~al.}(2010)\citenamefont {Bisio}, \citenamefont {Chiribella}, \citenamefont {D'Ariano}, \citenamefont {Facchini},\ and\ \citenamefont {Perinotti}}]{bisio2010optimal}%
  \BibitemOpen
  \bibfield  {author} {\bibinfo {author} {\bibfnamefont {A.}~\bibnamefont {Bisio}}, \bibinfo {author} {\bibfnamefont {G.}~\bibnamefont {Chiribella}}, \bibinfo {author} {\bibfnamefont {G.~M.}\ \bibnamefont {D'Ariano}}, \bibinfo {author} {\bibfnamefont {S.}~\bibnamefont {Facchini}},\ and\ \bibinfo {author} {\bibfnamefont {P.}~\bibnamefont {Perinotti}},\ }\bibfield  {title} {\bibinfo {title} {Optimal quantum learning of a unitary transformation},\ }\href {https://doi.org/10.1103/PhysRevA.81.032324} {\bibfield  {journal} {\bibinfo  {journal} {Phys. Rev. A}\ }\textbf {\bibinfo {volume} {81}},\ \bibinfo {pages} {032324} (\bibinfo {year} {2010})},\ \Eprint {https://arxiv.org/abs/0903.0543} {arXiv:0903.0543} \BibitemShut {NoStop}%
\bibitem [{\citenamefont {Sedl\'ak}\ \emph {et~al.}(2019)\citenamefont {Sedl\'ak}, \citenamefont {Bisio},\ and\ \citenamefont {Ziman}}]{sedlak2019optimal}%
  \BibitemOpen
  \bibfield  {author} {\bibinfo {author} {\bibfnamefont {M.}~\bibnamefont {Sedl\'ak}}, \bibinfo {author} {\bibfnamefont {A.}~\bibnamefont {Bisio}},\ and\ \bibinfo {author} {\bibfnamefont {M.}~\bibnamefont {Ziman}},\ }\bibfield  {title} {\bibinfo {title} {{Optimal Probabilistic Storage and Retrieval of Unitary Channels}},\ }\href {https://doi.org/10.1103/PhysRevLett.122.170502} {\bibfield  {journal} {\bibinfo  {journal} {Phys. Rev. Lett.}\ }\textbf {\bibinfo {volume} {122}},\ \bibinfo {pages} {170502} (\bibinfo {year} {2019})},\ \Eprint {https://arxiv.org/abs/1809.04552} {arXiv:1809.04552} \BibitemShut {NoStop}%
\bibitem [{\citenamefont {Sedl\'ak}\ and\ \citenamefont {Ziman}(2020)}]{sedlak2020probabilistic}%
  \BibitemOpen
  \bibfield  {author} {\bibinfo {author} {\bibfnamefont {M.}~\bibnamefont {Sedl\'ak}}\ and\ \bibinfo {author} {\bibfnamefont {M.}~\bibnamefont {Ziman}},\ }\bibfield  {title} {\bibinfo {title} {Probabilistic storage and retrieval of qubit phase gates},\ }\href {https://doi.org/10.1103/PhysRevA.102.032618} {\bibfield  {journal} {\bibinfo  {journal} {Phys. Rev. A}\ }\textbf {\bibinfo {volume} {102}},\ \bibinfo {pages} {032618} (\bibinfo {year} {2020})},\ \Eprint {https://arxiv.org/abs/2008.09555} {arXiv:2008.09555} \BibitemShut {NoStop}%
\bibitem [{\citenamefont {Bisio}\ \emph {et~al.}(2014)\citenamefont {Bisio}, \citenamefont {D'Ariano}, \citenamefont {Perinotti},\ and\ \citenamefont {Sedl{\'a}k}}]{bisio2014optimal}%
  \BibitemOpen
  \bibfield  {author} {\bibinfo {author} {\bibfnamefont {A.}~\bibnamefont {Bisio}}, \bibinfo {author} {\bibfnamefont {G.~M.}\ \bibnamefont {D'Ariano}}, \bibinfo {author} {\bibfnamefont {P.}~\bibnamefont {Perinotti}},\ and\ \bibinfo {author} {\bibfnamefont {M.}~\bibnamefont {Sedl{\'a}k}},\ }\bibfield  {title} {\bibinfo {title} {Optimal processing of reversible quantum channels},\ }\href {https://doi.org/10.1016/j.physleta.2014.04.042} {\bibfield  {journal} {\bibinfo  {journal} {Phys. Lett. A}\ }\textbf {\bibinfo {volume} {378}},\ \bibinfo {pages} {1797} (\bibinfo {year} {2014})},\ \Eprint {https://arxiv.org/abs/1308.3254} {arXiv:1308.3254} \BibitemShut {NoStop}%
\bibitem [{\citenamefont {D\"ur}\ \emph {et~al.}(2015)\citenamefont {D\"ur}, \citenamefont {Sekatski},\ and\ \citenamefont {Skotiniotis}}]{dur2015deterministic}%
  \BibitemOpen
  \bibfield  {author} {\bibinfo {author} {\bibfnamefont {W.}~\bibnamefont {D\"ur}}, \bibinfo {author} {\bibfnamefont {P.}~\bibnamefont {Sekatski}},\ and\ \bibinfo {author} {\bibfnamefont {M.}~\bibnamefont {Skotiniotis}},\ }\bibfield  {title} {\bibinfo {title} {{Deterministic Superreplication of One-Parameter Unitary Transformations}},\ }\href {https://doi.org/10.1103/PhysRevLett.114.120503} {\bibfield  {journal} {\bibinfo  {journal} {Phys. Rev. Lett.}\ }\textbf {\bibinfo {volume} {114}},\ \bibinfo {pages} {120503} (\bibinfo {year} {2015})},\ \Eprint {https://arxiv.org/abs/1410.6008} {arXiv:1410.6008} \BibitemShut {NoStop}%
\bibitem [{\citenamefont {Chiribella}\ \emph {et~al.}(2015)\citenamefont {Chiribella}, \citenamefont {Yang},\ and\ \citenamefont {Huang}}]{chiribella2015universal}%
  \BibitemOpen
  \bibfield  {author} {\bibinfo {author} {\bibfnamefont {G.}~\bibnamefont {Chiribella}}, \bibinfo {author} {\bibfnamefont {Y.}~\bibnamefont {Yang}},\ and\ \bibinfo {author} {\bibfnamefont {C.}~\bibnamefont {Huang}},\ }\bibfield  {title} {\bibinfo {title} {{Universal Superreplication of Unitary Gates}},\ }\href {https://doi.org/10.1103/PhysRevLett.114.120504} {\bibfield  {journal} {\bibinfo  {journal} {Phys. Rev. Lett.}\ }\textbf {\bibinfo {volume} {114}},\ \bibinfo {pages} {120504} (\bibinfo {year} {2015})},\ \Eprint {https://arxiv.org/abs/1412.1349} {arXiv:1412.1349} \BibitemShut {NoStop}%
\bibitem [{\citenamefont {Soleimanifar}\ and\ \citenamefont {Karimipour}(2016)}]{soleimanifar2016nogo}%
  \BibitemOpen
  \bibfield  {author} {\bibinfo {author} {\bibfnamefont {M.}~\bibnamefont {Soleimanifar}}\ and\ \bibinfo {author} {\bibfnamefont {V.}~\bibnamefont {Karimipour}},\ }\bibfield  {title} {\bibinfo {title} {No-go theorem for iterations of unknown quantum gates},\ }\href {https://doi.org/10.1103/PhysRevA.93.012344} {\bibfield  {journal} {\bibinfo  {journal} {Phys. Rev. A}\ }\textbf {\bibinfo {volume} {93}},\ \bibinfo {pages} {012344} (\bibinfo {year} {2016})},\ \Eprint {https://arxiv.org/abs/1510.06888} {arXiv:1510.06888} \BibitemShut {NoStop}%
\bibitem [{\citenamefont {Ebler}\ \emph {et~al.}(2023)\citenamefont {Ebler}, \citenamefont {Horodecki}, \citenamefont {Marciniak}, \citenamefont {M^^c5^^82ynik}, \citenamefont {Quintino},\ and\ \citenamefont {Studzi^^c5^^84ski}}]{ebler2023optimal}%
  \BibitemOpen
  \bibfield  {author} {\bibinfo {author} {\bibfnamefont {D.}~\bibnamefont {Ebler}}, \bibinfo {author} {\bibfnamefont {M.}~\bibnamefont {Horodecki}}, \bibinfo {author} {\bibfnamefont {M.}~\bibnamefont {Marciniak}}, \bibinfo {author} {\bibfnamefont {T.}~\bibnamefont {M^^c5^^82ynik}}, \bibinfo {author} {\bibfnamefont {M.~T.}\ \bibnamefont {Quintino}},\ and\ \bibinfo {author} {\bibfnamefont {M.}~\bibnamefont {Studzi^^c5^^84ski}},\ }\bibfield  {title} {\bibinfo {title} {{Optimal Universal Quantum Circuits for Unitary Complex Conjugation}},\ }\href {https://doi.org/10.1109/TIT.2023.3263771} {\bibfield  {journal} {\bibinfo  {journal} {IEEE Trans. Inf. Theory}\ }\textbf {\bibinfo {volume} {69}},\ \bibinfo {pages} {5069} (\bibinfo {year} {2023})},\ \Eprint {https://arxiv.org/abs/2206.00107} {arXiv:2206.00107} \BibitemShut {NoStop}%
\bibitem [{\citenamefont {Ara{\'u}jo}\ \emph {et~al.}(2014)\citenamefont {Ara{\'u}jo}, \citenamefont {Feix}, \citenamefont {Costa},\ and\ \citenamefont {Brukner}}]{araujo2014quantum}%
  \BibitemOpen
  \bibfield  {author} {\bibinfo {author} {\bibfnamefont {M.}~\bibnamefont {Ara{\'u}jo}}, \bibinfo {author} {\bibfnamefont {A.}~\bibnamefont {Feix}}, \bibinfo {author} {\bibfnamefont {F.}~\bibnamefont {Costa}},\ and\ \bibinfo {author} {\bibfnamefont {{\v{C}}.}~\bibnamefont {Brukner}},\ }\bibfield  {title} {\bibinfo {title} {{Quantum circuits cannot control unknown operations}},\ }\href {https://doi.org/10.1088/1367-2630/16/9/093026} {\bibfield  {journal} {\bibinfo  {journal} {New J. Phys.}\ }\textbf {\bibinfo {volume} {16}},\ \bibinfo {pages} {093026} (\bibinfo {year} {2014})},\ \Eprint {https://arxiv.org/abs/1309.7976} {arXiv:1309.7976} \BibitemShut {NoStop}%
\bibitem [{\citenamefont {Bisio}\ \emph {et~al.}(2016)\citenamefont {Bisio}, \citenamefont {Dall'Arno},\ and\ \citenamefont {Perinotti}}]{bisio2016quantum}%
  \BibitemOpen
  \bibfield  {author} {\bibinfo {author} {\bibfnamefont {A.}~\bibnamefont {Bisio}}, \bibinfo {author} {\bibfnamefont {M.}~\bibnamefont {Dall'Arno}},\ and\ \bibinfo {author} {\bibfnamefont {P.}~\bibnamefont {Perinotti}},\ }\bibfield  {title} {\bibinfo {title} {Quantum conditional operations},\ }\href {https://doi.org/10.1103/PhysRevA.94.022340} {\bibfield  {journal} {\bibinfo  {journal} {Phys. Rev. A}\ }\textbf {\bibinfo {volume} {94}},\ \bibinfo {pages} {022340} (\bibinfo {year} {2016})},\ \Eprint {https://arxiv.org/abs/1509.01062} {arXiv:1509.01062} \BibitemShut {NoStop}%
\bibitem [{\citenamefont {Dong}\ \emph {et~al.}(2021)\citenamefont {Dong}, \citenamefont {Quintino}, \citenamefont {Soeda},\ and\ \citenamefont {Murao}}]{dong2021success}%
  \BibitemOpen
  \bibfield  {author} {\bibinfo {author} {\bibfnamefont {Q.}~\bibnamefont {Dong}}, \bibinfo {author} {\bibfnamefont {M.~T.}\ \bibnamefont {Quintino}}, \bibinfo {author} {\bibfnamefont {A.}~\bibnamefont {Soeda}},\ and\ \bibinfo {author} {\bibfnamefont {M.}~\bibnamefont {Murao}},\ }\bibfield  {title} {\bibinfo {title} {{Success-or-Draw: A Strategy Allowing Repeat-Until-Success in Quantum Computation}},\ }\href {https://doi.org/10.1103/PhysRevLett.126.150504} {\bibfield  {journal} {\bibinfo  {journal} {Phys. Rev. Lett.}\ }\textbf {\bibinfo {volume} {126}},\ \bibinfo {pages} {150504} (\bibinfo {year} {2021})},\ \Eprint {https://arxiv.org/abs/2011.01055} {arXiv:2011.01055} \BibitemShut {NoStop}%
\bibitem [{\citenamefont {Miyazaki}\ \emph {et~al.}(2019)\citenamefont {Miyazaki}, \citenamefont {Soeda},\ and\ \citenamefont {Murao}}]{miyazaki2019complex}%
  \BibitemOpen
  \bibfield  {author} {\bibinfo {author} {\bibfnamefont {J.}~\bibnamefont {Miyazaki}}, \bibinfo {author} {\bibfnamefont {A.}~\bibnamefont {Soeda}},\ and\ \bibinfo {author} {\bibfnamefont {M.}~\bibnamefont {Murao}},\ }\bibfield  {title} {\bibinfo {title} {{Complex conjugation supermap of unitary quantum maps and its universal implementation protocol}},\ }\href {https://doi.org/10.1103/PhysRevResearch.1.013007} {\bibfield  {journal} {\bibinfo  {journal} {Phys. Rev. Res.}\ }\textbf {\bibinfo {volume} {1}},\ \bibinfo {pages} {013007} (\bibinfo {year} {2019})},\ \Eprint {https://arxiv.org/abs/1706.03481} {arXiv:1706.03481} \BibitemShut {NoStop}%
\bibitem [{\citenamefont {Grinko}\ and\ \citenamefont {Ozols}(2024)}]{grinko2024linear}%
  \BibitemOpen
  \bibfield  {author} {\bibinfo {author} {\bibfnamefont {D.}~\bibnamefont {Grinko}}\ and\ \bibinfo {author} {\bibfnamefont {M.}~\bibnamefont {Ozols}},\ }\bibfield  {title} {\bibinfo {title} {Linear programming with unitary-equivariant constraints},\ }\href {https://doi.org/10.1007/s00220-024-05108-1} {\bibfield  {journal} {\bibinfo  {journal} {Commun. Math. Phys.}\ }\textbf {\bibinfo {volume} {405}},\ \bibinfo {pages} {278} (\bibinfo {year} {2024})},\ \Eprint {https://arxiv.org/abs/2207.05713} {arXiv:2207.05713} \BibitemShut {NoStop}%
\bibitem [{\citenamefont {Chiribella}\ \emph {et~al.}(2008{\natexlab{b}})\citenamefont {Chiribella}, \citenamefont {D’Ariano},\ and\ \citenamefont {Perinotti}}]{chiribella2008quantum}%
  \BibitemOpen
  \bibfield  {author} {\bibinfo {author} {\bibfnamefont {G.}~\bibnamefont {Chiribella}}, \bibinfo {author} {\bibfnamefont {G.~M.}\ \bibnamefont {D’Ariano}},\ and\ \bibinfo {author} {\bibfnamefont {P.}~\bibnamefont {Perinotti}},\ }\bibfield  {title} {\bibinfo {title} {{Quantum circuit architecture}},\ }\href {https://doi.org/10.1103/PhysRevLett.101.060401} {\bibfield  {journal} {\bibinfo  {journal} {Phys. Rev. Lett.}\ }\textbf {\bibinfo {volume} {101}},\ \bibinfo {pages} {060401} (\bibinfo {year} {2008}{\natexlab{b}})},\ \Eprint {https://arxiv.org/abs/0712.1325} {arXiv:0712.1325} \BibitemShut {NoStop}%
\bibitem [{\citenamefont {Montanaro}\ and\ \citenamefont {Wolf}(2016)}]{montanaro2013survey}%
  \BibitemOpen
  \bibfield  {author} {\bibinfo {author} {\bibfnamefont {A.}~\bibnamefont {Montanaro}}\ and\ \bibinfo {author} {\bibfnamefont {R.~d.}\ \bibnamefont {Wolf}},\ }\href {https://doi.org/10.4086/toc.gs.2016.007} {\emph {\bibinfo {title} {{A Survey of Quantum Property Testing}}}},\ \bibinfo {series} {Graduate Surveys}\ No.~\bibinfo {number} {7}\ (\bibinfo  {publisher} {Theory of Computing Library},\ \bibinfo {year} {2016})\ pp.\ \bibinfo {pages} {1--81},\ \Eprint {https://arxiv.org/abs/1310.2035} {arXiv:1310.2035} \BibitemShut {NoStop}%
\bibitem [{\citenamefont {Harrow}\ \emph {et~al.}(2017)\citenamefont {Harrow}, \citenamefont {Lin},\ and\ \citenamefont {Montanaro}}]{harrow2017sequential}%
  \BibitemOpen
  \bibfield  {author} {\bibinfo {author} {\bibfnamefont {A.~W.}\ \bibnamefont {Harrow}}, \bibinfo {author} {\bibfnamefont {C.~Y.-Y.}\ \bibnamefont {Lin}},\ and\ \bibinfo {author} {\bibfnamefont {A.}~\bibnamefont {Montanaro}},\ }\bibfield  {title} {\bibinfo {title} {Sequential measurements, disturbance and property testing},\ }in\ \href {https://doi.org/10.1137/1.9781611974782.105} {\emph {\bibinfo {booktitle} {Proceedings of the Twenty-Eighth Annual ACM-SIAM Symposium on Discrete Algorithms}}}\ (\bibinfo {organization} {SIAM},\ \bibinfo {year} {2017})\ pp.\ \bibinfo {pages} {1598--1611},\ \Eprint {https://arxiv.org/abs/1607.03236} {arXiv:1607.03236} \BibitemShut {NoStop}%
\bibitem [{\citenamefont {Gavorov{\'a}}\ \emph {et~al.}(2024)\citenamefont {Gavorov{\'a}}, \citenamefont {Seidel},\ and\ \citenamefont {Touati}}]{gavorova2024topological}%
  \BibitemOpen
  \bibfield  {author} {\bibinfo {author} {\bibfnamefont {Z.}~\bibnamefont {Gavorov{\'a}}}, \bibinfo {author} {\bibfnamefont {M.}~\bibnamefont {Seidel}},\ and\ \bibinfo {author} {\bibfnamefont {Y.}~\bibnamefont {Touati}},\ }\bibfield  {title} {\bibinfo {title} {Topological obstructions to quantum computation with unitary oracles},\ }\href {https://doi.org/10.1103/PhysRevA.109.032625} {\bibfield  {journal} {\bibinfo  {journal} {Phys. Rev. A}\ }\textbf {\bibinfo {volume} {109}},\ \bibinfo {pages} {032625} (\bibinfo {year} {2024})},\ \Eprint {https://arxiv.org/abs/2011.10031} {arXiv:2011.10031} \BibitemShut {NoStop}%
\bibitem [{\citenamefont {Choi}(1975)}]{choi1975completely}%
  \BibitemOpen
  \bibfield  {author} {\bibinfo {author} {\bibfnamefont {M.-D.}\ \bibnamefont {Choi}},\ }\bibfield  {title} {\bibinfo {title} {{Completely positive linear maps on complex matrices}},\ }\href {https://doi.org/10.1016/0024-3795(75)90075-0} {\bibfield  {journal} {\bibinfo  {journal} {Linear Algebra Appl.}\ }\textbf {\bibinfo {volume} {10}},\ \bibinfo {pages} {285} (\bibinfo {year} {1975})}\BibitemShut {NoStop}%
\bibitem [{\citenamefont {Jamio{\l}kowski}(1972)}]{jamiolkowski1972linear}%
  \BibitemOpen
  \bibfield  {author} {\bibinfo {author} {\bibfnamefont {A.}~\bibnamefont {Jamio{\l}kowski}},\ }\bibfield  {title} {\bibinfo {title} {{Linear transformations which preserve trace and positive semidefiniteness of operators}},\ }\href {https://doi.org/10.1016/0034-4877(72)90011-0} {\bibfield  {journal} {\bibinfo  {journal} {Rep. Math. Phys.}\ }\textbf {\bibinfo {volume} {3}},\ \bibinfo {pages} {275} (\bibinfo {year} {1972})}\BibitemShut {NoStop}%
\bibitem [{sup()}]{supple}%
  \BibitemOpen
  \href@noop {} {}\bibinfo {note} {See Supplemental Material for details.}\BibitemShut {Stop}%
\bibitem [{\citenamefont {Fang}\ \emph {et~al.}(2006)\citenamefont {Fang}, \citenamefont {Fenner}, \citenamefont {Green}, \citenamefont {Homer},\ and\ \citenamefont {Zhang}}]{fang2003quantum}%
  \BibitemOpen
  \bibfield  {author} {\bibinfo {author} {\bibfnamefont {M.}~\bibnamefont {Fang}}, \bibinfo {author} {\bibfnamefont {S.}~\bibnamefont {Fenner}}, \bibinfo {author} {\bibfnamefont {F.}~\bibnamefont {Green}}, \bibinfo {author} {\bibfnamefont {S.}~\bibnamefont {Homer}},\ and\ \bibinfo {author} {\bibfnamefont {Y.}~\bibnamefont {Zhang}},\ }\bibfield  {title} {\bibinfo {title} {Quantum lower bounds for fanout},\ }\href {https://doi.org/10.5555/2011679.2011682} {\bibfield  {journal} {\bibinfo  {journal} {Quantum Info. Comput.}\ }\textbf {\bibinfo {volume} {6}},\ \bibinfo {pages} {46} (\bibinfo {year} {2006})},\ \Eprint {https://arxiv.org/abs/quant-ph/0312208} {arXiv:quant-ph/0312208} \BibitemShut {NoStop}%
\bibitem [{\citenamefont {Bavaresco}\ \emph {et~al.}(2025)\citenamefont {Bavaresco}, \citenamefont {Kristj{\'a}nsson}, \citenamefont {Murao}, \citenamefont {Odake}, \citenamefont {Quintino}, \citenamefont {Taranto},\ and\ \citenamefont {Yoshida}}]{bavaresco2025simulating}%
  \BibitemOpen
  \bibfield  {author} {\bibinfo {author} {\bibfnamefont {J.}~\bibnamefont {Bavaresco}}, \bibinfo {author} {\bibfnamefont {H.}~\bibnamefont {Kristj{\'a}nsson}}, \bibinfo {author} {\bibfnamefont {M.}~\bibnamefont {Murao}}, \bibinfo {author} {\bibfnamefont {T.}~\bibnamefont {Odake}}, \bibinfo {author} {\bibfnamefont {M.~T.}\ \bibnamefont {Quintino}}, \bibinfo {author} {\bibfnamefont {P.}~\bibnamefont {Taranto}},\ and\ \bibinfo {author} {\bibfnamefont {S.}~\bibnamefont {Yoshida}},\ }\bibfield  {title} {\bibinfo {title} {Simulating the quantum switch with quantum circuits is computationally hard},\ }\href {https://doi.org/10.1038/s41467-025-64996-6} {\bibfield  {journal} {\bibinfo  {journal} {Nat. Commun.}\ }\textbf {\bibinfo {volume} {16}},\ \bibinfo {pages} {10216} (\bibinfo {year} {2025})},\ \Eprint {https://arxiv.org/abs/2409.18202} {arXiv:2409.18202} \BibitemShut {NoStop}%
\bibitem [{\citenamefont {Chen}\ \emph {et~al.}(2025)\citenamefont {Chen}, \citenamefont {Yu},\ and\ \citenamefont {Zhang}}]{chen2025tight}%
  \BibitemOpen
  \bibfield  {author} {\bibinfo {author} {\bibfnamefont {K.}~\bibnamefont {Chen}}, \bibinfo {author} {\bibfnamefont {N.}~\bibnamefont {Yu}},\ and\ \bibinfo {author} {\bibfnamefont {Z.}~\bibnamefont {Zhang}},\ }\bibfield  {title} {\bibinfo {title} {{Tight Bound for Quantum Unitary Time-Reversal}},\ }\Eprint {https://arxiv.org/abs/2507.05736} {arXiv:2507.05736}  (\bibinfo {year} {2025})\BibitemShut {NoStop}%
\end{thebibliography}%

\clearpage

\renewcommand{\thetable}{S\arabic{table}}
\renewcommand{\thefigure}{S\arabic{figure}}
\renewcommand{\theTheorem}{S\arabic{Theorem}}
\renewcommand{\thelem}{S\arabic{lem}}
\setcounter{equation}{0}
\setcounter{table}{0}
\setcounter{figure}{0}

\appendix
\onecolumngrid

\begin{center}
    \textbf{\large Supplemental Material for ``Analytical Lower Bound on Query Complexity for Transformations of Unknown Unitary Operations''}
\end{center}
\twocolumngrid

This Supplemental Material for ``Analytical Lower Bound on Query Complexity for Transformations of Unknown Unitary Operations'' is organized as follows.
Appendix~\ref{app::theo1} provides the proof of Thm.~\ref{th::main_sdp} and Cor.~\ref{cor:logdepth} in the main text, which show a lower bound of query complexity of a differentiable function $f$ in terms of the semidefinite programming (SDP).
Appendix~\ref{app:prooftable} provides the proof of Cor.~\ref{cor::inversion-transposistion-conjugation} in the main text, which shows a lower bound of query complexity of unitary inversion, transposition, and complex conjugation.
In addition, we also analyze unitary iteration.
Appendix~\ref{app::subgroup} shows an SDP providing a lower bound of query complexity for the partially-known setting, where the input unitary operation is promised to be within a subgroup of $\SU(d)$.
Appendix~\ref{app::prob} shows an SDP providing a lower bound of query complexity that provides a tradeoff between query complexity and success probability.
Appendix~\ref{app::duals} derives dual SDPs for the SDPs presented in this work.

\section{Proof of Thm.~\ref{th::main_sdp} and Cor.~\ref{cor:logdepth} in the main text}\label{app::theo1}

\subsection{Derivation of equations}
Before proving the theorem, we first derive the key equations used in the proof.
\begin{figure}[H]
    \centering
    \includegraphics[width=1.0\linewidth]{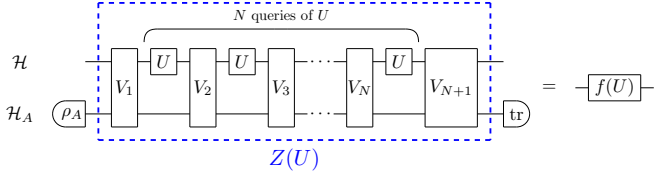}
    \caption{Quantum circuit implementing $f(U)$ using $N$ queries of an unknown unitary operation $U$. The upper and lower lines represent the main system $\mcH$ and the auxiliary system $\mcH_A$, respectively, $\rho_A$ is a quantum state of the auxiliary system, and $V_1,\ldots ,V_{N+1}$ are unitary operators on the composite system. $Z(U)$ is the unitary operation corresponding to the circuit excluding $\rho_A$ and tracing out.}
    \label{fig::lowerbound_comb}
\end{figure}

In the quantum circuit model of quantum computation, a transformation $f: \SU(d) \rightarrow \SU(d)$ of an unknown unitary operation $U \in  \SU(d)$ on a system $\mcH$ can always be represented by a fixed-order quantum circuit (quantum comb) with $N$-slots for querying $U$ shown in the left-hand side of Fig.~\ref{fig::lowerbound_comb}, as shown in \cite{chiribella2008quantum}, where $\rho_A$ is a quantum state of the auxiliary system $\mcH_A$ and $V_1, \ldots, V_{N+1}$ are unitary operators on $\mcH \otimes \mcH_A$.
The state $\rho_A$ in Fig.~\ref{fig::lowerbound_comb} can be taken as $\ketbra{0}{0}$ where $\ket{0}$ is one of the basis state in the computational basis $\{\ket{j}\}_j$ of $\mcH_A$ without loss of generality.
Defining a unitary operator $Z(U)$ on $\mcH\otimes \mcH_A$ by
\begin{align}
    Z(U) \coloneqq V_{N+1}\left(\prod_{j=1}^{N}(U\otimes I)V_j\right),
\end{align}
the state $Z(U)(\ket{\psi}\otimes \ket{0})$ for $\ket{\psi}\in \mcH$ has to satisfy
\begin{align}
    \Tr_{\mcH_A}[Z(U)(\ketbra{\psi}\otimes \ketbra{0})Z(U)^\dagger] = f(U)\ketbra{\psi}{\psi}f(U)^{\dagger}.
\end{align}
Therefore, we obtain
\begin{align}
\label{eq::one_side}
Z(U) \ket{\psi}\otimes \ket{0} &= 
V_{N+1}\left(\prod_{j=1}^{N}(U\otimes I)V_j\right)[\ket{\psi}\otimes \ket{0}]
\nonumber \\
&=f(U)\ket{\psi}\otimes \ket{\phi (U)},
\end{align}
where $\ket{\phi (U)}\in \mcH_A$ is a $U$-dependent state. Note that $\ket{\phi (U)}$ is independent of the input state $\ket{\psi}$ of $\mcH$, since if $\ket{\phi(U)}$ for the two input states $\ket{\psi}$ and $\ket{\psi'}$ are different under the same $U$, then the output state is no longer a product state if the input is taken proportional to $\ket{\psi}+\ket{\psi'}$, which contradicts Eq.~(\ref{eq::one_side}).

Finally, we add additional $U_0$-dependent gates $f(U_0)^{-1}\otimes W_{U_0}$ at the end of the quantum circuit, where $W_{U_0}$ is a unitary operation satisfying $W_{U_0}\ket{\phi (U_0)}=\ket{0}$, so that Eq.~(\ref{eq::one_side}) can be rewritten using $\tilde{V}_{N+1}(U_0)\coloneqq(f(U_0)^{-1}\otimes W_{U_0})V_{N+1}$ as
\begin{align}\label{eq::deepest_preprocessing}
    &\tilde{V}_{N+1}(U_0)
    \left(\prod_{j=1}^{N}(U\otimes I)V_j\right)[\ket{\psi}\otimes \ket{0}]
    \nonumber\\
    =&f(U_0)^{-1}f(U)\ket{\psi}\otimes W_{U_0}\ket{\phi (U)}.
\end{align}
Although Eq.~(\ref{eq::deepest_preprocessing}) looks more complicated than Eq.~(\ref{eq::one_side}), Eq.~(\ref{eq::deepest_preprocessing}) has a property that simplifies the expression in the neighborhood of $U=U_0$. 
In particular, taking $U=U_0$ gives
\begin{align}\label{eq::U=U_0}
    &\tilde{V}_{N+1}(U_0)\left(\prod_{j=1}^{N}(U_0\otimes I)V_j\right)[\ket{\psi}\otimes \ket{0}]=\ket{\psi}\otimes \ket{0}.
\end{align}
Moreover, by taking $U=(I+i\epsilon H+O(\epsilon^2))U_0$ ($H$: an Hermitian operator, $\epsilon\ll 1$) and considering the first-order $\epsilon$ terms, we can obtain another equation, which is used in the proof of Thm.~\ref{th::main_sdp} together with Eq.~(\ref{eq::U=U_0}).
Equation~(\ref{eq::U=U_0}) is illustrated by the quantum circuit shown in Fig.~\ref{fig::lowerbound_simplified}.

\begin{figure}[H]
    \centering
    \includegraphics[width=1.0\linewidth]{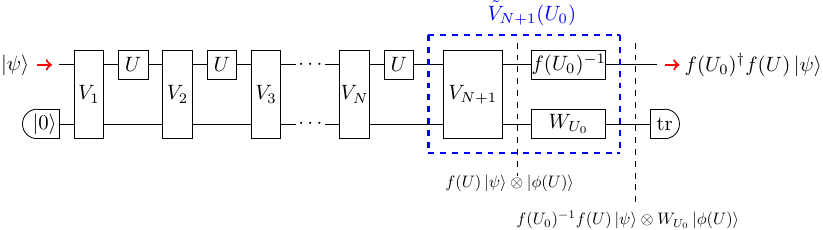}
    \caption{Quantum circuit in Fig.~\ref{fig::lowerbound_comb} is transformed to this circuit in order to simplify the expression in the neighborhood of $U=U_0$.}
    \label{fig::lowerbound_simplified}
\end{figure}

\subsection{Lemmas for proving the Thm.~\ref{th::main_sdp}}
We now prove lemmas used in the proof of Thm.~\ref{th::main_sdp}.
\begin{lem}\label{le::Vlefrig}
Let us define $V^{(s,{\rm left})}(U_0)$ and $V^{(s,{\rm right})}(U_0)$ ($s\in \{1,\ldots ,N\}$) as
\begin{figure}[H]
    \centering
    \includegraphics[width=0.8\linewidth]{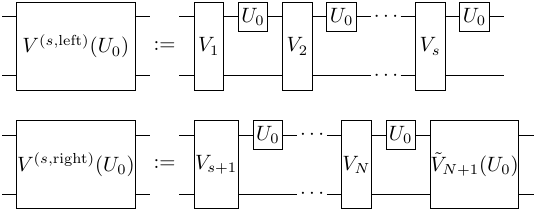}
\end{figure}
\noindent so that for all $s\in \{1,\ldots ,N\}$, we have
\begin{align}
    V^{(s,{\rm right})}(U_0)V^{(s,{\rm left})}(U_0)=\tilde{V}_{N+1}(U_0)\left(\prod_{j=1}^{N}(U_0\otimes I)V_j\right).
\end{align}
Let us also define $M^{(s,{\rm left})}_{j,k}(U_0)$ and $ M^{(s,{\rm right})}_{j,k}(U_0)$ as the $\ketbra{j}{k}$-auxiliary-block of $V^{(s,{\rm left})}(U_0)$ and $V^{(s,{\rm right})}(U_0)$, respectively, namely,
\begin{align}\label{eq::main_M_def}
    \begin{split}
    &V^{(s,{\rm left})}(U_0)\eqqcolon\sum_{j,k}
    M^{(s,{\rm left})}_{j,k}(U_0)\otimes \ketbra{j}{k}\\
    &V^{(s,{\rm right})}(U_0)\eqqcolon\sum_{j,k}
    M^{(s,{\rm right})}_{j,k}(U_0)\otimes \ketbra{j}{k}.
    \end{split}
\end{align}
Then, we obtain
\begin{align}\label{eq::righttoleft}
    M^{(s,{\rm right})}_{0,j}(U_0)^{\dagger}
    =
    M^{(s,{\rm left})}_{j,0}(U_0).
\end{align}
\end{lem}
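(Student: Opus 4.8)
The plan is to derive the claimed adjoint relation directly from Eq.~(\ref{eq::U=U_0}) together with the unitarity of $V^{(s,{\rm right})}(U_0)$. Combining the defining identity in the lemma statement, $V^{(s,{\rm right})}(U_0)V^{(s,{\rm left})}(U_0)=\tilde{V}_{N+1}(U_0)\left(\prod_{j=1}^{N}(U_0\otimes I)V_j\right)$, with Eq.~(\ref{eq::U=U_0}) gives, for every $\ket{\psi}\in\mcH$,
\begin{align}
V^{(s,{\rm right})}(U_0)V^{(s,{\rm left})}(U_0)(\ket{\psi}\otimes\ket{0}) = \ket{\psi}\otimes\ket{0}.
\end{align}
Since $V^{(s,{\rm right})}(U_0)$ is unitary, I would multiply on the left by $V^{(s,{\rm right})}(U_0)^\dagger$ to obtain the pivotal equation
\begin{align}
V^{(s,{\rm left})}(U_0)(\ket{\psi}\otimes\ket{0}) = V^{(s,{\rm right})}(U_0)^\dagger(\ket{\psi}\otimes\ket{0}),
\end{align}
which already isolates the two objects whose auxiliary blocks I want to compare.

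Next I would expand both sides in the auxiliary-block form of Eq.~(\ref{eq::main_M_def}). The left-hand side collapses to $\sum_j M^{(s,{\rm left})}_{j,0}(U_0)\ket{\psi}\otimes\ket{j}$, since only the $k=0$ term survives against $\ket{0}$. For the right-hand side, taking the adjoint of the block decomposition sends each summand $M^{(s,{\rm right})}_{j,k}(U_0)\otimes\ketbra{j}{k}$ to $M^{(s,{\rm right})}_{j,k}(U_0)^\dagger\otimes\ketbra{k}{j}$, so that acting on $\ket{\psi}\otimes\ket{0}$ leaves only the $j=0$ (outer) index, giving $\sum_j M^{(s,{\rm right})}_{0,j}(U_0)^\dagger\ket{\psi}\otimes\ket{j}$. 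Projecting both expansions onto a fixed auxiliary basis vector $\ket{j}$ and using that $\ket{\psi}$ is arbitrary then yields $M^{(s,{\rm left})}_{j,0}(U_0)=M^{(s,{\rm right})}_{0,j}(U_0)^\dagger$, which is exactly Eq.~(\ref{eq::righttoleft}).

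The argument is short, and the only genuine subtlety — the step I would treat most carefully — is the index bookkeeping when taking the adjoint of the block decomposition: the dagger simultaneously conjugates each operator block $M^{(s,{\rm right})}_{j,k}$ and transposes the auxiliary operator $\ketbra{j}{k}\mapsto\ketbra{k}{j}$, and it is precisely this transposition that swaps the roles of the two auxiliary indices and produces the surviving $(0,j)$ block on the right-hand side. Everything else follows routinely from the unitarity of $V^{(s,{\rm right})}(U_0)$ and the fact that the auxiliary system is initialized in, and returned to, the state $\ket{0}$ (the latter being guaranteed by the $W_{U_0}$ gate folded into $\tilde V_{N+1}(U_0)$).
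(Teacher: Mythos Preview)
Your proof is correct, and it is noticeably more direct than the paper's own argument. The paper does not invert $V^{(s,\mathrm{right})}(U_0)$; instead it first derives
\[
\sum_j M^{(s,\mathrm{right})}_{0,j}(U_0)\,M^{(s,\mathrm{left})}_{j,0}(U_0)=I,
\]
takes traces to interpret this as a Hilbert--Schmidt-type inner product of the two families $\{M^{(s,\mathrm{right})}_{0,j}(U_0)^\dagger\}_j$ and $\{M^{(s,\mathrm{left})}_{j,0}(U_0)\}_j$, computes both norms to be $d$ from the unitarity of $V^{(s,\mathrm{left})}$ and $V^{(s,\mathrm{right})}$, and then invokes the equality case of Cauchy--Schwarz to conclude the two families coincide. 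Your one-line observation that unitarity of $V^{(s,\mathrm{right})}(U_0)$ lets you pass from $V^{(s,\mathrm{right})}V^{(s,\mathrm{left})}(\ket{\psi}\otimes\ket{0})=\ket{\psi}\otimes\ket{0}$ directly to $V^{(s,\mathrm{left})}(\ket{\psi}\otimes\ket{0})=V^{(s,\mathrm{right})}(U_0)^\dagger(\ket{\psi}\otimes\ket{0})$ bypasses all of this. The trade-off is that the paper's inner-product machinery is what carries over to the probabilistic setting in Appendix~\ref{app::prob}, where the product equals $e^{i\theta(U_0)}\sqrt{p(U_0)}\,I$ rather than $I$ and one can no longer simply invert; but for the deterministic lemma as stated, your route is cleaner.
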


\textbf{Proof: }
Equation~(\ref{eq::U=U_0}) can be rewritten as
\begin{align}
    I
    &=(I\otimes \bra{0})\left[\tilde{V}_{N+1}(U_0)\left(\prod_{j=1}^{N}(U_0\otimes I)V_j\right)\right](I\otimes \ket{0})\nonumber\\
    &=\left(\sum_{\ell}
    M^{(s,{\rm right})}_{0,\ell}(U_0)\otimes \bra{\ell}
    \right)
    \left(\sum_{j}
    M^{(s,{\rm left})}_{j,0}(U_0)\otimes \ket{j}
    \right)\nonumber\\
    &=
    \sum_j M^{(s,{\rm right})}_{0,j}(U_0)
    M^{(s,{\rm left})}_{j,0}(U_0)\quad (s\in \{1,\ldots ,N\}) ,
\end{align}
thus, by taking the trace, we obtain
\begin{align}\label{eq::leftright2}
    \sum_j {\rm tr}\left(M^{(s,{\rm right})}_{0,j}(U_0)
    M^{(s,{\rm left})}_{j,0}(U_0)\right)
    ={\rm tr}I=d.
\end{align}
Note that the left-hand side of Eq.~(\ref{eq::leftright2}) can be seen as an inner product on the set of linear operators. Namely, by defining the inner product $(\{A_j\}_j, \{B_k\}_k)$ of two sets $\{A_j\}_j$ and $\{B_k\}_k$ (indices $j,k$ are taken from the same sets) of linear operators on $\mcL(\mcH)$ as
\begin{align}
    (\{A_j\}_j, \{B_k\}_k)\coloneqq\sum_j{\rm tr} A_j^{\dagger}B_j,
\end{align}
which can be seen as a straightforward extension of the Hilbert-Schmidt inner product, Eq.~(\ref{eq::leftright2}) can be rewritten as
\begin{align}\label{eq::leftright}
    \left(\{M^{(s,{\rm right})}_{0,\ell}(U_0)^{\dagger}\}_{\ell}, \{M^{(s,{\rm left})}_{j,0}(U_0)\}_j\right)
    = d.
\end{align}
On the other hand, similar equations involving the inner product can be obtained from the unitary operators $V^{(s,{\rm left})}(U_0)$ and $V^{(s,{\rm right})}(U_0)$, namely, 
\begin{align}
    I
    &=(I\otimes \bra{0})[V^{(s,{\rm left})}(U_0)^{\dagger}V^{(s,{\rm left})}(U_0)](I\otimes \ket{0})
    \nonumber\\
    &=\sum_j M^{(s,{\rm left})}_{j,0}(U_0)^{\dagger}M^{(s,{\rm left})}_{j,0}(U_0),\\
    I
    &=(I\otimes \bra{0})[V^{(s,{\rm right})}(U_0)V^{(s,{\rm right})}(U_0)^{\dagger}](I\otimes \ket{0})\nonumber\\
    &=\sum_{\ell}M^{(s,{\rm right})}_{0,\ell}(U_0)M^{(s,{\rm right})}_{0,\ell}(U_0)^{\dagger},
\end{align}
thus
\begin{align}
\label{eq::leftleft}
    \left(\{M^{(s,{\rm left})}_{j,0}(U_0)\}_{j}, \{M^{(s,{\rm left})}_{j,0}(U_0)\}_j\right)
    &= d,\\
\label{eq::rightright}
    \left(\{M^{(s,{\rm right})}_{0,\ell}(U_0)^{\dagger}\}_{\ell}, \{M^{(s,{\rm right})}_{0,\ell}(U_0)^{\dagger}\}_{\ell}\right)
    &= d
\end{align}
hold.
By combining Eqs.~(\ref{eq::leftright}), (\ref{eq::leftleft}), and (\ref{eq::rightright}), we obtain
\begin{align}
    &\left|\left(\{M^{(s,{\rm right})}_{0,\ell}(U_0)^{\dagger}\}_{\ell}, \{M^{(s,{\rm left})}_{j,0}(U_0)\}_j\right)\right|^2
    \nonumber\\
    =& 
    \left(\{M^{(s,{\rm right})}_{0,\ell}(U_0)^{\dagger}\}_{\ell}, \{M^{(s,{\rm right})}_{0,\ell}(U_0)^{\dagger}\}_{\ell}\right)
    \cdot
    \nonumber\\
    &\left(\{M^{(s,{\rm left})}_{j,0}(U_0)\}_{j}, \{M^{(s,{\rm left})}_{j,0}(U_0)\}_j\right).
\end{align}
Since the equality of the Cauchy-Schwarz inequality only holds when 
\begin{align}
    M^{(s,{\rm right})}_{0,j}(U_0)^{\dagger}
    \propto 
    M^{(s,{\rm left})}_{j,0}(U_0)
\end{align}
and $\{M^{(s,{\rm right})}_{0,\ell}(U_0)^{\dagger}\}_{\ell}$ and $\{M^{(s,{\rm left})}_{j,0}(U_0)\}_{j}$ have the same norm, we obtain
\begin{align}
    M^{(s,{\rm right})}_{0,j}(U_0)^{\dagger}
    =
    M^{(s,{\rm left})}_{j,0}(U_0).
\end{align} \qed

By using Lem.~\ref{le::Vlefrig}, we obtain the following Lemma.

\begin{lem}\label{le::CP_comb}
    The map $\mcE_{U_0}$ defined as
    \begin{align}\label{eq::eU0_def_orig}
        \mcE_{U_0}(H)=
        \sum_{s=1}^{N}
        \sum_j
        (M^{(s,{\rm left})}_{j,0}(U_0))^{\dagger} H (M^{(s,{\rm left})}_{j,0}(U_0))
    \end{align}
    satisfies
    \begin{align}\label{eq::comb_vs_f}
    \begin{cases}
        \mcE_{U_0}(I)=NI\\
        \mcE_{U_0}(H)=g_{U_0}(H)+\alpha_{U_0}(H)I&(H\in \su (d))
    \end{cases}
    \end{align}
    for a linear map $\alpha_{U_0}:\su(d)\to \mathbb{R}$.
\end{lem}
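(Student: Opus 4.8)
The plan is to establish the two identities in Eq.~\eqref{eq::comb_vs_f} in turn, the first by a direct isometry argument and the second by differentiating Eq.~\eqref{eq::deepest_preprocessing}. The normalization $\mcE_{U_0}(I)=NI$ is immediate: unitarity of $V^{(s,{\rm left})}(U_0)$ sandwiched between $I\otimes\bra{0}$ and $I\otimes\ket{0}$ gives $\sum_j (M^{(s,{\rm left})}_{j,0}(U_0))^\dagger M^{(s,{\rm left})}_{j,0}(U_0)=I$ for each $s$ (this operator identity is exactly the one computed en route to Eq.~\eqref{eq::leftleft} in the proof of Lem.~\ref{le::Vlefrig}). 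Setting $H=I$ in the definition Eq.~\eqref{eq::eU0_def_orig} and summing over $s\in\{1,\ldots,N\}$ then yields $\mcE_{U_0}(I)=NI$.

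For the second identity, I would differentiate Eq.~\eqref{eq::deepest_preprocessing} with $U=e^{i\epsilon H}U_0$ at $\epsilon=0$, for fixed $H\in\su(d)$ and arbitrary $\ket{\psi}\in\mcH$. On the left-hand side the product rule applied to the $N$ identical factors $(U\otimes I)$ yields a sum over the query position $s$; writing $\frac{\rm d}{{\rm d}\epsilon}(e^{i\epsilon H}U_0\otimes I)|_{\epsilon=0}=(iH\otimes I)(U_0\otimes I)$ and grouping the residual $(U_0\otimes I)$ together with all gates on the input side into $V^{(s,{\rm left})}(U_0)$ and all gates on the output side into $V^{(s,{\rm right})}(U_0)$, the $s$-th term equals $i\,V^{(s,{\rm right})}(U_0)(H\otimes I)V^{(s,{\rm left})}(U_0)[\ket{\psi}\otimes\ket{0}]$. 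Expanding in the auxiliary basis with Eq.~\eqref{eq::main_M_def} produces $i\sum_{s,j,k}(M^{(s,{\rm right})}_{k,j}(U_0)\,H\,M^{(s,{\rm left})}_{j,0}(U_0)\ket{\psi})\otimes\ket{k}$. On the right-hand side, the differentiability assumption Eq.~\eqref{eq:differentiation} gives $f(U_0)^{-1}f(e^{i\epsilon H}U_0)=e^{i\epsilon g_{U_0}(H)+O(\epsilon^2)}$, so the main-system factor contributes $i\,g_{U_0}(H)\ket{\psi}\otimes\ket{0}$, while the auxiliary factor contributes $\ket{\psi}\otimes\ket{\chi}$ with $\ket{\chi}\coloneqq\frac{\rm d}{{\rm d}\epsilon}W_{U_0}\ket{\phi(e^{i\epsilon H}U_0)}|_{\epsilon=0}$, using $W_{U_0}\ket{\phi(U_0)}=\ket{0}$ and $f(U_0)^{-1}f(U_0)=I$ at $\epsilon=0$.

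Next I would project both sides onto $I\otimes\bra{0}$. This keeps only the $k=0$ block on the left and only the $\ket{0}$-component of $\ket{\chi}$ on the right, and since the remaining vector identity holds for all $\ket{\psi}$ it becomes the operator equation $\sum_{s,j}M^{(s,{\rm right})}_{0,j}(U_0)\,H\,M^{(s,{\rm left})}_{j,0}(U_0)=g_{U_0}(H)+\alpha_{U_0}(H)I$, where $\alpha_{U_0}(H)=-i\langle 0|\chi\rangle$. Applying Lem.~\ref{le::Vlefrig}, i.e.\ Eq.~\eqref{eq::righttoleft}, to replace $M^{(s,{\rm right})}_{0,j}(U_0)$ by $(M^{(s,{\rm left})}_{j,0}(U_0))^\dagger$ turns the left-hand side into precisely $\mcE_{U_0}(H)$ as defined in Eq.~\eqref{eq::eU0_def_orig}, establishing the second line of Eq.~\eqref{eq::comb_vs_f}. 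To see that $\alpha_{U_0}$ is a well-defined real-linear functional on $\su(d)$, I would take the trace of this identity: since $g_{U_0}(H)\in\su(d)$ is traceless, $\alpha_{U_0}(H)=\frac1d\tr[\mcE_{U_0}(H)]$, which is linear in $H$ and real because $\mcE_{U_0}$ sends Hermitian operators to Hermitian operators.

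The main obstacle is the bookkeeping in the differentiation: correctly factoring the derivative of the $s$-th query into the insertion $(iH\otimes I)$ and the residual $(U_0\otimes I)$, matching the split to the definitions of $V^{(s,{\rm left})}(U_0)$ and $V^{(s,{\rm right})}(U_0)$, and tracking the auxiliary derivative $\ket{\chi}$, whose detailed form is unknown. The conceptual crux is the projection onto $\bra{0}$: it simultaneously isolates $g_{U_0}(H)$ and collapses the entire unknown auxiliary contribution into the single scalar $\alpha_{U_0}(H)$, so recognizing that only the $k=0$ block is needed---and that Lem.~\ref{le::Vlefrig} converts that block into the Kraus form of $\mcE_{U_0}$---is what makes the argument close.
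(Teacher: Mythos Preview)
Your proof is correct and follows essentially the same route as the paper: differentiate Eq.~\eqref{eq::deepest_preprocessing} via the product rule, project onto $I\otimes\bra{0}$, and invoke Lem.~\ref{le::Vlefrig} to convert the $M^{(s,{\rm right})}_{0,j}$ block into the Kraus form of $\mcE_{U_0}$, with $\mcE_{U_0}(I)=NI$ coming from unitarity of $V^{(s,{\rm left})}(U_0)$. The only small divergence is how you certify $\alpha_{U_0}(H)\in\mathbb{R}$: the paper differentiates the normalization $\|W_{U_0}\ket{\phi(e^{i\epsilon H}U_0)}\|^2=1$ to see that $\braket{0}{\chi}$ is purely imaginary, whereas you read it off from $\alpha_{U_0}(H)=\tfrac{1}{d}\tr\mcE_{U_0}(H)$ and the Hermiticity-preserving nature of $\mcE_{U_0}$---both arguments are valid and equally short.
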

Note that $\mcE_{U_0}$ is completely positive. From this property, the problem of finding the lower bound of $N$ can be reduced to the SDP.

\textbf{Proof: }
By substituting $U=e^{i\epsilon H}U_0$ ($H\in \su (d)$) to Eq.~(\ref{eq::deepest_preprocessing}) and taking the derivative by $\epsilon$ around $\epsilon=0$, we obtain
\begin{align}
\label{eq:a22}
    &\sum_{s=1}^N
    V^{(s, {\rm right})}(U_0)
    (iH\otimes I)
    V^{(s, {\rm left})}(U_0)
    [\ket{\psi}\otimes \ket{0}]\nonumber\\
    =&\left.\frac{{\rm d}}{{\rm d}\epsilon}\right|_{\epsilon=0}
    \left[
    \tilde{V}_{N+1}(U_0)
    \left(\prod_{j=1}^{N}(e^{i\epsilon H}U_0\otimes I)V_j\right)[\ket{\psi}\otimes \ket{0}]
    \right]
    \nonumber\\
    =&
    \left.\frac{{\rm d}}{{\rm d}\epsilon}\right|_{\epsilon=0}
    \left[
    f(U_0)^{-1}f(e^{i\epsilon H}U_0)\ket{\psi}\otimes W_{U_0}\ket{\phi (e^{i\epsilon H}U_0)}
    \right]
    \nonumber\\
    =&ig_{U_0}(H)\ket{\psi}\otimes \ket{0}
    +
    \ket{\psi}\otimes \left.\frac{{\rm d}}{{\rm d}\epsilon}\right|_{\epsilon=0}W_{U_0}\ket{\phi (e^{i\epsilon H}U_0)}
    .
\end{align}
The first equality is obtained using the product rule (Leibniz rule), namely the derivative is equal to the sum of the terms where the derivative is applied on the $s$-th $e^{i\epsilon H}U_0$ ($s\in \{1,\ldots ,N\}$) and $\epsilon\to 0$ (namely $e^{i\epsilon H}U_0\to U_0$) for the rest of terms.
Defining $\ket{\Phi (U_0,H)}\coloneqq \left.\frac{{\rm d}}{{\rm d}\epsilon}\right|_{\epsilon=0} W_{U_0}\ket{\phi (e^{i\epsilon H}U_0)}$ which is linear in $H$, we obtain
\begin{align}
    0
    &=\left.\frac{{\rm d}}{{\rm d}\epsilon}\right|_{\epsilon=0}
    \bra{\phi (e^{i\epsilon H}U_0)}
    W_{U_0}^{\dagger} W_{U_0}
    \ket{\phi (e^{i\epsilon H}U_0)}
    \nonumber\\
    &=\braket{0}{\Phi (U_0,H)}+\braket{\Phi (U_0,H)}{0}.
\end{align}
Thus, $\braket{0}{\Phi (U_0,H)}$ can be expressed as
\begin{align}
    \braket{0}{\Phi (U_0,H)}\coloneqq i\alpha_{U_0}(H)
\end{align}
using a linear map $\alpha_{U_0}:\mcL(\mcH)\to \mathbb{R}$.

Therefore, by applying $I\otimes \bra{0}$ from left on Eq.~\eqref{eq:a22}, we obtain
\begin{align}
    &\sum_{s=1}^N\sum_{j,\ell}\left(M^{(s,{\rm right})}_{0,\ell}(U_0)\otimes \bra{\ell}\right)(iH\otimes I)\left(M^{(s,{\rm left})}_{j,0}(U_0)\otimes \ket{j}\right)\nonumber\\
    &=ig_{U_0}(H)+i\alpha_{U_0}(U_0)I,
\end{align}
i.e.,
\begin{align}
    \mcE_{U_0}(H)
    &=\sum_{s=1}^N\sum_{j} 
    M^{(s,{\rm left})}_{j,0}(U_0)^{\dagger}HM^{(s,{\rm left})}_{j,0}(U_0)\nonumber\\
    &=\sum_{s=1}^N\sum_{j} 
    M^{(s,{\rm right})}_{0,j}(U_0)HM^{(s,{\rm left})}_{j,0}(U_0)\nonumber\\
    &= g_{U_0}(H)+\alpha_{U_0}(U_0)I,
\end{align}
where we use Eq.~\eqref{eq::righttoleft} in the second equality.
Since
\begin{align}
    \mcE_{U_0}(I)&=\sum_{s=1}^N\sum_{j} 
    M^{(s,{\rm left})}_{j,0}(U_0)^{\dagger} M^{(s,{\rm left})}_{j,0}(U_0)\nonumber\\
    &= \sum_{s=1}^{N}(I\otimes \bra{0})V^{(s,{\rm left})\dagger}(U_0) V^{(s,{\rm left})}(U_0)(I\otimes \ket{0})\nonumber\\
    &= NI
\end{align}
holds, we obtain Lem.~\ref{le::CP_comb}.
\qed

\subsection{Proof of Thm.~\ref{th::main_sdp}}
From Lem.~\ref{le::CP_comb}, we can show a lower bound on the query complexity of $f$ given by the following optimization problem:
\begin{align}
\begin{split}
    &\min N\\
    \text{s.t. }& \mcE_{U_0} \text{ is CP}, \alpha_{U_0}: \su(d) \to \su(d) \text{ is linear},\\
    & \mcE_{U_0}(I)=NI,\\
    & \mcE_{U_0}(H) = g_{U_0}(H) + \alpha_{U_0}(H)I \quad \forall H\in\su(d).
\end{split}
\label{eq:optimization}
\end{align}
By defining the Choi operator of $g_{U_0}$ by
\begin{align}
    J_{g_{U_0}} \coloneqq \sum_{j=1}^{d^2-1} B_j^* \otimes g_{U_0}(B_j),
\end{align}
for any orthonormal basis $\{B_j\}_j$ of $\su (d)$ and defining $\beta_{U_0}$ satisfying
\begin{align}
    &\tr(\beta_{U_0}^T I)=N
    \nonumber\\
    &\tr(\beta_{U_0}^T H)=\alpha_{U_0}(H)\quad (\forall H\in \su (d)),
\end{align}
the Choi operator of $\mcE_{U_0}$ is given by
\begin{align}
    J_{\mcE_{U_0}} = J_{g_{U_0}} + \beta_{U_0} \otimes I,
\end{align}
where $N$ is given by $N=\tr \beta_{U_0}$.  Thus, the optimization problem (\ref{eq:optimization}) is rewritten as
\begin{align}
\begin{split}
    &\min \tr \beta_{U_0}\\
    \text{s.t. }& \tilde{J}_{g_{U_0}} + \beta_{U_0} \otimes I \geq 0.
\end{split}
\label{eq:optimization_choi}
\end{align}
\qed

\subsection{Proof of Cor.~\ref{cor:logdepth}: Restriction of the input unitary to be the logarithmic-depth unitaries}
\label{appendix:logdepth}
Due to the linearity of the constraint~\eqref{eq:optimization} with respect to $H\in \su(d)$, Eq.~\eqref{eq:optimization} remains unchanged even if we restrict $H$ to be a basis $\{P_j\}_j$ of $\su(d)$, i.e., restricting the input unitaries to be $\{e^{-iP_j\theta}\}_{\theta\in (-\epsilon, \epsilon), j}$ for a sufficiently small $\epsilon>0$.
As shown in the main text, such unitaries can be implemented with logarithmic-depth quantum circuits composed of a single-qubit $X$-rotation with angle $\theta$, two multi-target-CNOT gates, and two layers of single-qubit Clifford gates.
See Fig.~\ref{fig:pauli_rotation} for an example of $P_j = X\otimes Y\otimes Z$.

\begin{figure}
    \centering
    \includegraphics{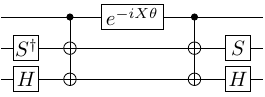}
    \caption{Quantum circuit for implementing $e^{-i(X\otimes Y\otimes Z)\theta}$ using a single-qubit $X$-rotation with angle $\theta$, two multi-target-CNOT gates, and two layers of single-qubit Clifford gates, where $H$ and $S$ are defined by $H\coloneqq {1\over \sqrt{2}}(\ketbra{0}{0}+\ketbra{0}{1}+\ketbra{1}{0}-\ketbra{1}{1})$, and $S\coloneqq \ketbra{0}{0}+i\ketbra{1}{1}$.}
    \label{fig:pauli_rotation}
\end{figure}

\section{Proof of Cor.~\ref{cor::inversion-transposistion-conjugation} in the main text}
\label{app:prooftable}
In this section, we show the lower bounds of the query complexity of unitary inversion, unitary transposition, and unitary complex conjugation shown in Cor.~\ref{cor::inversion-transposistion-conjugation} of the main text.
For an extra example, we also derive the lower bound of unitary iteration $f(U) = U^n$.

\subsection{Unitary inversion}
\noindent The primal SDP in Thm.~\ref{th::main_sdp}:
Since $g_{U_0}$ is given by $g_{U_0}(H)=-H$ for $H\in \su(d)$, $J_{g_{U_0}}$ is given by
\begin{align}
    J_{g_{U_0}}=
    -\dketbra{I}+\frac{1}{d}I\otimes I.
\end{align}
By setting $\beta_{U_0}= ((d^2-1)/d)I$, $\beta_{U_0}$ satisfies the SDP constraint since
\begin{align}
    &
    -\dketbra{I}+\frac{1}{d}I\otimes I
    +\frac{d^2-1}{d}I\otimes I
    \nonumber\\
    =&-\dketbra{I}+dI\otimes I\geq 0
\end{align}
{holds, and $\Tr \beta_{U_0}$ is given by}
 ${\rm tr}(\beta_{U_0})=d^2-1$.

This solution gives the minimum solution of the SDP, as shown below.
By taking the inner product of $\dketbra{I}$ with the SDP constraint given by
\begin{align}
    -\dketbra{I}+\frac{1}{d}I\otimes I + \beta_{U_0} \otimes I\geq 0,
\end{align}
we obtain
\begin{align}
    -d^2+1+\Tr\beta_{U_0} \geq 0,
\end{align}
i.e.,
\begin{align}
    \Tr\beta_{U_0} \geq d^2-1
\end{align}
holds. 

The lower bound $d^2-1$ can be made larger by $1$ by an extra discussion based on proof by contradiction. Suppose that inversion can be implemented by $d^2-1$ queries to $U$. Then there exists a $\beta_{U_0}$ with trace $d^2-1$ such that
\begin{align}
    J_{\mcE_{U_0}}=-\dketbra{I}+\frac{1}{d}I\otimes I +\beta_{U_0}\otimes I\geq 0.
\end{align}

On the other hand, $J_{\mcE_{U_0}}$ is originally defined as a Choi operator of $\mcE_{U_0}$ defined in Eq.~(\ref{eq::eU0_def_orig}), thus we have
\begin{align}
    &-\dketbra{I}+\frac{1}{d}I\otimes I +\beta_{U_0}\otimes I 
    \nonumber\\
    \geq &\sum_{j}\dketbra{M^{(1,{\rm left})}_{j,0}(U_0)^{\dagger}},
\end{align}
where $M^{(s,{\rm left})}_{j,0}(U_0)$ is defined as in Eq.~(\ref{eq::main_M_def}). Since $M^{(s, {\rm left})}_{j,0}(U_0)=U_0M^{(s, {\rm left})}_{j,0}(I)$ holds, we have
\begin{align}
    &-\dketbra{I}+\frac{1}{d}I\otimes I +\beta_{U_0}\otimes I 
    \nonumber\\
    \geq &
    (U_0^*\otimes I)\sum_{j}\dketbra{M^{(1,{\rm left})}_{j,0}(I)^{\dagger}}(U_0^{T}\otimes I).
\end{align}
By taking the average of $U_0$ over the Haar measure, we have
\begin{align}
    &-\dketbra{I}+\frac{1}{d}I\otimes I +\beta \otimes I 
    \nonumber\\
    \geq &
    \int {\rm d}U_0 (U_0^*\otimes I)\sum_{j}\dketbra{M^{(1,{\rm left})}_{j,0}(I)^{\dagger}}(U_0^{T}\otimes I)
    \nonumber\\
    =&\frac{1}{d}I\otimes \sum_j(M^{(1,{\rm left})}_{j,0}(I)^{\dagger} M^{(1,{\rm left})}_{j,0}(I))
    \nonumber\\
    =&\frac{1}{d}I\otimes I
    ,
\end{align}
where $\beta$ is the Haar average of $\beta_{U_0}$. The last equality follows from Eq.~(\ref{eq::leftleft}).
However, this inequality shows contradiction since
\begin{align}
    &\dbra{I}
    (-\dketbra{I}+\frac{1}{d}I\otimes I +\beta \otimes I )
    \dket{I}
    =0\\
    &<1 = \dbra{I}(\frac{1}{d}I\otimes I)\dket{I}
\end{align}
holds, and thus the assumption ${\rm tr}(\beta_{U_0}) = d^2-1$ is wrong.  Therefore, we obtain
${\rm tr}(\beta_{U_0})\neq d^2-1$, namely ${\rm tr}(\beta_{U_0})\geq d^2$.

\subsection{Unitary transposition}
\noindent The primal SDP in Thm.~\ref{th::main_sdp}:
Since {$g_{U_0}$ is given by} $g_{U_0}(H)=H^T$ for $H\in \su (d)$, $J_{g_{U_0}}$ is given by
\begin{align}
    J_{g_{U_0}}=\SWAP -\frac{1}{d}I\otimes I.
\end{align}
By setting $\beta_{U_0}=((d+1)/d)I$, {$\beta_{U_0}$ satisfies the SDP constraint since}
\begin{align}
    &\SWAP -\frac{1}{d}I\otimes I+
    \frac{d+1}{d}I\otimes I
    \nonumber\\
    =&2\Pi_{{\rm sym}} \geq 0
\end{align}
holds, 
where $\Pi_{\rm sym}$ is the projector onto the symmetric subspace, and $\Tr\beta_{U_0}$ is given by ${\rm tr}(\beta_{U_0})=d+1$. 

This solution gives the minimum solution of the SDP, as shown below.
We consider an orthogonal projector onto the antisymmetric subspace of $\CC^d\otimes \CC^d$ denoted by $\Pi_\mathrm{antisym}$.
By taking an inner product of $\Pi_\mathrm{antisym}$ with the SDP constraint given by
\begin{align}
    \SWAP -\frac{1}{d}I\otimes I + \beta_{U_0} \otimes I \geq 0,
\end{align}
we obtain
\begin{align}
    \Tr(\Pi_\mathrm{antisym}) (-d-1+\Tr\beta_{U_0})\geq 0,
\end{align}
i.e.,
\begin{align}
    \Tr \beta_{U_0}\geq d+1
\end{align}
holds.

The lower bound $d+1$ can be made larger by 1 ($d=2$) and 2 ($d\geq 3$) by an extra discussion. According to Eq.~(\ref{eq::eU0_def_orig}), $J_{\mcE_{U_0}}$ is expressed as
\begin{align}
    &J_{\mcE_{U_0}}=\SWAP -\frac{1}{d}I\otimes I + \beta_{U_0}\otimes I
    \nonumber\\
    =&\sum_{s=1}^{N}
    \sum_j\dketbra{M^{(s, {\rm left})}_{j,0}(U_0)^{\dagger}}.
\end{align}
Defining $Q^{(s)}_{j,k}$ as $V_s\eqqcolon\sum_{j,k}Q^{(s)}_{j,k}\otimes \ketbra{j}{k}$, we obtain
\begin{align}\label{eq::bef_twirl_trans}
    &\SWAP -\frac{1}{d}I\otimes I + \beta_{U_0}\otimes I
    \nonumber\\
    \geq&
    \sum_{s=1}^2\sum_j\dketbra{M^{(s, {\rm left})}_{j,0}(U_0)^{\dagger}}
    \nonumber\\
    =&
    \sum_j\dketbra{(Q^{(1)}_{j,0})^{\dagger}U_0^{\dagger}}
    \nonumber\\
    +&\sum_{j,k,\ell}
    \dket{(Q^{(1)}_{k,0})^{\dagger}U_0^{\dagger}(Q^{(2)}_{j,k})^{\dagger}U_0^{\dagger}}\dbra{(Q^{(1)}_{\ell,0})^{\dagger}U_0^{\dagger}(Q^{(2)}_{j,\ell})^{\dagger}U_0^{\dagger}}
    \nonumber\\
    =& (U_0^*\otimes I) \sum_j\dketbra{(Q^{(1)}_{j,0})^{\dagger}}
    (U_0^T\otimes I)
    \nonumber\\
    +&\sum_{j,k,\ell}
    (U_0^*\otimes (Q^{(1)}_{k,0})^{\dagger}U_0^{\dagger})
    \dket{(Q^{(2)}_{j,k})^{\dagger}}\dbra{(Q^{(2)}_{j,\ell})^{\dagger}}
    (U_0^T\otimes U_0Q^{(1)}_{\ell,0}) .
\end{align}
Taking the Haar average with $U_0$, the left-hand side of Eq.~(\ref{eq::bef_twirl_trans}) is rewritten as
\begin{align}\label{eq::q_takusan}
    &\frac{1}{d}I\otimes \sum_j (Q^{(1)}_{j,0})^{\dagger}Q^{(1)}_{j,0}
    \nonumber\\
    +&\frac{1}{d^2-1}\sum_{j,k,\ell}
    \left[
    {\rm tr}((Q^{(2)}_{j,k})^{\dagger}Q^{(2)}_{j,\ell})I\otimes (Q^{(1)}_{k,0})^{\dagger}Q^{(1)}_{\ell,0}
    \right.
    \nonumber\\
    &\quad\quad\quad\quad\quad -\frac{1}{d}I\otimes 
    (Q^{(1)}_{k,0})^{\dagger}Q^{(2)}_{j,\ell}(Q^{(2)}_{j,k})^{\dagger}Q^{(1)}_{\ell,0}
    \nonumber\\
    &\quad\quad\quad\quad\quad +
    \dket{(Q^{(1)}_{k,0})^{\dagger}Q^{(2)}_{j,\ell}}\dbra{(Q^{(1)}_{\ell, 0})^{\dagger}Q^{(2)}_{j,k}}
    \nonumber\\
    &\left.
    \quad\quad\quad\quad\quad -
    \frac{1}{d}(Q^{(2)}_{j,\ell})^T(Q^{(2)}_{j,k})^*\otimes (Q^{(1)}_{k,0})^{\dagger}Q^{(1)}_{\ell,0}
    \right]
    \nonumber\\
    =&\frac{1}{d}I\otimes I
    +\frac{1}{d^2-1}
    \left[
    (d-\frac{1}{d})I\otimes I
    \right.\nonumber\\
    &
    -\frac{1}{d}\sum_{j,k,\ell}
    I\otimes (Q^{(1)}_{k,0})^{\dagger}Q^{(2)}_{j,\ell}(Q^{(2)}_{j,k})^{\dagger}Q^{(1)}_{\ell,0}
    \nonumber\\
    &\left.
    +\sum_{j,k,\ell}
    \dket{(Q^{(1)}_{k,0})^{\dagger}Q^{(2)}_{j,\ell}}\dbra{(Q^{(1)}_{\ell, 0})^{\dagger}Q^{(2)}_{j,k}}
    \right].
\end{align}
Here, the following formulae
\begin{align}
    &\int{\rm d}U\ UMU^{\dagger}=\frac{{\rm tr}M}{d}I,
    \nonumber\\
    &\int{\rm d}U\ (U\otimes U^T) M_{12} (U^{\dagger}\otimes U^*)
    \nonumber\\
    &~~~~~~=
    \frac{1}{d^2-1}\left[
    ({\rm tr}M_{12})I\otimes I
    -\frac{1}{d}I\otimes {\rm tr}_1(\tilde{M}_{12}) \right. \nonumber \\
    &~~~~~~~~~~~~~~~~~~~~~+ \left. \tilde{M}_{12}-\frac{1}{d}({\rm tr}_2\tilde{M}_{12})\otimes I
    \right],
    \nonumber\\
    &~~~~~~~~~(\tilde{M}_{12}\coloneqq({\rm SWAP})M_{12}^T({\rm SWAP}))
    \nonumber\\
    &\sum_j(Q^{(2)}_{j,k})^{\dagger}Q^{(2)}_{j,\ell}=\delta_{k,\ell}I,
    \nonumber\\
    &\sum_j(Q^{(1)}_{j,0})^{\dagger}Q^{(1)}_{j,0}=I
\end{align}
are used. By taking the inner product with $\Pi_{\rm antisym}$, we have
\begin{align}\label{eq::d+3_bef}
    &-\frac{d(d-1)}{2}-\frac{d-1}{2}+\frac{d-1}{2}{\rm tr}\beta_{U_0}
    \nonumber\\
    \geq &
    \frac{d-1}{2} +\frac{1}{d^2-1}\left[
    \frac{(d-1)(d^2-1)}{2} 
    \right.
    \nonumber\\
    -&\frac{1}{2}
    \left(-\frac{1}{d}\sum_{j,k,\ell} {\rm tr}((Q^{(1)}_{k,0})^{\dagger}Q^{(2)}_{j,\ell}(Q^{(2)}_{j,k})^{\dagger}Q^{(1)}_{\ell,0})
    \right.\nonumber\\
    &\left.\left.
    +\sum_{j,k,\ell} {\rm tr}((Q^{(1)}_{k,0})^{\dagger}Q^{(2)}_{j,\ell}(Q^{(1)}_{\ell,0})^T(Q^{(2)}_{j,k})^{*})
    \right)
    \right].
\end{align}
Since the second term $(1/(d^2-1))[\cdots]$ is obtained as a Hilbert Schmidt inner product of two positive operators and thus is nonnegative, we have
\begin{align}
    -\frac{d(d-1)}{2}-\frac{d-1}{2}+\frac{d-1}{2}{\rm tr}\beta_{U_0} \geq {d-1\over 2},
\end{align}
i.e.,
\begin{align}
    {\rm tr}\beta_{U_0}\geq d+2.
\end{align}

Also, the second term $(1/(d^2-1))[\cdots]$ of Eq.~(\ref{eq::d+3_bef}) is lower-bounded by $(d-1)/2-d/(2(d-1))$ which is larger than 0 for $d\geq 3$. This can be proved by noticing 
\begin{align}
    &
    -\frac{1}{d}\sum_{j,k,\ell} {\rm tr}((Q^{(1)}_{k,0})^{\dagger}Q^{(2)}_{j,\ell}(Q^{(2)}_{j,k})^{\dagger}Q^{(1)}_{\ell,0})
    \nonumber\\
    &
    +\sum_{j,k,\ell} {\rm tr}((Q^{(1)}_{k,0})^{\dagger}Q^{(2)}_{j,\ell}(Q^{(1)}_{\ell,0})^T(Q^{(2)}_{j,k})^{*})
    \nonumber\\
    =&-\frac{1}{d}{\rm tr}(A^{\dagger}B)+{\rm tr}(A^{\dagger}C)
    \nonumber\\
    \leq & \frac{1}{d}\|A\|_2\|B\|_2+\|A\|_2\|C\|_2 = d(d+1)
\end{align}
for 
\begin{align}
    A&\coloneqq\sum_{j,k,\ell}(Q^{(2)}_{j,\ell})^{\dagger}Q^{(1)}_{k,0}\otimes \ket{j,k,\ell},
    \nonumber\\
    B&\coloneqq\sum_{j,k,\ell}(Q^{(2)}_{j,k})^{\dagger}Q^{(1)}_{\ell,0}\otimes \ket{j,k,\ell},
    \nonumber\\
    C&\coloneqq \sum_{j,k,\ell}
    (Q^{(1)}_{\ell,0})^T(Q^{(2)}_{j,k})^*\otimes \ket{j,k,\ell},
\end{align}
and that the 2-norm of $A,\ B,\ C$ are $d$. Therefore, for $d\geq 3$, we obtain
\begin{align}
    {\rm tr}\beta_{U_0}\geq d+3-\frac{d}{2(d-1)}>d+2.
\end{align}

\subsection{Unitary complex conjugation}
\noindent The primal SDP in Thm.~\ref{th::main_sdp}:
Since {$g_{U_0}$ is given by} $g_{U_0}(H)=-U_0^TH^*U_0^*$ for $H\in \su (d)$, $J_{g_{U_0}}$ is given by
\begin{align}
    J_{g_{U_0}}=-(I\otimes U_0^T)\left(\SWAP -\frac{1}{d}I\otimes I\right)(I\otimes U_0^*).
\end{align}
By setting $\beta_{U_0}=((d-1)/d)I$, {$\beta_{U_0}$ satisfies the SDP constraint since}
\begin{align}
    &-(I\otimes U_0^T)\left(\SWAP -\frac{1}{d}I\otimes I\right)(I\otimes U_0^*)+\frac{d-1}{d}I\otimes I
    \nonumber\\
    =&
    2(I\otimes U_0^T)
    \Pi_{\rm antisym}
    (I\otimes U_0^*) \geq 0
\end{align}
{holds, }
and {$\Tr\beta_{U_0}$ is given by} ${\rm tr}(\beta_{U_0})=d-1$.

This solution gives the minimum solution of the SDP, as shown below.
We consider an orthogonal projector onto the symmetric subspace of $\CC^d\otimes \CC^d$ denoted by $\Pi_\mathrm{sym}$.
By taking an inner product of $(I\otimes U^T_0)\Pi_\mathrm{sym}(I\otimes U^*_0)$ with the SDP constraint given by
\begin{align}
    -(I\otimes U_0^T)\left(\SWAP -\frac{1}{d}I\otimes I\right)(I\otimes U_0^*)+\beta_{U_0}\otimes I\geq 0,
\end{align}
we obtain
\begin{align}
    \Tr(\Pi_\mathrm{sym}) (-(d-1) + \Tr \beta_{U_0})\geq 0,
\end{align}
i.e.,
\begin{align}
    \Tr\beta_{U_0}\geq d-1
\end{align}
holds.  This bound is achievable by the construction of an algorithm given by \cite{miyazaki2019complex}. Therefore, it is tight.  This proof is an alternative proof of the tight optimal lower bound $d-1$ originally shown in \cite{quintino2019reversing}.   

\subsection{Unitary iteration}
Unitary iteration is a task to transform $U\in \SU(d)$ to $f(U)=U^n$.\\
\noindent The primal SDP in Thm.~\ref{th::main_sdp}: 
Since {$g_{U_0}$ is given by} $g_{U_0}(H)=\sum_{k=1}^n U_0^{-k}HU_0^k$ for $H\in \SU (d)$, $J_{g_{U_0}}$ is given by
\begin{align}
    J_{g_{U_0}}=\sum_{k=1}^n \dketbra{U_0^{-k}}-\frac{n}{d}I\otimes I.
\end{align}
By setting $\beta_{U_0}=(n/d)I$, {$\beta_{U_0}$ satisfies the SDP constraint since} 
\begin{align}
    \sum_{k=1}^n \dketbra{U_0^{-k}}\geq 0
\end{align}
holds, and $\Tr\beta_{U_0}$ is given by ${\rm tr}\beta_{U_0}=n$. 

This solution gives the minimum, as shown below.
The SDP constraint is given by
\begin{align}
    \sum_{k=1}^n \dketbra{U_0^{-k}}-\frac{n}{d}I\otimes I + \beta_{U_0} \otimes I\geq 0.\label{eq:iteration_sdp_constraint}
\end{align}
We define orthogonal projectors $\Pi_j$ on $\CC^d$ using the eigendecomposition of $U_0$ given by
\begin{align}
    U_0 = \sum_{j=1}^{d} e^{i\phi_j} \Pi_j,
\end{align}
where $e^{i\phi_j}$ for $\phi_j\in\RR, j\in\{1, \cdots, d\}$ is the $j$-th eigenvalue of $U_0$, and $\Pi_j$ is the orthonormal projector onto the corresponding eigenvector.
The set of the dual vectors $\{\dket{\Pi_j}\}_{j=1}^{d}$ forms an orthonormal basis of $\mathrm{span}\{\dket{\Pi_j}\}$ since
\begin{align}
    \dbraket{\Pi_j}{\Pi_k} = \Tr(\Pi_j^\dagger \Pi_k) = \delta_{jk}
\end{align}
holds, where $\delta_{jk}$ is Kronecker's delta defined by $\delta_{jj} =1$ and $\delta_{jk} = 0$ for $j\neq k$.
The orthogonal projector onto the complement of $\mathrm{span}\{\dket{\Pi_j}\}$ given by
\begin{align}
    \Pi^\perp\coloneqq I\otimes I-\sum_{j=1}^{d}\dketbra{\Pi_j}
\end{align}
satisfies
\begin{align}
    \Tr(\Pi^\perp \dketbra{U_0^{-k}}) = 0,\\
    \Tr_2 \Pi^\perp = (d-1) I.
\end{align}
Thus, taking the inner product of $\Pi^\perp$ with Eq.~\eqref{eq:iteration_sdp_constraint}, we obtain
\begin{align}
    (d-1)(-n+\Tr\beta_{U_0})\geq 0,
\end{align}
i.e.,
\begin{align}
    \Tr \beta_{U_0}\geq n
\end{align}
holds.

\section{Modification of Thm.~\ref{th::main_sdp} in the main text to a subgroup of $\SU(d)$}

We derive a lower bound on the query complexity of a differentiable function $f:S\to S$ for a Lie subgroup $S$ of $\SU(d)$.
We define a linear map $g_{U_0}:\mathfrak{s}\to \mathfrak{s}$ by
\begin{align}\label{eq::diff_sub}
    g_{U_0} (H)\coloneqq
    -i\left.\frac{{\rm d}}{{\rm d}\epsilon}\right|_{\epsilon =0}\left[f(U_0)^{-1}f(e^{i\epsilon H}U_0)\right],
\end{align}
where $\mathfrak{s}$ is a Lie algebra of $S$.
We define orthonormal bases $\{G_j\}_j$ and $\{B_k\}_k$ of $\mathfrak{s}$ and $\su(d)\setminus \mathfrak{s}$, respectively.
Then, we show the following Theorem.

\label{app::subgroup}
\begin{Theorem}\label{th::sdp_subgroup}
For any differentiable function $f: S\to S$, the query complexity of $f$ is larger than or equal to the solution of the following SDP:
\begin{align}\label{eq::sdp_subgroup}
\begin{split}
    &\min_{\{B'_k\}_k} \tr \beta_{U_0}\\
    \text{s.t. }& \tilde{J}_{g_{U_0}} + \beta_{U_0} \otimes I \geq 0,\\
    &\tilde{J}_{g_{U_0}} \coloneqq \sum_{j} G_j^* \otimes g_{U_0}(G_j)+\sum_{k}B_k^*\otimes B'_k,
\end{split}
\end{align}
where $U_0$ is an arbitrary unitary operator in $S$, $B'_k$ is an arbitrary traceless $d\times d$ operator.
\end{Theorem}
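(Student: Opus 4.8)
The plan is to rerun the differentiation argument of Thm.~\ref{th::main_sdp}, changing only one ingredient: the accessible tangent directions. As in that proof, any $N$-query comb implementing $f$ on $S$ is brought to the form of Eq.~\eqref{eq::deepest_preprocessing} (with $U_0\in S$), and the blocks $M^{(s,\mathrm{left})}_{j,0}(U_0)$ still assemble into a completely positive map $\mcE_{U_0}$ on $\mcL(\CC^d)$ obeying $\mcE_{U_0}(I)=NI$. The single difference is that an admissible input $U=e^{i\epsilon H}U_0$ must stay in $S$, which forces $H\in\mathfrak{s}$; hence the derivative computation behind Lem.~\ref{le::CP_comb} can only be performed along $\mathfrak{s}$.

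First I would treat the constrained directions. For $H\in\mathfrak{s}$ the curve $e^{i\epsilon H}U_0$ lies in $S$, so $f(U_0)^{-1}f(e^{i\epsilon H}U_0)\in S$ and its $\epsilon$-derivative at $0$ lands in $\mathfrak{s}$. Repeating the calculation of Lem.~\ref{le::CP_comb} verbatim then gives $\mcE_{U_0}(H)=g_{U_0}(H)+\alpha_{U_0}(H)I$ for every $H\in\mathfrak{s}$, where $g_{U_0}$ is the map of Eq.~\eqref{eq::diff_sub} and $\alpha_{U_0}:\mathfrak{s}\to\RR$ is linear. In particular $g_{U_0}(G_j)\in\mathfrak{s}$, which is what permits writing the first sum of $\tilde J_{g_{U_0}}$ in terms of $g_{U_0}(G_j)$.

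The genuinely new step is the complementary directions $\{B_k\}$. Because $f$ constrains nothing there, $\mcE_{U_0}(B_k)$ is only known to be Hermitian (CP maps preserve Hermiticity), and---this is the point to watch---it need not be traceless, because $\mcE_{U_0}$, while satisfying $\mcE_{U_0}(I)=NI$, is not trace-preserving. I would therefore split $\mcE_{U_0}(B_k)=B'_k+c_k I$ into a traceless part $B'_k$ and a scalar $c_k\in\RR$. Forming the Choi operator of $\mcE_{U_0}$ with respect to the orthonormal Hermitian basis $\{I/\sqrt d\}\cup\{G_j\}_j\cup\{B_k\}_k$, the $\mathfrak{s}$-block yields $\sum_j G_j^*\otimes g_{U_0}(G_j)$, the complementary block yields $\sum_k B_k^*\otimes B'_k$, and every identity-proportional contribution---the $NI$ from $\mcE_{U_0}(I)$, the $\alpha_{U_0}(G_j)$ terms, and the scalars $c_k$---gathers into a single $\beta_{U_0}\otimes I$ with
\begin{align}
    \beta_{U_0}=\frac{N}{d}I+\sum_j\alpha_{U_0}(G_j)G_j^*+\sum_k c_k B_k^*,
\end{align}
for which $\tr\beta_{U_0}=N$ because all $G_j$ and $B_k$ are traceless.

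Finally, complete positivity of $\mcE_{U_0}$ is equivalent to $J_{\mcE_{U_0}}=\tilde J_{g_{U_0}}+\beta_{U_0}\otimes I\geq 0$, so every valid comb supplies a feasible point of the SDP in Eq.~\eqref{eq::sdp_subgroup} for some $\{B'_k\}$ and $\beta_{U_0}$; minimizing $\tr\beta_{U_0}=N$ over both then bounds the query complexity from below. I expect the main obstacle to be exactly the bookkeeping around the non-traceless $\mcE_{U_0}(B_k)$: one must cleanly separate its scalar part into $\beta_{U_0}$ so that the free optimization variable remaining inside $\tilde J_{g_{U_0}}$ is genuinely the \emph{traceless} operator $B'_k$, while confirming that this reshuffling leaves $\tr\beta_{U_0}=N$---and hence the identification of the objective with the query number---intact.
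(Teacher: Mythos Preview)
Your proposal is correct and follows essentially the same route as the paper: rerun Lemmas~\ref{le::Vlefrig}--\ref{le::CP_comb} verbatim (these use only the circuit structure at $U_0$, not the value of $f$ nearby), observe that the derivative constraint $\mcE_{U_0}(H)=g_{U_0}(H)+\alpha_{U_0}(H)I$ can be asserted only for $H\in\mathfrak{s}$, and then treat the unconstrained traceless part of $\mcE_{U_0}(B_k)$ as a free variable $B'_k$ while absorbing its scalar part into $\beta_{U_0}$. The paper packages the scalar part by extending $\alpha_{U_0}$ from $\mathfrak{s}$ to all of $\su(d)$ and then defining $\beta_{U_0}$ via $\tr(\beta_{U_0}^T H)=\alpha_{U_0}(H)$, whereas you write $\beta_{U_0}=(N/d)I+\sum_j\alpha_{U_0}(G_j)G_j^*+\sum_k c_k B_k^*$ explicitly; these are the same object, and your anticipated ``obstacle'' is exactly the bookkeeping the paper carries out.
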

A trivial upper bound of the solution of the SDP~\eqref{eq::sdp_subgroup} is found by setting $B'_k=0$ and $\beta_{U_0}\coloneqq|\lambda|I$ where $\lambda <0$ is the minimum eigenvalue of $\sum_j G_j^*\otimes g_{U_0}(G_j)$.
Since $\lambda^2\leq \|\sum_{j}G_j^*\otimes g_{U_0}(G_j)\|_2^2=\sum_j\|g_{U_0}(G_j)\|_2^2$ holds, an upper bound of $N$ is given by ${\rm tr}\beta_{U_0}=d\sqrt{\sum_j\|g_{U_0}(G_j)\|_2^2}$, which potentially implies that the deterministic and exact implementation of $f$ can be achieved by a smaller number of queries if $U$ is limited to a small subgroup.

The dual of SDP in Eq.~(\ref{eq::sdp_subgroup}) is

\begin{align}
\begin{split}
    &\max -\Tr[(\sum_j G_j^* \otimes g_{U_0}(G_j))\Gamma]\\
    \mathrm{s.t.}\;& \Gamma\geq 0,\\
    &\Tr_1[(B_k^*\otimes I) \Gamma] = 0 \quad \forall k,\\
    &\Tr_2\Gamma = I,
\end{split}
\end{align}
as shown in Appendix \ref{app::duals}.

\textbf{Proof of Thm.~\ref{th::sdp_subgroup}:}
Let us choose some $U_0\in S$ and define $\tilde{V}_{N+1}(U_0)$ in the same way as in Fig.~\ref{fig::lowerbound_simplified}. Then, the same proof for Lem.~\ref{le::Vlefrig} holds for this case, thus we have
\begin{align}\label{eq::righttoleftprob}
    M^{(s,{\rm right})}_{0,j}(U_0)^{\dagger}
    =
    M^{(s,{\rm left})}_{j,0}(U_0),
\end{align}
where the notation follows that in Appendix~\ref{app::theo1}. 
Additionally, by substituting $U\gets e^{i\epsilon H}U_0$ for some $H\in {\rm span}(\{G_j\}_j)$ and differentiating by $\epsilon$ at $\epsilon=0$, we show that
\begin{align}\label{eq::comb_vs_f_prob}
\begin{cases}
    \mcE_{U_0}(I)=NI,\\
    \mcE_{U_0}(H)=g_{U_0}(H)+\alpha_{U_0}(H)I&(H\in \{G_j\}_j)
\end{cases}
\end{align}
hold for 
\begin{align}
    \mcE_{U_0}(H)=
    \sum_{s=1}^{N}
    \sum_j
    (M^{(s,{\rm left})}_{j,0}(U_0))^{\dagger} H (M^{(s,{\rm left})}_{j,0}(U_0))
\end{align}
in the same way as the proof of Lem.~\ref{le::CP_comb}. On the other hand, the action of $\mcE_{U_0}$ on $H\notin {\rm span}(\{G_j\}_j)$ is not determined, thus a lower bound of the number of queries to $U$ is given by
\begin{align}
\label{eq:sdp_sub_proof}
\begin{split}
    &\min_{\{B'_k\}_k} N\\
    \text{s.t. }& \mcE_{U_0} \text{ is CP}, \alpha_{U_0}: \su(d) \to \su(d) \text{ is linear},\\
    & \mcE_{U_0}(I)=NI,\\
    & \mcE_{U_0}(G_j) = g_{U_0}(G_j) + \alpha_{U_0}(G_j)I,\\
    & \mcE_{U_0}(B_k) = B'_k +\alpha_{U_0}(B_k)I,
\end{split}
\end{align}
where $B'_k$ are arbitrary traceless operators.
By defining $\beta_{U_0}$ satisfying
\begin{align}
\begin{split}
    &\tr(\beta_{U_0}^T I)=N,\\
    &\tr(\beta_{U_0}^T H)=\alpha_{U_0}(H)\quad (H\in \su (d)),
\end{split}
\end{align}
the Choi operator of $\mcE_{U_0}$ is given by
\begin{align}
    J_{\mcE_{U_0}}=\tilde{J}_{g_{U_0}}+\beta_{U_0}\otimes I.
\end{align}
Therefore, the SDP~\eqref{eq:sdp_sub_proof} reduces to the SDP~\eqref{eq::sdp_subgroup}, which shows Thm.~\ref{th::sdp_subgroup}. \qed

In what follows, we consider three examples of the subgroups as follows:
\begin{itemize}
    \item $\SU(d)^{\otimes n}\coloneqq \{U_1\otimes \cdots \otimes U_n \mid U_1, \cdots, U_n\in\SU(d)\}\subset \SU(d^n)$
    \item Diagonal unitary inversion
    \item $\mathrm{SO}(d)\subset \SU(d)$ for unitary inversion
\end{itemize}

\subsection{$\SU(d)^{\otimes n} \subset \SU(d^n)$}
One simple example of a subgroup of a unitary group is the tensor product of unitary operations, e.g., $\SU(d)^{\otimes n}\subset \SU(d^n)$. The solution of the primal and dual SDP for this case satisfies the following property.
\begin{lem}
    The minimum value of the SDP in Thm.~\ref{th::main_sdp} in the main text at a unitary operation $U_0\in \SU (d)$ 
    for a function $f$ on $\SU(d)$ matches the minimum value of the SDP in Eq.~(\ref{eq::sdp_subgroup}) at a unitary operation $U_0^{\otimes n}$ for a function $F: \SU(d)^{\otimes n} \to \SU(d)^{\otimes n}$ defined as 
    \begin{align}
        F\left(\bigotimes_{j=1}^n U_j\right)\coloneqq
        \bigotimes_{j=1}^n f(U_j).
    \end{align}
\end{lem}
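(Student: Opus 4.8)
The plan is to show that the two SDPs coincide by demonstrating that the optimal objective values match, which requires relating the Choi operator $J_{g_{U_0}}$ for $f$ on $\SU(d)$ to the Choi operator $\tilde{J}_{g_{U_0^{\otimes n}}}$ for $F$ on $\SU(d)^{\otimes n}$. First I would compute the differential of $F$ at $U_0^{\otimes n}$. Since the Lie algebra $\mathfrak{s}$ of $\SU(d)^{\otimes n}$ consists of operators of the form $H = \sum_{m=1}^n I^{\otimes(m-1)}\otimes H_m \otimes I^{\otimes(n-m)}$ with each $H_m\in\su(d)$, the product rule applied to $F(e^{i\epsilon H}U_0^{\otimes n}) = \bigotimes_m f(e^{i\epsilon H_m}U_0)$ yields
\begin{align}
    g_{U_0^{\otimes n}}(H) = \sum_{m=1}^n I^{\otimes(m-1)}\otimes g_{U_0}(H_m)\otimes I^{\otimes(n-m)},
\end{align}
so the action of $g_{U_0^{\otimes n}}$ decomposes into single-tensor-factor contributions governed entirely by $g_{U_0}$.

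The key step is then to exploit the freedom in the SDP of Eq.~\eqref{eq::sdp_subgroup}, where the off-subalgebra blocks $B'_k$ are free traceless operators chosen to minimize $\tr\beta_{U_0}$. I would argue that for the subgroup problem, the optimal choice of the $B'_k$ effectively reconstructs the full $\SU(d^n)$ constraint only on the ``diagonal'' single-factor sector while leaving the cross-factor sectors unconstrained and hence contributing nothing to the trace lower bound. Concretely, I would pass to the dual SDP stated in the excerpt,
\begin{align}
\begin{split}
    &\max -\Tr[(\textstyle\sum_j G_j^* \otimes g_{U_0}(G_j))\Gamma]\\
    \mathrm{s.t.}\;& \Gamma\geq 0,\ \Tr_1[(B_k^*\otimes I)\Gamma]=0\ \forall k,\ \Tr_2\Gamma = I,
\end{split}
\end{align}
and show that an optimal $\Gamma$ can be taken supported on the subspace spanned by $\{\dket{\cdot}\}$ built from single-factor data, so that the dual objective factorizes into $n$ identical copies — but since the normalization constraint $\Tr_2\Gamma = I$ is shared, the per-factor contributions must be reconciled, and the optimal value collapses to precisely the $\SU(d)$ dual value at $U_0$. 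The matching of primal and dual values (strong duality, which holds for these strictly feasible SDPs) then gives the claimed equality.

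The hard part will be establishing that the optimal dual variable $\Gamma$ for the subgroup problem genuinely reduces to the $\SU(d)$ single-factor optimizer rather than gaining an advantage from the cross-factor correlations. I would need to verify that the constraint $\Tr_1[(B_k^*\otimes I)\Gamma]=0$ over the full off-subalgebra basis $\{B_k\}$ of $\su(d^n)\setminus\mathfrak{s}$ forces $\Gamma$ to be block-diagonal with respect to the tensor-factor structure, killing any benefit from entangling distinct factors; this is essentially a Schur-type/commutant argument for the twirl over $\SU(d)^{\otimes n}$. Once that block structure is secured, the factorization of $g_{U_0^{\otimes n}}$ established in the first step makes the remaining computation routine, and I would close the argument by exhibiting the explicit $\SU(d)$-optimal $\Gamma_0$ (from the $\SU(d)$ SDP at $U_0$) embedded diagonally as a feasible point achieving the target value, thereby bounding the subgroup value from both sides.
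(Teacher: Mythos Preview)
Your computation of the differential $g_{U_0^{\otimes n}}$ is correct and matches the paper, but the overall strategy diverges substantially and contains a gap. The paper works entirely in the \emph{primal}: given a feasible $\beta_{U_0}$ for the $\SU(d)$ SDP, it builds the tensor-feasible pair $\hat{\beta}_{U_0} = \sum_l (\beta_{U_0})_{\mcH_l}\otimes \bigotimes_{m\neq l} I_{\mcH_m}/d$ and $B'_k=0$ with the same trace; conversely, given a tensor-feasible $\hat{\beta}_{U_0}$ and $\{B'_k\}$, it simply takes the partial trace over all tensor factors except the first. That partial trace is the entire idea you are missing: because $\Tr J_{g_{U_0}}=0$ and $\Tr_{\mcH_{\neq 1}}B_k^*=0$ for every $B_k$ with at least two nontrivial factors, the cross-factor and $B'_k$ contributions vanish automatically, leaving exactly the single-factor constraint with $\beta_{U_0}=\Tr_{\mcH_{\neq 1}}\hat{\beta}_{U_0}$ and the same trace. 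Both directions are two-line computations with no structural input beyond tracelessness.

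Your dual route, by contrast, hinges on the claim that the constraints $\Tr_1[(B_k^*\otimes I)\Gamma]=0$ ``force $\Gamma$ to be block-diagonal with respect to the tensor-factor structure,'' justified by a Schur/commutant argument. This is not correct: those are constraints on \emph{partial traces} of $\Gamma$, not commutation relations, and they do not force any block structure. In fact, for a product ansatz $\Gamma=\bigotimes_l(\Gamma_0)_{\mcH_l\mcH'_l}$ the subgroup objective evaluates to $\sum_{l=1}^{n}\bigl(-\Tr[J_{g_{U_0}}\Gamma_l]\bigr)$ with each $\Gamma_l$ feasible for the $\SU(d)$ dual, so a naive dual argument only yields an upper bound of $n$ times the $\SU(d)$ value, not the desired equality; the $B_k$-constraints must be doing real work to cut this down, and the twirl heuristic you invoke does not supply that work. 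Your embedding of the single-factor $\Gamma_0$ does give the lower bound (subgroup dual $\geq$ $\SU(d)$ dual), but the matching upper bound via the dual is precisely where your sketch fails. Switching to the primal and using the partial-trace reduction, as the paper does, resolves this in one step.
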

When the transformation $f$ can be implemented by $N$ queries to $U$, then $F$ can also be implemented in the same number of queries $N$ (in fact, by running the transformation circuit in parallel, $\bigotimes_j U_j$ can be transformed into $\bigotimes_j f(U_j)$), which does not scale on the total dimension $d^n$ for a fixed $d$. This lemma shows that the SDP in Eq.~(\ref{eq::sdp_subgroup}) correctly captures the independence of the query numbers on $n$.  

\textbf{Proof:}~
Let us define the Hilbert space of input $\bigotimes_jU_j$ and output $\bigotimes_jf(U_j)$ unitary operators as $\bigotimes_j \mcH_j$ and $\bigotimes_j \mcH_j'$, respectively.
The Lie algebra of $\SU (d)^{\otimes n}$ is spanned by an orthonormal basis $(1/\sqrt{d^{n-1}})(G_j)_{\mcH_l}\otimes \bigotimes_{m\neq l}(I)_{\mcH_m}$ using an orthonormal basis $\{G_j\}$ of $\SU(d)$ and the corresponding value of differentiation of Eq.~(\ref{eq::diff_sub}) is given by $(1/\sqrt{d^{n-1}})(g(G_j))_{\mcH'_l}\otimes \bigotimes_{m\neq l}(I)_{\mcH'_m}$. 
Thus, the SDP \eqref{eq::sdp_subgroup} for the function $F$ is given by
\begin{align}\label{eq:sdp_tensor}
\begin{split}
    &\min \Tr \hat{\beta}_{U_0}\\
    \mathrm{s.t.}\;& \hat{\beta}_{U_0}\in \bigotimes_{j}\mcL(\mcH_j), B'_k\in \bigotimes_{j}\mcL(\mcH'_j),\\
    &\sum_{l=1}^{n} (J_{g_{U_0}})_{\mcH_l \mcH'_l} \otimes \bigotimes_{m\neq l} {(I\otimes I)_{\mcH_m\mcH'_m} \over d}\\
    &+ \sum_{k\in K} B_k^* \otimes B'_k + \hat{\beta}_{U_0} \otimes I_{\mcH'}\geq 0,\\
    &\Tr B'_k = 0 \quad \forall k\in K,
\end{split}
\end{align}
where $J_{g_{U_0}}$ is given in Thm.~\ref{th::main_sdp} and $B_k$ is given by
\begin{align}
    B_k \coloneqq \bigotimes_{l} (G_{k_l})_{\mcH_l},
\end{align}
where $G_0\coloneqq I/\sqrt{d}$ and the summand over $k=(k_1, \cdots, k_n)$ is taken over the set $K\coloneqq \{k|\#(l|k_l\neq 0)\geq 2\}$, where $\#(l|k_l\neq 0)$ represents the number of $l$'s such that $k_l \neq 0$.

Suppose $\beta_{U_0}$ is a solution of the SDP~\eqref{eq:sdp} in the main text, i.e.,
\begin{align}
    J_{g_{U_0}} + \beta_{U_0} \otimes I\geq 0
\end{align}
holds.
Then, defining $\hat{\beta}_{U_0}$ and $B'_k$ by
\begin{align}
    \hat{\beta}_{U_0} &\coloneqq \sum_{l=1}^{n} (\beta_{U_0})_{\mcH_l} \otimes \bigotimes_{m\neq l} {I_{\mcH_m} \over d},\\
    B'_k&\coloneqq 0,
\end{align}
$\hat{\beta}_{U_0}$ and $B'_k$ give a solution of the SDP \eqref{eq:sdp_tensor} and $\Tr\hat{\beta}_{U_0} = \Tr \beta_{U_0}$ holds.
Therefore, the solution of the SDP~\eqref{eq:sdp_tensor} is upper bounded by the solution of the SDP~\eqref{eq:sdp} in the main text.

Conversely, suppose $\hat{\beta}_{U_0}$ and $B'_k$ give a solution of the SDP \eqref{eq:sdp_tensor}, then
\begin{align}
    &\sum_{l=1}^{n} (J_{g_{U_0}})_{\mcH_l \mcH'_l} \otimes \bigotimes_{m\neq l} {(I\otimes I)_{\mcH_m\mcH'_m} \over d}\nonumber\\
    &+ \sum_{k\in K} B_k^* \otimes B'_k + \hat{\beta}_{U_0} \otimes I_{\mcH'}\geq 0
\end{align}
holds.
Taking the partial trace on $\mcH_{\neq 1}\otimes \mcH'_{\neq 1} \coloneqq \bigotimes_{m\neq 1} \mcH_m \otimes \mcH'_m$, we obtain
\begin{align}
    d^{n-1} (J_{g_{U_0}} + \Tr_{\mcH_{\neq 1}\mcH'_{\neq 1}} \hat{\beta}_{U_0} \otimes I_{\mcH_1})\geq 0,
\end{align}
where we use the identities
\begin{align}
    \Tr J_{g_{U_0}} &= 0,\\
    \Tr_{\mcH_{\neq 1}} B_k^* &= 0.
\end{align}
Thus, defining $\beta_{U_0}$ by $\beta_{U_0}\coloneqq \Tr_{\mcH_{\neq 1}\mcH'_{\neq 1}} \hat{\beta}_{U_0}$, $\beta_{U_0}$ is a solution of the SDP~\eqref{eq:sdp} in the main text and $\Tr\beta_{U_0} = \Tr\hat{\beta}_{U_0}$ holds.
Therefore, the solution of the SDP~\eqref{eq:sdp} in the main text is upper bounded by the solution of the SDP \eqref{eq:sdp_tensor}.
In conclusion, the SDP \eqref{eq:sdp_tensor} gives the same minimum value as the SDP~\eqref{eq:sdp} in the main text.
\qed

{\subsection{Diagonal unitary inversion}}

The Lie algebra of the diagonal unitary is given by
\begin{align}
    \mathfrak{s} = \mathrm{span}\{Z^k \mid k=1, \cdots, d-1\},
\end{align}
where $Z$ is the clock operator defined by $Z\coloneqq \sum_j \omega^j \ketbra{j}{j}$ for $\omega\coloneqq e^{2\pi i/n}$ and the computational basis $\{\ket{j}\}$ of $\CC^d$.
The complement $\mathfrak{su}(d)\setminus \mathfrak{s}$ is given by
\begin{align}
    \mathfrak{su}(d)\setminus \mathfrak{s} = \mathrm{span}\{X^j Z^k \mid j\in\{1, \cdots, d-1\}, \nonumber \\
    k\in\{0, \cdots, d-1\}\},
\end{align}
where $X$ is the shift operator defined by $X\coloneqq \sum_j\ketbra{j\oplus 1}{j}$.
Thus, the SDP \eqref{eq::sdp_subgroup} for the diagonal unitary inversion is given by
\begin{align}
\begin{split}
    &\min \Tr\beta_{U_0}\\
    \mathrm{s.t.}\;&\beta_{U_0}\in\mcL(\CC^d), B'_{jk}\in\mcL(\CC^d),\\
    &- \sum_{k=1}^{d-1}{Z^{-k}\otimes Z^k \over d} + \sum_{j=1}^{d-1}\sum_{k=0}^{d-1} {X^{j}Z^{-k} \over \sqrt{d}} \otimes B'_{jk} +\beta_{U_0} \otimes I \nonumber \\
    &~~~~~~~~~~~~~~~~~~~~~~\geq 0,\\
    &\Tr B'_{jk} = 0 \quad \forall j, k\in\{1, \cdots, d-1\}.
\end{split}
\end{align}
Note that $\sum_{k=1}^{d-1}Z^{-k}\otimes Z^k$ is given by
\begin{align}
    &\sum_{k=1}^{d-1}Z^{-k}\otimes Z^k\nonumber\\
    &= \sum_{k=0}^{d-1} Z^{-k}\otimes Z^k - I\otimes I\\
    &= \sum_{k=0}^{d-1} \sum_{j_1, j_2=1}^{d} \omega^{-k(j_1-j_2)}\ketbra{j_1}\otimes\ketbra{j_2}-I\otimes I\\
    &= d {\sum_{j=1}^{d} \ketbra{jj}}-I\otimes I.
\end{align}
By setting $\beta_{U_0} = {d-1\over d} I$ and $B'_{jk} = 0$, $\beta_{U_0}$ and $B'_{jk}$ satisfy the SDP constraints, and $\Tr\beta_{U_0} = d-1$.

This solution gives the minimum, as shown below.
Taking the diagonal components of the second constraint, we obtain
\begin{align}\label{eq:sdp_constraint_diagonal}
    {1\over d} I\otimes I - {\sum_{j=1}^{d} \ketbra{jj}} +[\beta_{U_0}]_\mathrm{diag} \otimes I\geq 0,
\end{align}
where $[\beta_{U_0}]_\mathrm{diag}$ is the matrix obtained by setting the off-diagonal components of $\beta_{U_0}$ to be zero.
By taking the inner product of ${\sum_{j=1}^{d} \ketbra{jj}}$ with Eq.~\eqref{eq:sdp_constraint_diagonal}, we obtain
\begin{align}
    1-d+\Tr\beta_{U_0}\geq 0,
\end{align}
i.e.,
\begin{align}
    \Tr\beta_{U_0} \geq d-1.
\end{align}

{\subsection{$\mathrm{SO}(d)\subset \SU(d)$ for unitary inversion}}
As another example, we obtain the lower bound $d-1$ for unitary inversion restricted to the subgroup ${\rm SO}(d)$ of $\SU (d)$. If this bound is tight, then the optimal scheme of inverting an orthogonal operator in ${\rm SO} (d)$ is not by unitary transposition, which requires at least $d+3$ queries ($d\geq 3$) to $U$. Indeed, for $d=2$, the optimal scheme is not by transposing $U$ using $4$ queries to $U$, but by sandwiching $U$ by $X$, thus the lower bound $d-1$ is tight (note that an arbitrary $U\in {\rm SO}(2)$ is expressed as $\cos (\theta )I+i\sin (\theta)Y$ ($\theta\in [0,2\pi)$)).

\textbf{Proof:}~
The Lie algebra $\mathfrak{so}(d)$ of $\mathrm{SO}(d)$ is given by
\begin{align}
    \mathfrak{so}(d) = \mathrm{span}_\RR\{i\ketbra{j_1}{j_2}-i\ketbra{j_2}{j_1} \mid 1\leq j_1<j_2\leq d\}
\end{align}
and its complement $\mathfrak{su}(d)\setminus \mathfrak{so}(d)$ is given by
\begin{align}
    \mathfrak{su}(d)\setminus\mathfrak{so}(d)
    =& \mathrm{span}\{\ketbra{j_1}{j_2}+\ketbra{j_2}{j_1} \mid 1\leq j_1<j_2\leq d\}\nonumber\\
    &\oplus \mathrm{span}\{\ketbra{j}-I/d \mid 1\leq j\leq d\}.
\end{align}
Thus, the SDP \eqref{eq::sdp_subgroup} for the $\mathrm{SO}(d)$ unitary inversion is given by
\begin{align}
\begin{split}
    &\min \Tr \beta_{U_0}\\
    \mathrm{s.t.}\;&\beta_{U_0}\in\mcL(\CC^d), B'_{j_1 j_2}, B'_j\in\mcL(\CC^d),\\
    &\sum_{j_1<j_2}[{-(\ketbra{j_1}{j_2}-\ketbra{j_2}{j_1})^{\otimes 2}\over 2}+(\ketbra{j_1}{j_2}+\ketbra{j_2}{j_1}) \otimes B_{j_1 j_2}'] \\
    &+ \sum_j (\ketbra{j}-{I\over d}) \otimes B'_j + \beta_{U_0} \otimes I\geq 0,\\
    &\Tr B_{j_1j_2}' = \Tr B_{j}' = 0.
\end{split}
\end{align}
Assuming $B'_{j_1 j_2} = a (\ketbra{j_1}{j_2}+\ketbra{j_2}{j_1})$, $B'_j = 2a(\ketbra{j}-I/d)$ and $\beta_{U_0} = b I$ for $a = {d-2 \over 2(d+2)}$ and $b = {d-1\over d}$, the SDP constraint is satisfied since
\begin{align}
    &\sum_{j_1<j_2}[{-(\ketbra{j_1}{j_2}-\ketbra{j_2}{j_1})^{\otimes 2}\over 2}+(\ketbra{j_1}{j_2}+\ketbra{j_2}{j_1}) \otimes B_{j_1 j_2}'] \nonumber\\
    &+ \sum_j (\ketbra{j}-{I\over d}) \otimes B'_j + \beta_{U_0} \otimes I\\
    &= {d\over d+2}\mathrm{SWAP} -{2\over d+2} \dketbra{I} + {d\over d+2}I\otimes I\\
    &= {2d\over d+2}(\Pi_\mathrm{sym}-{1\over d}\dketbra{I})\\
    &\geq 0
\end{align}
holds.
In this case, $\Tr\beta_{U_0}$ is given by $\Tr\beta_{U_0} = d-1$.

This solution gives the minimum, as shown below.
By taking the inner product of $2\Pi_\mathrm{antisym} + \dketbra{I}$ with the second constraint and using the relations
\begin{align}
    \Tr(2\Pi_{\mathrm{antisym}}A\otimes B) &= \Tr(A)\Tr(B)-\Tr(AB),\\
    \Tr(\dketbra{I}A\otimes B) &= \Tr(A^T B),
\end{align}
we obtain
\begin{align}
    -d(d-1)+d\Tr \beta_{U_0} \geq 0,
\end{align}
i.e.,
\begin{align}
    \Tr\beta_{U_0}\geq d-1
\end{align}
holds.
\qed

\section{Modification of Thm.~\ref{th::main_sdp} to probabilistic case}
\label{app::prob}

\begin{Theorem}\label{le::prob_exact}
    Given any differentiable function $f:\SU(d)\to\SU(d)$ and a differentiable function $p: \SU(d) \to \RR_{\geq 0}$, the query complexity to implement $f$ with a probability greater than or equal to $p(U)>0$ in a neighborhood of a unitary operator $U_0$ is lower bounded by the solution of the following semidefinite programming (SDP):
\begin{align}\label{eq::sdp_for_prob}
\begin{split}
    &\min \tr \beta_{U_0}\\
    \text{s.t. }& 
    J_{\mcA}-J_{\mcB}=
    J_{g_{U_0}} + \beta_{U_0} \otimes I,\\
    &{\rm tr}J_{\mcB}=\frac{1-\sqrt{p(U_0)}}{1+\sqrt{p(U_0)}}
    {\rm tr}J_{\mcA},\\
    &J_{\mcA}, J_{\mcB}\geq 0,
\end{split}
\end{align}
where $J_{g_{U_0}}$ is the same operator as defined in Eq.~(2) in the main text.
\end{Theorem}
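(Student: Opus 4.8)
The plan is to retrace the derivation of Thm.~\ref{th::main_sdp}, now modelling the probabilistic-exact protocol as the comb isometry $Z(U)$ of Fig.~\ref{fig::lowerbound_comb} followed by a two-outcome measurement on $\mcH_A$ with success projector $\Pi_s$ and failure projector $\Pi_f = I-\Pi_s$. Exactness conditioned on success, combined with the same product-state argument that gave Eq.~\eqref{eq::one_side}, forces
\begin{align}
(I\otimes \Pi_s)\,Z(U)(\ket{\psi}\otimes\ket{0}) = \sqrt{p(U)}\,f(U)\ket{\psi}\otimes\ket{\phi_s(U)},
\end{align}
with $\ket{\phi_s(U)}$ normalised and independent of $\ket{\psi}$. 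First I would insert the $U_0$-dependent correction $f(U_0)^{-1}\otimes W_{U_0}$ exactly as in Eq.~\eqref{eq::deepest_preprocessing}, choosing $W_{U_0}$ so that $W_{U_0}\ket{\phi_s(U_0)}=\ket{0}$ with $\ket{0}$ in the success subspace. Then at $U=U_0$ the success branch reduces to $\sqrt{p(U_0)}\,\ket{\psi}\otimes\ket{0}$, i.e.\ the success $(0,0)$ flag block of the corrected isometry equals $\sqrt{P}\,I$ with $P\coloneqq p(U_0)$, which is the only place where $p$ enters the structural relations.

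Writing $R_{sj}\coloneqq \tilde M^{(s,\mathrm{right})}_{0,j}(U_0)$ and $L_{sj}\coloneqq M^{(s,\mathrm{left})}_{j,0}(U_0)$ for the blocks of Lem.~\ref{le::Vlefrig}, unitarity of the left and right parts together with this modified $U_0$-identity give, for every slot $s$,
\begin{align}
\sum_j L_{sj}^\dagger L_{sj}=I,\quad \sum_j R_{sj} R_{sj}^\dagger = I,\quad \sum_j R_{sj}L_{sj}=\sqrt{P}\,I.
\end{align}
The crucial difference from the deterministic case is that the Cauchy--Schwarz equality used in Lem.~\ref{le::Vlefrig} now saturates only up to the factor $\sqrt P$, so $R_{sj}^\dagger\neq L_{sj}$. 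I would then differentiate the corrected success identity at $U=e^{i\epsilon H}U_0$ as in the proof of Lem.~\ref{le::CP_comb}: the Leibniz rule over the $N$ slots yields the circuit-side transfer map $\Lambda(H)=\sum_{s,j}R_{sj}H L_{sj}$, while the product rule applied to $\sqrt{p(U)}\,f(U_0)^{-1}f(U)$ on the right-hand side produces $\Lambda(H)=\sqrt{P}\,g_{U_0}(H)+\gamma(H)I$ (up to the overall factor $i$ familiar from Lem.~\ref{le::CP_comb}), where the scalar $\gamma(H)$ collects the derivatives of $\sqrt{p}$ and of the success amplitude. Because $g_{U_0}$ is $\su(d)$-valued, passing to the Hermitian part $\Lambda_{\mathrm H}\coloneqq \tfrac12(\Lambda+\Lambda^\dagger)=\sqrt{P}\,g_{U_0}(H)+\Re\gamma(H)\,I$ keeps the $g_{U_0}$ term intact and makes the map Hermiticity-preserving.

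The decomposition into $J_\mcA,J_\mcB$ then comes from polarisation: $\Lambda_{\mathrm H}=\tfrac14(\mcN_+-\mcN_-)$ with the manifestly completely positive maps $\mcN_\pm(H)\coloneqq \sum_{s,j}(R_{sj}\pm L_{sj}^\dagger)\,H\,(R_{sj}\pm L_{sj}^\dagger)^\dagger$. Setting $J_\mcA\coloneqq \tfrac{1}{4\sqrt P}J_{\mcN_+}\geq 0$ and $J_\mcB\coloneqq \tfrac{1}{4\sqrt P}J_{\mcN_-}\geq 0$ and absorbing the real scalar $\Re\gamma$ into $\beta_{U_0}$ gives $J_\mcA-J_\mcB=J_{g_{U_0}}+\beta_{U_0}\otimes I$ with $\Tr\beta_{U_0}=N$. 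The three per-slot identities above then evaluate the Choi traces as $\Tr J_{\mcN_\pm}=2Nd\,(1\pm\sqrt{P})$, whose ratio is exactly $\Tr J_\mcB/\Tr J_\mcA=(1-\sqrt P)/(1+\sqrt P)$, which is the linking constraint in Eq.~\eqref{eq::sdp_for_prob}. Minimising $\Tr\beta_{U_0}$ over all admissible protocols yields the stated SDP; and the limit $P\to1$ forces $\Tr J_\mcB\to0$, hence $J_\mcB=0$, recovering Thm.~\ref{th::main_sdp}.

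The hard part will be the bookkeeping that certifies the trace-linking constraint with the exact factor $(1-\sqrt P)/(1+\sqrt P)$: one must check that $p$ enters the per-slot relations \emph{only} through $\sum_j R_{sj}L_{sj}=\sqrt P\,I$, that the imaginary/phase degrees of freedom (the $\Im\gamma$ term and the residual success-subspace components with $m\neq 0$) genuinely drop out under Hermitisation, and that $\mcN_\pm$ are legitimate CP maps whose Choi traces follow from the completeness relations rather than from any assumption on the failure branch. Confirming that all remaining free scalars collapse into $\beta_{U_0}\otimes I$, exactly as in the proof of Thm.~\ref{th::main_sdp}, is then routine.
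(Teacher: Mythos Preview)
Your approach is essentially identical to the paper's: the polarisation $\mcN_\pm$ with Kraus operators $R_{sj}\pm L_{sj}^\dagger$ is exactly the paper's decomposition into CP maps $\mcA_{U_0},\mcB_{U_0}$ with Kraus operators $\tfrac12\bigl(L_{sj}\pm e^{i\theta(U_0)}R_{sj}^\dagger\bigr)^\dagger$ (your $W_{U_0}$ simply absorbs the phase $e^{i\theta(U_0)}$ that the paper carries explicitly), and the per-slot trace identities you record are precisely those the paper uses to obtain the linking constraint $\Tr J_\mcB/\Tr J_\mcA=(1-\sqrt{P})/(1+\sqrt{P})$.

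There is one step you have not addressed. Your first displayed equation fixes the success probability \emph{equal} to $p(U)$, whereas the theorem only assumes a probability $\geq p(U)$. The paper handles this in two moves: it first proves the bound (as you do) for the circuit's \emph{actual} success probability $p'(U)$---which is polynomial in the entries of $U$ and hence differentiable---after passing to a pure eigenvector of $\rho_A$ that attains $p'(U)\geq p(U)$ in a neighbourhood of $U_0$; and it then observes that the feasible set of the SDP~\eqref{eq::sdp_for_prob} shrinks monotonically as $p(U_0)$ increases, so the SDP value at $p'(U_0)$ dominates that at $p(U_0)$. This monotonicity step is easy but should be stated.
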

The dual SDP of the SDP above is written as
\begin{align}
\begin{split}
    \max &- \Tr(J_{g_{U_0}}M)\\
    \text{s.t. }& M\in \mcL(\mathbb{C}^d\otimes\mathbb{C}^d),\\
    & a\in \mathbb{R},\\
    & M-a\frac{1-\sqrt{p(U_0)}}{1+\sqrt{p(U_0)}}I\geq 0,\\
    & aI-M\geq 0,\\
    &\Tr_{2}M = I,
\end{split}
\end{align}
as shown in Appendix \ref{app::duals}.
For generality, we presented a theorem applicable to the situation where the success probability can depend on the unitary $U$. Similarly to the case of deterministic and exact transformation, the SDP in Eq.~(\ref{eq::sdp_for_prob}) for a unitary $U_0$ gives a necessary condition for the circuit to implement $f$ at the neighborhood of $U_0$ up to the first order of differentiation. 

An important property of SDP in Eq.~(\ref{eq::sdp_for_prob}) is that $N$ is a non-decreasing function of $p(U_0)\in [0,1]$ since the space of $J_{\mcA}-J_{\mcB}$ satisfying
\begin{align}
    {\rm tr}J_{\mcA}=\frac{1-\sqrt{p(U_0)}}{1+\sqrt{p(U_0)}}
    J_{\mcB}, \quad J_{\mcA}, J_{\mcB}\geq 0,
\end{align}
shrinks by increasing $p(U_0)$. This matches an intuition that the larger the success probability gets the harder the implementation becomes. Additionally, Eq.~(\ref{eq::sdp_for_prob}) reduces to the SDP for the deterministic and exact case in Thm.~\ref{th::main_sdp} in the limit $p(U_0)\to 1$.

\textbf{Proof of Thm.~\ref{le::prob_exact}:}~
The general probabilistic and exact algorithm to implement $f(U)$ by a fixed-order quantum circuit is represented as the quantum circuit in Fig.~\ref{fig::prob_lower}, where $\mcH$ is the Hilbert space of the input unitary $U\in \mcL(\mcH)$ and the output unitary $f(U)\in \mcL(\mcH)$, $\mcH_A$ is an auxiliary space, and $\rho_A\in \mcH_A$ is a pure state.
\begin{figure}[H]
    \centering
    \includegraphics[width=\linewidth]{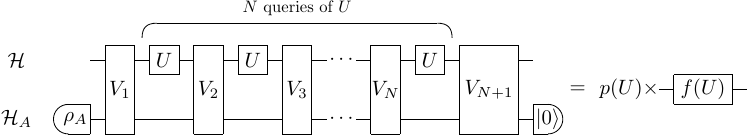}
    \caption{Quantum circuit probabilistically implementing $f(U)$ using $N$ {queries} of black-box unitary operation $U$. Notations follows Fig.~\ref{fig::lowerbound_comb}. $p(U)$ refers to the success probability (probability of measuring $\ket{0}$ in $\mcH_A$) when the input is $p(U)$.}
    \label{fig::prob_lower}
\end{figure}

We first prove the following lemma.
\begin{lem}\label{le::subprob}
    Suppose $N$ is the solution of the SDP~\eqref{eq::sdp_for_prob} at a unitary operation $U_0\in \SU(d)$ for differentiable functions $f:\SU(d)\to \SU(d)$ and $p:\SU(d)\to \RR_{\geq 0}$.
    Then, there does not exist a quantum circuit shown in Fig.~\ref{fig::prob_lower} with less than $N$ queries to $U$ such that
    \begin{enumerate}
        \item[(a)] the probability in which $\ket{0}\in \mcH_A$ is measured for the unitary operation $U$ is exactly equal to $p(U)$, and
        \item[(b)] the output state $\ket{\gamma (U)}\in \mcH$ for the input state $\ket{\psi}\in \mcH$ when $\ket{0}\in \mcH_A$ is measured satisfies $\ket{\gamma (U)} = e^{i\theta (U)}f(U)\ket{\psi}$ for a global phase $\theta (U)$,
    \end{enumerate}
    both in a neighborhood of $U_0$.
\end{lem}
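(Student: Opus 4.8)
The plan is to adapt the differentiation argument of Appendix~\ref{app::theo1} to the sub-normalised success branch of the probabilistic circuit of Fig.~\ref{fig::prob_lower}. First I would use conditions (a) and (b) to pin down the success Kraus operator $K_0(U)\coloneqq (I\otimes\bra{0})Z(U)(I\otimes\ket{0})$: condition (b) fixes the direction $K_0(U)\ket{\psi}\propto f(U)\ket{\psi}$ for every input $\ket{\psi}$, and condition (a) fixes the magnitude, giving $K_0(U)=e^{i\theta(U)}\sqrt{p(U)}\,f(U)$ (input-independence of the success probability is then automatic since $f(U)$ is unitary). As in Fig.~\ref{fig::lowerbound_simplified} I would append the $U_0$-dependent gate $f(U_0)^{-1}e^{-i\theta(U_0)}$ on $\mcH$, so that at $U=U_0$ the preprocessed success Kraus operator is normalised to $(I\otimes\bra{0})Z(U_0)(I\otimes\ket{0})=\sqrt{p(U_0)}\,I$. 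This replacement of the identity by $\sqrt{p(U_0)}\,I$ is the sole structural change from the deterministic case.

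Next I would re-run the left/right block decomposition of Lem.~\ref{le::Vlefrig}. Unitarity of $V^{(s,{\rm left})}(U_0)$ and $V^{(s,{\rm right})}(U_0)$ still gives $\sum_j M^{(s,{\rm left})}_{j,0}(U_0)^\dagger M^{(s,{\rm left})}_{j,0}(U_0)=I$ and $\sum_j M^{(s,{\rm right})}_{0,j}(U_0)M^{(s,{\rm right})}_{0,j}(U_0)^\dagger=I$, but the success-branch normalisation now reads $\sum_j M^{(s,{\rm right})}_{0,j}(U_0)M^{(s,{\rm left})}_{j,0}(U_0)=\sqrt{p(U_0)}\,I$ for every $s$. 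The Cauchy--Schwarz step of Lem.~\ref{le::Vlefrig} therefore becomes strict for $p(U_0)<1$, so the identification $M^{(s,{\rm right})}_{0,j}{}^\dagger=M^{(s,{\rm left})}_{j,0}$ is lost. Differentiating the preprocessed success Kraus operator at $U=e^{i\epsilon H}U_0$, $\epsilon\to 0$, exactly as in Lem.~\ref{le::CP_comb}, and using the differentiability of both $f$ and $p$, I would obtain that the map $H\mapsto\sum_{s,j}M^{(s,{\rm right})}_{0,j}(U_0)\,H\,M^{(s,{\rm left})}_{j,0}(U_0)$ equals $\sqrt{p(U_0)}\big(g_{U_0}(H)+\gamma(H)I\big)$ on $\su(d)$ and $N\sqrt{p(U_0)}\,I$ on $I$, where $\gamma$ is a (generally complex) linear functional arising from the derivatives of the phase $\theta$ and of $\sqrt{p}$.

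The crux -- and the step I expect to be the main obstacle -- is to replace the single completely positive map $\mcE_{U_0}$ of Lem.~\ref{le::CP_comb} by a \emph{difference} of two completely positive maps having exactly the trace ratio demanded by the SDP. Writing $L^{(s)}_j\coloneqq M^{(s,{\rm left})}_{j,0}(U_0)$ and $R^{(s)}_j\coloneqq M^{(s,{\rm right})}_{0,j}(U_0)^\dagger$, I would set
\begin{align}
\begin{split}
    \mcA(H)&\coloneqq\frac{1}{4\sqrt{p(U_0)}}\sum_{s,j}(L^{(s)}_j+R^{(s)}_j)^\dagger H(L^{(s)}_j+R^{(s)}_j),\\
    \mcB(H)&\coloneqq\frac{1}{4\sqrt{p(U_0)}}\sum_{s,j}(L^{(s)}_j-R^{(s)}_j)^\dagger H(L^{(s)}_j-R^{(s)}_j).
\end{split}
\end{align}
Both are manifestly completely positive, and $\mcA-\mcB$ returns the Hermitian part of the differentiation constraint; since $g_{U_0}(H)$ is traceless Hermitian this yields $(\mcA-\mcB)(H)=g_{U_0}(H)+\mathrm{Re}\,\gamma(H)\,I$ on $\su(d)$ and $(\mcA-\mcB)(I)=NI$, whose Choi operator is precisely $J_{g_{U_0}}+\beta_{U_0}\otimes I$ with $\tr\beta_{U_0}=N$, exactly as in the derivation of Thm.~\ref{th::main_sdp}.

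Finally I would read off the trace constraint from the three normalisation relations. Using $\tr J_\Phi=\tr[\Phi(I)]$ together with $\sum_{s,j}\|L^{(s)}_j\|_2^2=\sum_{s,j}\|R^{(s)}_j\|_2^2=Nd$ and the (real) cross term $\sum_{s,j}\tr(L^{(s)}_j{}^\dagger R^{(s)}_j)=N\sqrt{p(U_0)}\,d$, I would get $\tr J_{\mcA}=\tfrac{Nd(1+\sqrt{p(U_0)})}{2\sqrt{p(U_0)}}$ and $\tr J_{\mcB}=\tfrac{Nd(1-\sqrt{p(U_0)})}{2\sqrt{p(U_0)}}$, hence $\tr J_{\mcB}=\tfrac{1-\sqrt{p(U_0)}}{1+\sqrt{p(U_0)}}\tr J_{\mcA}$. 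Thus any circuit of the required form with a given number of queries yields a feasible point of the SDP~\eqref{eq::sdp_for_prob} whose objective equals that number, so fewer than $N$ queries is impossible. The delicate points will be checking that the anti-Hermitian remainder $-i\,\mathrm{Im}\,\gamma(H)\,I$ is harmlessly discarded by the Hermitisation, and that the cross term is real and equal to $N\sqrt{p(U_0)}\,d$ -- it is precisely this value that collapses the ratio to $\tfrac{1-\sqrt{p(U_0)}}{1+\sqrt{p(U_0)}}$.
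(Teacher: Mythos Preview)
Your proposal is correct and follows essentially the same route as the paper's proof: the paper likewise isolates the success Kraus operator, differentiates at $U_0$, and defines the pair of CP maps via the symmetric/antisymmetric combinations $\tfrac12(M^{(s,{\rm left})}_{j,0}\pm e^{i\theta(U_0)}M^{(s,{\rm right})}_{0,j}{}^\dagger)$, obtaining the same trace ratio $\tfrac{1-\sqrt{p(U_0)}}{1+\sqrt{p(U_0)}}$. The only cosmetic difference is that you absorb the phase $e^{-i\theta(U_0)}$ into the preprocessing and the factor $1/\sqrt{p(U_0)}$ directly into the definition of $\mcA,\mcB$, whereas the paper keeps both explicit until the end.
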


\textbf{Proof:}~
Without loss of generality, we assume $\rho_A=\ketbra{0}{0}$. Defining $\tilde{V}_{N+1}(U_0)\coloneqq(f(U_0)^{-1}\otimes I)V_{N+1}$, we obtain an equation analogous to Eq.~(\ref{eq::deepest_preprocessing}), namely, we have
\begin{align}\label{eq::pro_phase}
    &(I\otimes \bra{0})\left[\tilde{V}_{N+1}(U_0)\left(\prod_{j=1}^{N}(U\otimes I)V_j\right)\right](I\otimes \ket{0})
    \nonumber\\
    =&
    e^{i\theta (U)}\sqrt{p(U)}f(U_0)^{-1}f(U),
\end{align}
which holds in a neighbor of $U_0$.
By setting $U=U_0$, we obtain
\begin{align}
    &(I\otimes \bra{0})\left[\tilde{V}_{N+1}(U_0)\left(\prod_{j=1}^{N}(U_0\otimes I)V_j\right)\right](I\otimes \ket{0})
    \nonumber\\
    =&
    e^{i\theta (U_0)}\sqrt{p(U_0)}I,
\end{align}
thus, using $M^{(s,{\rm right})}_{j,k}(U_0),\ M^{(s,{\rm right})}_{j,k}(U_0)$ defined as
\begin{align}
\begin{split}
    &V^{(s,{\rm left})}(U_0)\eqqcolon\sum_{j,k}
    M^{(s,{\rm left})}_{j,k}(U_0)\otimes \ketbra{j}{k}\\
    &V^{(s,{\rm right})}(U_0)\eqqcolon\sum_{j,k}
    M^{(s,{\rm right})}_{j,k}(U_0)\otimes \ketbra{j}{k}
\end{split}
\end{align}
for $V^{(s,{\rm left})}(U_0),\ V^{(s,{\rm right})}(U_0)$ defined in the same way as in Lem.~\ref{le::Vlefrig}, for all $s\in \{1,\ldots,N\}$, we obtain
\begin{align}
    \sum_j M^{(s,{\rm right})}_{0,j}(U_0)
    M^{(s,{\rm left})}_{j,0}(U_0) =e^{i\theta (U_0)}\sqrt{p(U_0)}I,
\end{align}
and
\begin{align}
    \left(\{M^{(s,{\rm right})}_{0,\ell}(U_0)^{\dagger}\}_{\ell}, \{M^{(s,{\rm left})}_{j,0}(U_0)\}_j\right)
    = e^{i\theta(U_0)}\sqrt{p(U_0)}d.
\end{align}
On the other hand, from the unitarity of $V^{(s,{\rm left})}(U_0),\ V^{(s,{\rm right})}(U_0)$, we obtain
\begin{align}
\begin{split}
    &\left(\{M^{(s,{\rm left})}_{j,0}(U_0)\}_{j}, \{M^{(s,{\rm left})}_{j,0}(U_0)\}_j\right)
    = d,\\
    &\left(\{M^{(s,{\rm right})}_{0,\ell}(U_0)^{\dagger}\}_{\ell}, \{M^{(s,{\rm right})}_{0,\ell}(U_0)^{\dagger}\}_{\ell}\right)
    = d.
\end{split}
\end{align}
Thus, for $A_j^{(s)}(U_0)$ and $B_j^{(s)}(U_0)$ defined as
\begin{align}
\begin{split}
    A_j^{(s)}(U_0)&\coloneqq\frac{1}{2}(M^{(s,{\rm left})}_{j,0}(U_0)+e^{i\theta (U_0)}M^{(s,{\rm right})}_{0,j}(U_0)^{\dagger})\\
    B_j^{(s)}(U_0)&\coloneqq\frac{1}{2}(M^{(s,{\rm left})}_{j,0}(U_0)-e^{i\theta (U_0)}M^{(s,{\rm right})}_{0,j}(U_0)^{\dagger}),
\end{split}
\end{align}
the following equations hold for all $s\in \{1,\ldots ,N\}$:
\begin{align}
\begin{split}
    &{\rm tr}\sum_jA_j^{(s)}(U_0)^{\dagger}A_j^{(s)}(U_0)=\frac{d}{2}(1+\sqrt{p(U_0)}),\\
    &{\rm tr}\sum_jB_j^{(s)}(U_0)^{\dagger}B_j^{(s)}(U_0)=\frac{d}{2}(1-\sqrt{p(U_0)}).
\end{split}
\end{align}

Also, by substituting $U\gets e^{i\epsilon H}U_0$ to Eq.~(\ref{eq::pro_phase}) and differentiating by $\epsilon$ around $\epsilon=0$, we obtain
\begin{align}
    &\left.\frac{{\rm d}}{{\rm d}\epsilon}\right|_{\epsilon=0}
    (I\otimes \bra{0})
    \tilde{V}_{N+1}(U_0)
    \left(\prod_{j=1}^{N}(e^{i\epsilon H}U_0\otimes I)V_j\right)[I\otimes \ket{0}]
    \nonumber\\
    =
    &\sum_{s,j,\ell}\left(M^{(s,{\rm right})}_{0,\ell}(U_0)\otimes \bra{\ell}\right)(iH\otimes I)\left(M^{(s,{\rm left})}_{j,0}(U_0)\otimes \ket{j}\right)
    \nonumber\\
    =&
    \sum_{s,j} 
    M^{(s,{\rm right})}_{0,j}(U_0) iH
    M^{(s,{\rm left})}_{j,0}(U_0)
    \nonumber\\
    =&
    \sum_{s,j}
    e^{i\theta (U_0)}(A_j^{(s)}(U_0)-B_j^{(s)}(U_0))^{\dagger} iH
    (A_j^{(s)}(U_0)+B_j^{(s)}(U_0))
    \nonumber\\
    =&
    \left(\left.\frac{{\rm d}}{{\rm d}\epsilon}\right|_{\epsilon=0}e^{i\theta(e^{i\epsilon H}U_0)}\sqrt{p(e^{i\epsilon H}U_0)}\right)I
    \nonumber\\
    +&ie^{i\theta (U_0)}\sqrt{p(U_0)}g_{U_0}(H),
\end{align}
thus we have
\begin{align}\label{eq::prob_bef}
    &\sum_{s=1}^N\sum_j 
    (A_j^{(s)}(U_0)-B_j^{(s)}(U_0))^{\dagger}H
    (A_j^{(s)}(U_0)+B_j^{(s)}(U_0))
    \nonumber\\
    =&
    -ie^{-i\theta (U_0)}
    \left(\left.\frac{{\rm d}}{{\rm d}\epsilon}\right|_{\epsilon=0}e^{i\theta(e^{i\epsilon H}U_0)}\sqrt{p(e^{i\epsilon H}U_0)}\right)I
    \nonumber\\
    +&\sqrt{p(U_0)}g_{U_0}(H).
\end{align}
By taking the Hermitian part of Eq.~(\ref{eq::prob_bef}), we have
\begin{align}
    &\sum_{s=1}^N\sum_j(A_j^{(s)}(U_0)^{\dagger}HA_j^{(s)}(U_0))
    \nonumber\\
    -&
    \sum_{s=1}^N\sum_j(B_j^{(s)}(U_0)^{\dagger}HB_j^{(s)}(U_0))
    \nonumber\\
    \eqqcolon&\mcA_{U_0}(H)-\mcB_{U_0}(H)
    \nonumber\\
    =&\sqrt{p(U_0)}\alpha_{U_0}(H)I+\sqrt{p(U_0)}g_{U_0}(H),
\end{align}
where $\alpha_{U_0}:\su (d)\to \mathbb{R}$ is a linear map. For these $\mcA_{U_0}$ and $\mcB_{U_0}$, we also have
\begin{align}
    &\left(\sum_{s,j} e^{-i\theta(U_0)}
    M^{(s,{\rm right})}_{0,j}(U_0)
    M^{(s,{\rm left})}_{j,0}(U_0)\right)+
    \nonumber\\
    &\left(\sum_{s,j} e^{-i\theta(U_0)}
    M^{(s,{\rm right})}_{0,j}(U_0)
    M^{(s,{\rm left})}_{j,0}(U_0)\right)^{\dagger}
    \nonumber\\
    = &2\sqrt{p(U_0)}NI
    \nonumber\\
    =&
    2(\mcA_{U_0}(I)-\mcB_{U_0}(I)).
\end{align}
Thus, for $J_{\mcA}, J_{\mcB}, J_{g_{U_0}}$ defined by
\begin{align}
\begin{split}
    &J_{\mcA}\coloneqq\frac{1}{\sqrt{p(U_0)}}(\mcI\otimes \mcA)(\dketbra{I}),\\
    &J_{\mcB}\coloneqq\frac{1}{\sqrt{p(U_0)}}(\mcI\otimes \mcB)(\dketbra{I}),\\
    &J_{g_{U_0}} \coloneqq \sum_{j=1}^{d^2-1} G_j^* \otimes g_{U_0}(G_j),
\end{split}
\end{align}
and $\beta_{U_0}$ satisfying
\begin{align}
\begin{split}
    &\tr(\beta_{U_0}^T I)=N,\\
    &\tr(\beta_{U_0}^T H)=\alpha_{U_0}(H)\quad (H\in \su (d)),
\end{split}
\end{align}
we have
\begin{align}
\begin{split}
    &J_{\mcA}-J_{\mcB}=J_{g_{U_0}}+\beta_{U_0}\otimes I,\\
    &{\rm tr}J_{\mcB}=\frac{1-\sqrt{p(U_0)}}{1+\sqrt{p(U_0)}}
    {\rm tr}J_{\mcA},\\
    &J_{\mcA}, J_{\mcB}\geq 0,
\end{split}
\end{align}
which proves Lem.~\ref{le::subprob}.
\qed

Now we move to the proof of Thm.~\ref{le::prob_exact}. 
Suppose for a sake of contradiction, that for a given set of $N$, $f:U\mapsto f(U)$, and $p:U\mapsto p(U)$, the solution of the SDP in Eq.~(\ref{eq::sdp_for_prob}) is larger than $N$ at some $U_0$, but $f(U)$ can still be implemented by a probability above $p(U)$ with $N$ queries to $U$ with a state $\rho_A$. Then, all eigenvectors of $\rho_A$ with the same sets of $V_1,\ldots ,V_N$ reproduce $f(U)$ exactly with certain probabilities, and in particular, there exists one of its eigenvectors which gives a success probability $p'(U)$ larger than $p(U)$ in a neighborhood of $U_0$.
On the other hand, according to Lem.~\ref{le::subprob}, $N$ has to be larger than the solution of Eq.~(\ref{eq::sdp_for_prob}) for $p'(U)$, which is larger than or equal to that for $p(U)$, leading to a contradiction. 
\qed

As an application of Thm.~\ref{le::prob_exact}, we show the following theorem:
\begin{Theorem}
    When unitary transposition is exactly implemented using $N$ queries to a black-box unitary operation $U$ with probability $p_{\rm trans}(U)$ in a neighborhood of $U_0$ for a differentiable function $p_{\rm trans}$, $p_{\rm trans}(U_0)$ is upper-bounded as
    \begin{align}
        p_{\rm trans}(U_0)\leq \left(\frac{d}{((d^2-1)/N)+1}\right)^2.
    \end{align} 
\end{Theorem}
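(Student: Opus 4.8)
The plan is to specialise the probabilistic SDP of Thm.~\ref{le::prob_exact} to unitary transposition and to extract the analytical bound from its dual. For $f(U)=U^T$ one has $g_{U_0}(H)=H^T$, hence $J_{g_{U_0}}=\SWAP-\tfrac1d I\otimes I$, exactly as in the deterministic transposition case analysed above. Thm.~\ref{le::prob_exact} then asserts that $N$ is at least the value of the primal SDP in Eq.~\eqref{eq::sdp_for_prob}, and by weak duality this value is bounded below by the dual objective $-\Tr(J_{g_{U_0}}M)$ evaluated at \emph{any} dual-feasible pair $(M,a)$. Since the constraint $\Tr_2 M=I$ forces $\Tr M=d$, the dual objective collapses to $-\Tr(\SWAP M)+1$, so it suffices to exhibit a feasible $M$ with small weight on the symmetric subspace and large weight on the antisymmetric one.

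Writing $q\coloneqq\frac{1-\sqrt{p_{\rm trans}(U_0)}}{1+\sqrt{p_{\rm trans}(U_0)}}$, I would take the ansatz
\begin{align}
    M = a q\,\Pi_\mathrm{sym} + a\,\Pi_\mathrm{antisym},\qquad a=\frac{2}{q(d+1)+(d-1)},
\end{align}
where $\Pi_\mathrm{sym},\Pi_\mathrm{antisym}$ are the projectors onto the symmetric and antisymmetric subspaces of $\CC^d\otimes\CC^d$. Feasibility is checked directly: $M-aqI=a(1-q)\Pi_\mathrm{antisym}\geq0$ and $aI-M=a(1-q)\Pi_\mathrm{sym}\geq0$ since $q\in[0,1]$ and $a>0$, while $\Tr_2\Pi_\mathrm{sym}=\tfrac{d+1}{2}I$ and $\Tr_2\Pi_\mathrm{antisym}=\tfrac{d-1}{2}I$ give $\Tr_2 M=\tfrac a2[q(d+1)+(d-1)]I=I$ by the choice of $a$.

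Using $\SWAP=\Pi_\mathrm{sym}-\Pi_\mathrm{antisym}$, a short computation gives the dual objective $\frac{2d(d-1)}{q(d+1)+(d-1)}-d+1$. Weak duality together with Thm.~\ref{le::prob_exact} then yields $N\geq\frac{2d(d-1)}{q(d+1)+(d-1)}-d+1$, which rearranges to $q\geq\frac{(d-1)(d+1-N)}{(d+1)(d-1+N)}$. Substituting back $\sqrt{p_{\rm trans}(U_0)}=\frac{1-q}{1+q}$, which is decreasing in $q$, turns this into $\sqrt{p_{\rm trans}(U_0)}\leq\frac{dN}{d^2-1+N}$, i.e.\ the claimed bound $p_{\rm trans}(U_0)\leq\big(d/((d^2-1)/N+1)\big)^2$. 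As consistency checks, $N=1$ reproduces the tight value $p_{\rm trans}(U_0)\leq 1/d^2$, and the bound becomes vacuous once $N\geq d+1$, where the lower bound on $q$ turns negative and the right-hand side exceeds $1$, matching the fact that $p_{\rm trans}\leq1$ always holds.

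The main obstacle is guessing the correct dual point and verifying its feasibility; the $U\otimes U$-covariance of the transposition problem suggests restricting $M$ to the span of $\{\Pi_\mathrm{sym},\Pi_\mathrm{antisym}\}$, and within this two-parameter family pushing the eigenvalue on $\Pi_\mathrm{sym}$ down to its permitted minimum $aq$ and that on $\Pi_\mathrm{antisym}$ up to $a$ is what maximises the objective. The remaining work — reducing the objective to the closed form above and inverting the $q$–$p$ relation — is routine but must be done with care over signs. A useful structural point is that weak duality requires only a \emph{feasible} dual pair, not an optimal one, so optimality of the ansatz need not be established for the bound to hold.
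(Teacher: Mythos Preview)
Your proof is correct and reaches the same bound, but it takes a genuinely different route from the paper's own argument. The paper works on the \emph{primal} side: it starts from the feasibility of Eq.~\eqref{eq::sdp_for_prob} with $\tr\beta_{U_0}=N$, then applies a Haar average over $V\otimes V$ to force $\beta_{U_0}\to (N/d)I$ and $J_{\mcA},J_{\mcB}$ into the commutant $\mathrm{span}\{\Pi_{\rm sym},\Pi_{\rm antisym}\}$, and finally reads off the minimal ratio $\tr J'_{\mcB}/\tr J'_{\mcA}$ from the positive/negative eigenvalue split of $\SWAP+\tfrac{N-1}{d}I\otimes I$. You instead work on the \emph{dual} side: you pick an explicit $(M,a)$ already lying in $\mathrm{span}\{\Pi_{\rm sym},\Pi_{\rm antisym}\}$, verify dual feasibility, and invoke weak duality to get $N\geq -\Tr(J_{g_{U_0}}M)$. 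The two approaches are mirror images---both exploit the $V\otimes V$ symmetry of the transposition problem, one by twirling the primal variables, the other by restricting the dual ansatz to the invariant algebra---and they yield the identical inequality $q\geq\tfrac{(d-1)(d+1-N)}{(d+1)(d-1+N)}$. Your route has the advantage of being shorter and needing only a single feasibility check, while the paper's symmetrization argument is perhaps more transparent about \emph{why} the bound is sharp within the SDP (the twirled primal variables actually saturate it).
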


As a corollary of this theorem, the success probability of transposition with $N=1$ query is shown to be bounded above by $1/d^2$, which is tight since the gate-teleportation-based method shown in \cite{quintino2019probabilistic} achieves this success probability. 

\textbf{Proof:}~  
According to Eq.~(\ref{eq::sdp_for_prob}), there exists $\beta_{U_0}$, $J_{\mcA}$, and $J_{\mcB}$ such that
\begin{align}\label{eq::sdp_hasto}
\begin{split}
    &\tr \beta_{U_0} = N,\\
    & 
    J_{\mcA}-J_{\mcB}=
    \SWAP -\frac{1}{d}I\otimes I + \beta_{U_0} \otimes I,\\
    &{\rm tr}J_{\mcB}=\frac{1-\sqrt{p_{\rm trans}(U_0)}}{1+\sqrt{p_{\rm trans}(U_0)}}
    {\rm tr}J_{\mcA},\\
    &J_{\mcA}, J_{\mcB}\geq 0.
\end{split}
\end{align}
This is because that the solution $N_{\rm min}$ of the SDP in Eq.~(\ref{eq::sdp_for_prob}) has to be smaller than $N$, and defining $\beta'$ as a $\beta_{U_0}$ which gives the solution $N_{\rm min}$, there exist some $J_{\mcA}$ and $J_{\mcB}$ which satisfies Eq.~(\ref{eq::sdp_hasto}) for $\beta_{U_0}\coloneqq \beta'+((N-N_{\rm min})/d)I$.
By sandwiching the second equation of Eq.~(\ref{eq::sdp_hasto}) by $(V\otimes V)$ and $(V\otimes V)^{\dagger}$ and taking the Haar integral over $V$, we have
\begin{align}\label{eq::tr_haared}
\begin{split}
    &J_{\mcA}'-J_{\mcB}'=\SWAP -\frac{1}{d}I\otimes I +\frac{N}{d}I\otimes I,\\
    &{\rm tr}J_{\mcB}'=\frac{1-\sqrt{p_{\rm trans}(U_0)}}{1+\sqrt{p_{\rm trans}(U_0)}}
    {\rm tr}J_{\mcA}',\\
    &J_{\mcA}', J_{\mcB}'\geq 0,
\end{split}
\end{align}
for $J_{\mcA}'$ and $J_{\mcB}'$ defined by
\begin{align}
\begin{split}
    J_{\mcA}'&\coloneqq\int {\rm d}V\ (V\otimes V)J_{\mcA}(V\otimes V)^{\dagger},\\
    J_{\mcB}'&\coloneqq\int {\rm d}V\ (V\otimes V)J_{\mcB}(V\otimes V)^{\dagger}.
\end{split}
\end{align}
Under the first and the third condition of Eq.~(\ref{eq::tr_haared}), the minimum value of ${\rm tr}J_{\mcB}'/{\rm tr}J_{\mcA}'$ is given by $|\sum_{k}\chi_k|/|\sum_{j}\lambda_j|$, where $\SWAP -((N-1)/d)I\otimes I$ is diagonalized as $\sum_j \lambda_j \ketbra{\phi_j}{\phi_j}+\sum_k\chi_k\ketbra{\psi_k}{\psi_k}$ ($\lambda_j\geq 0$, $\chi_k\leq 0$). Therefore, we have
\begin{align}
    \frac{1-\sqrt{p_{\rm trans}(U_0)}}{1+\sqrt{p_{\rm trans}(U_0)}}&\geq 
    \frac{|\sum_k \chi_k|}{|\sum_j \lambda_j|}
    \nonumber\\
    &=
    \frac{{\rm tr}[(1-(N-1)/d)\Pi_{\rm antisym}]}{{\rm tr}[(1+(N-1)/d)\Pi_{\rm sym}]}
    \nonumber\\
    &=\frac{d^2+N-1-Nd}{d^2+N-1+Nd},
\end{align}
thus we obtain
\begin{align}
    p_{\rm trans}(U_0)\leq \left(\frac{d}{((d^2-1)/N)+1}\right)^2 .
\end{align}

\qed

The SDP in Eq.~(\ref{eq::sdp_for_prob}) for unitary inversion and unitary complex conjugation can also be analytically solved in a similar approach.
This shows the upper bounds on the success probabilities $p_\mathrm{conj}(U_0)$ and $p_\mathrm{inv}(U_0)$ of unitary inversion and unitary complex conjugation in a neighborhood of $U_0$, respectively, given by
\begin{align}
\label{eq:p_inv_bound}
    p_\mathrm{inv}(U_0) &\leq \left({d^2\over ((2d^2-2)/N)+d^2-2}\right)^2,\\
\label{eq:p_conj_bound}
    p_\mathrm{conj}(U_0) &\leq \left({d\over ((d^2-1)/N)-1}\right)^2.
\end{align}
The upper bounds may look looser than the previously known no-go results shown as follows:
\begin{itemize}
    \item The fidelity $F$ of approximate unitary inversion is upper bounded by~\cite{chen2025tight}
    \begin{align}
        F\leq {N+1\over d^2},
    \end{align}
    which also shows the upper bound on the success probability $p$ of exact unitary inversion since $p\leq F$ holds.
    \item Unitary complex conjugation cannot have success probability $p>0$ for $N< d-1$~\cite{quintino2019probabilistic}.
\end{itemize}
However, the upper bounds~\eqref{eq:p_inv_bound} and \eqref{eq:p_conj_bound} are also applicable to the case where the success probability is allowed to depend on the input unitary $U$, which is not captured by the previous results.

The SDP in Eq.~(\ref{eq::sdp_for_prob}) cannot be solved analytically in general. Nevertheless, a canonical upper bound of the solution of the SDP is given in the following theorem.
\begin{Theorem}
    Suppose $N$ is the solution of the SDP~\eqref{eq::sdp_for_prob}.
    Then, the success probability $p(U_0)$ is upper bounded as
    \begin{align}
        p(U_0)\leq \left(
        \frac{Nd\|J_{g_{U_0}}\|_{\rm op}}{\|J_{g_{U_0}}\|_2^2}
        \right)^2,
    \end{align}
    where $J_{g_{U_0}}$ is defined as
    \begin{align}
        J_{g_{U_0}}=\sum_{j=1}^{d^1-1}G_j^*\otimes g_{U_0}(G_j)
    \end{align}
    for an orthonormal basis $\{G_j\}_j$ of $\su (d)$ and the linear map $g_{U_0}:\mcL(\mcH)\to\mcL(\mcH)$ is defined by the first-order differentiation of $f$ around $U=U_0$ as
    \begin{align}
        g_{U_0} (H)\coloneqq
        -i\left.\frac{{\rm d}}{{\rm d}\epsilon}\right|_{\epsilon =0}\left[f(U_0)^{-1}f(e^{i\epsilon H}U_0)\right] .
    \end{align}
\end{Theorem}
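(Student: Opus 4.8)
The plan is to turn the two SDP constraints into scalar relations by pairing the linear constraint $J_{\mcA}-J_{\mcB}=J_{g_{U_0}}+\beta_{U_0}\otimes I$ against two cleverly chosen operators, and then to combine these with an operator-norm estimate exploiting $J_{\mcA},J_{\mcB}\geq 0$. First I would record two elementary facts about $J_{g_{U_0}}$. Since $g_{U_0}$ maps into $\su(d)$, every $g_{U_0}(G_j)$ is traceless, so both $\tr J_{g_{U_0}}=0$ and, crucially, the partial trace ${\rm tr}_2 J_{g_{U_0}}=\sum_j G_j^*\,\tr[g_{U_0}(G_j)]=0$. Writing $a\coloneqq\tr J_{\mcA}$ and $b\coloneqq\tr J_{\mcB}$ and taking the full trace of the linear constraint, the term $\tr(\beta_{U_0}\otimes I)=d\,\tr\beta_{U_0}=dN$ survives while $\tr J_{g_{U_0}}=0$ drops, giving the first relation $a-b=dN$.

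Next I would pair the linear constraint with $J_{g_{U_0}}$ itself (which is Hermitian, being a sum of tensor products of Hermitian operators). The cross term vanishes exactly: $\tr[J_{g_{U_0}}(\beta_{U_0}\otimes I)]=\tr[({\rm tr}_2 J_{g_{U_0}})\beta_{U_0}]=0$ by the partial-trace identity just established. This yields the clean exact equality
\begin{align}
    \tr[J_{g_{U_0}}(J_{\mcA}-J_{\mcB})]=\|J_{g_{U_0}}\|_2^2 .
\end{align}
Because $J_{\mcA},J_{\mcB}\geq 0$, the elementary bound $|\tr(X\rho)|\leq\|X\|_{\rm op}\tr\rho$ for positive $\rho$ applied to each term gives $\tr[J_{g_{U_0}}(J_{\mcA}-J_{\mcB})]\leq\|J_{g_{U_0}}\|_{\rm op}(a+b)$, hence $\|J_{g_{U_0}}\|_2^2\leq\|J_{g_{U_0}}\|_{\rm op}(a+b)$.

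It then remains to convert $a+b$ into the probability. Writing $r\coloneqq(1-\sqrt{p(U_0)})/(1+\sqrt{p(U_0)})$, the second constraint reads $b=ra$, which together with $a-b=dN$ gives $a=dN/(1-r)$ and $a+b=dN\,(1+r)/(1-r)$. A short computation shows $(1+r)/(1-r)=1/\sqrt{p(U_0)}$, so $a+b=dN/\sqrt{p(U_0)}$. Substituting into the previous inequality gives $\|J_{g_{U_0}}\|_2^2\leq \|J_{g_{U_0}}\|_{\rm op}\,dN/\sqrt{p(U_0)}$, i.e.\ $\sqrt{p(U_0)}\leq Nd\,\|J_{g_{U_0}}\|_{\rm op}/\|J_{g_{U_0}}\|_2^2$, and squaring yields the claimed bound.

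I expect the only genuinely non-mechanical step to be the choice to pair the linear constraint against $J_{g_{U_0}}$ rather than against $I$ or $\SWAP$: this is precisely what annihilates the free variable $\beta_{U_0}$, since ${\rm tr}_2 J_{g_{U_0}}=0$, leaving an exact relation between the Hilbert–Schmidt norm $\|J_{g_{U_0}}\|_2^2$ and the trace sum $a+b$ with no residual $\beta_{U_0}$-dependence. Once that term is seen to vanish, the rest is the positivity/operator-norm estimate and the algebraic reparametrization of $p(U_0)$.
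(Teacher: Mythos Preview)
Your proof is correct and rests on the same key identity as the paper's: pairing the linear constraint against $J_{g_{U_0}}$ and using ${\rm tr}_2 J_{g_{U_0}}=0$ to annihilate the $\beta_{U_0}\otimes I$ term, so that $\tr[J_{g_{U_0}}(J_{\mcA}-J_{\mcB})]=\|J_{g_{U_0}}\|_2^2$ exactly, followed by an operator-norm bound. The paper organizes the argument slightly differently: it first optimizes over $J_{\mcA},J_{\mcB}$ (taking them to be the positive and negative parts of $J_{g_{U_0}}+\beta_{U_0}\otimes I$), which reduces the problem to lower-bounding $a\coloneqq\min_{\beta_{U_0}}\|J_{g_{U_0}}+\beta_{U_0}\otimes I\|_1$, and then uses the chain $\|X\|_1\geq\|J_{g_{U_0}}X\|_1/\|J_{g_{U_0}}\|_{\rm op}\geq|\tr(J_{g_{U_0}}X)|/\|J_{g_{U_0}}\|_{\rm op}=\|J_{g_{U_0}}\|_2^2/\|J_{g_{U_0}}\|_{\rm op}$. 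Your version bypasses the explicit $1$-norm reformulation and applies $|\tr(J_{g_{U_0}}\rho)|\leq\|J_{g_{U_0}}\|_{\rm op}\tr\rho$ directly to each of $J_{\mcA}$ and $J_{\mcB}$, which is a bit more streamlined but mathematically equivalent: in the paper's language, $\tr J_{\mcA}+\tr J_{\mcB}\geq\|J_{\mcA}-J_{\mcB}\|_1=\|J_{g_{U_0}}+\beta_{U_0}\otimes I\|_1$, so the two bounds coincide.
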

\textbf{Proof:}~
For a fixed value $N$ of ${\rm tr}\beta_{U_0}$, the highest possible value of $p(U_0)$ such that there exists a set of $\beta_{U_0}$, $J_{\mcA}$, and $J_{\mcB}$ satisfying conditions of SDP in Eq.~(\ref{eq::sdp_for_prob}) satisfies
\begin{align}
    \frac{1-\sqrt{p({U_0})}}{1+\sqrt{p({U_0})}}=r_{\rm min},
\end{align}
namely, we have
\begin{align}
\label{eq:p_U_0}
    p(U_0) = \left(
        \frac{1-r_{\rm min}}{1+r_{\rm min}}
    \right)^2,
\end{align}
where $r_{\rm min}$ is defined as
\begin{align}
    r_{\rm min} \coloneqq \min_{\beta_{U_0}; {\rm tr}\beta_{U_0} =N}\frac{\sum_k(-\chi_k)}{\sum_{j}\lambda_j},
\end{align}
and $\{\lambda_j\}$ and $\{\chi_k\}$ are sets of positive and negative eigenvalues of $J_{g_{U_0}}+\beta_{U_0}\otimes I$, respectively. Since $\sum_j\lambda_j -\sum_k (-\chi_k)={\rm tr}(J_{g_{U_0}}+\beta_{U_0}\otimes I)=Nd$ and $\sum_j\lambda_j +\sum_k (-\chi_k)=\|J_{g_{U_0}}+\beta_{U_0}\otimes I\|_1$ hold, $r_{\rm min}$ can be rewritten as
\begin{align}
\label{eq:r_min}
    r_{\rm min}&=\frac{a-Nd}{a+Nd},
\end{align}
where $a$ is defined by
\begin{align}
    a&\coloneqq\min_{\beta_{U_0}; {\rm tr}\beta_{U_0}=N}
    \|J_{g_{U_0}}+\beta_{U_0}\otimes I\|_1 .
\end{align}
Here, using the inequality $\|AB\|_1\leq \|A\|_1\|B\|_{\rm op}$ and $\|A\|_1\geq |{\rm tr}A|$ for Hermitian $A$ and $B$, we have, for all $\beta_{U_0}$,
\begin{align}
    \|J_{g_{U_0}}+\beta_{U_0}\otimes I\|_1 &=\frac{\|J_{g_{U_0}}\|_{\rm op}\|J_{g_{U_0}}+\beta_{U_0}\otimes I\|_1}{\|J_{g_{U_0}}\|_{\rm op}}
    \nonumber\\
    &\geq 
    \frac{\|J_{g_{U_0}}(J_{g_{U_0}}+\beta_{U_0}\otimes I)\|_1}{\|J_{g_{U_0}}\|_{\rm op}}
    \nonumber\\
    &\geq
    \frac{|{\rm tr}[J_{g_{U_0}}(J_{g_{U_0}}+\beta_{U_0}\otimes I)]|}{\|J_{g_{U_0}}\|_{\rm op}}
    \nonumber\\
    &=\frac{\|J_{g_{U_0}}\|_2^2}{\|J_{g_{U_0}}\|_{\rm op}}.
\end{align}
Therefore, we obtain
\begin{align}
    a\geq \frac{\|J_{g_{U_0}}\|_2^2}{\|J_{g_{U_0}}\|_{\rm op}}.
\end{align}
By substituting this inequality back to Eqs.~\eqref{eq:p_U_0} and \eqref{eq:r_min}, we have
\begin{align}
    p(U_0)\leq \left(
    \frac{Nd\|J_{g_{U_0}}\|_{\rm op}}{\|J_{g_{U_0}}\|_2^2}
    \right)^2.
\end{align}
\qed

\section{Derivation of the dual SDPs}\label{app::duals}
In this section, we derive the dual problems for the SDPs \eqref{eq:sdp} in the main text, \eqref{eq::sdp_subgroup}, and \eqref{eq::sdp_for_prob} shown in this Supplemental Material.

The SDP~\eqref{eq:sdp} is given by the following optimization problem:
\begin{align}
    \min_{\beta_{U_0}} \max_{\Gamma\geq 0} \mathcal{L},
\end{align}
where $\mathcal{L}$ is the Lagrangian defined by
\begin{align}
    \mathcal{L} &\coloneqq \Tr \beta_{U_0} - \Tr[(J_{g_{U_0}}+\beta_{U_0} \otimes I)\Gamma]\\
    &=\Tr[(I-\Tr_{2}\Gamma)\beta_{U_0}] - \Tr(J_{g_{U_0}}\Gamma),
\end{align}
by introducing a dual variable $\Gamma \in \mcL(\CC^d \otimes \CC^d)$.
The dual problem is obtained by considering the following optimization problem:
\begin{align}
    \max_{\Gamma\geq 0} \min_{\beta_{U_0}}  \mathcal{L},
\end{align}
which reduces to the dual problem given by
\begin{align}
\begin{split}
    \max &- \Tr(J_{g_{U_0}}\Gamma)\\
    \text{s.t. }& \Gamma\geq 0,\\
    &\Tr_{2}\Gamma = I.
\end{split}
\end{align}

The SDP \eqref{eq::sdp_subgroup} is given by the following optimization problem:
\begin{align}
    \min_{\beta_{U_0}, \{B'_k\}} \max_{\Gamma\geq 0, \{\lambda_k\}} \mathcal{L},
\end{align}
where $\mathcal{L}$ is the Lagrangian given by
\begin{align}
    \mathcal{L}\coloneqq & \Tr\beta_{U_0} + \sum_k \lambda_k \Tr(B'_k)\nonumber\\
    &- \Tr[(\sum_j G_j^* \otimes g_{U_0}(G_j) + \sum_k B_k^*\otimes B'_k + \beta_{U_0}\otimes I)\Gamma] \\
    =& -\Tr[(\sum_j G_j^* \otimes g_{U_0}(G_j))\Gamma]\nonumber\\
    &-\sum_k \Tr[B'_k [\Tr_1((B_k^* \otimes I)\Gamma)-\lambda_k I]]\nonumber\\
    &-\Tr[\beta_{U_0} (I-\Tr_2 \Gamma)],
\end{align}
by introducing dual variables $\Gamma \in \mcL(\CC^d\otimes \CC^d)$ and $\lambda_k\in \RR$.
The dual problem is obtained by considering the following optimization problem:
\begin{align}
    \max_{\Gamma\geq 0, \{\lambda_k\}} \min_{\beta_{U_0}, \{B'_k\}} \mathcal{L},
\end{align}
which reduces to the dual problem given by
\begin{align}
\begin{split}
    &\max -\Tr[(\sum_j G_j^* \otimes g_{U_0}(G_j))\Gamma]\\
    \mathrm{s.t.}\;& \Gamma\geq 0, \lambda_k\in\RR,\\
    &\Tr_1[(B_k^*\otimes I) \Gamma] = \lambda_k I \quad \forall k,\\
    &\Tr_2\Gamma = I.
\end{split}
\end{align}
The dual variable $\lambda_k$ can be removed since the dual SDP constraints imply
\begin{align}
    d \lambda_k
    &= \Tr[(B_k^* \otimes I)\Gamma]\\
    &= \Tr[B_k^*\Tr_2(\Gamma)]\\
    &= \Tr(B_k^*)\\
    &= 0.
\end{align}
Thus, we obtain
\begin{align}
\begin{split}
    &\max -\Tr[(\sum_j G_j^* \otimes g_{U_0}(G_j))\Gamma]\\
    \mathrm{s.t.}\;& \Gamma\geq 0,\\
    &\Tr_1[(B_k^*\otimes I) \Gamma] = 0 \quad \forall k,\\
    &\Tr_2\Gamma = I.
\end{split}
\end{align}

The SDP \eqref{eq::sdp_for_prob} is given by the following optimization problem:
\begin{align}
    \min_{J_{\mcA},J_{\mcB}\geq 0, \beta_{U_0}\in \mcL(\mathbb{C}^d)} \max_{M\in \mcL(\mathbb{C}^{d}\otimes \mathbb{C}^{d}), a\in \mathbb{R}} \mathcal{L},
\end{align}
where $\mcL$ is the Lagrangian given by
\begin{align}
    \mcL\coloneqq&{\rm tr}\beta_{U_0}+{\rm tr}[M(J_{\mcA}-J_{\mcB}-J_{g_{U_0}}-\beta_{U_0}\otimes I)]
    \nonumber\\
    &+a{\rm tr}\left(
    J_{\mcB}-\frac{1-\sqrt{p(U_0)}}{1+\sqrt{p(U_0)}}J_{\mcA}
    \right)
    \nonumber\\
    =&-{\rm tr}(MJ_{g_{U_0}})+{\rm tr}\left[J_{\mcA}\left(M-a\frac{1-\sqrt{p(U_0)}}{1+\sqrt{p(U_0)}}I\right)\right]\nonumber\\
    &+{\rm tr}[J_{\mcB}(aI-M)]+{\rm tr}_1[\beta_{U_0}(I-{\rm tr}_2M)],
\end{align}
by introducing dual variables $M\in \mcL(\mathbb{C}^d\otimes \mathbb{C}^d)$ and $a\in \mathbb{R}$. 
The dual problem is obtained by considering the following optimization problem:
\begin{align}
    \max_{M\in \mcL(\mathbb{C}^{d}\otimes \mathbb{C}^{d}), a\in \mathbb{R}}
    \min_{J_{\mcA},J_{\mcB}\geq 0, \beta_{U_0}\in \mcL(\mathbb{C}^d)} 
    \mathcal{L},
\end{align}
which reduces to the dual problem given by
\begin{align}
\begin{split}
    \max &- \Tr(J_{g_{U_0}}M)\\
    \text{s.t. }& M\in \mcL(\mathbb{C}^d\otimes\mathbb{C}^d),\\
    & a\in \mathbb{R},\\
    & M-a\frac{1-\sqrt{p(U_0)}}{1+\sqrt{p(U_0)}}I\geq 0,\\
    & aI-M\geq 0,\\
    &\Tr_{2}M = I.
\end{split}
\end{align}

\end{document}